\keywords{game theory;  concurrency; temporal logic; verification; model checking; Nash equilibrium; social welfare}
\theoremstyle{plain} 
\newcommand{\ignore}[1]{}
\newcommand{\SetN}{\mathbb{N}}
\newcommand{\SetZ}{\mathbb{Z}}
\newcommand{\SetR}{\mathbb{R}}
\newcommand{\set}[2]{\ensuremath{\argint{\{}{\argext{#1}{\allowbreak:\allowbreak}{#2}}{\}}}}
\newcommand{\card}[1]{\mthempty{\argint{\vert}{#1}{\vert}}}
\newcommand{\LTL}{\mthfun{LTL}\xspace}
\newcommand{\GRone}{\mthfun{GR(1)}\xspace}
\def\TEMPORAL#1{\mbox{\small\boldmath$\mathbf{#1}$}}
\def\ltlnext{\TEMPORAL{X}}
\def\sometime{\TEMPORAL{F}} 
\def\always{\TEMPORAL{G}}
\def\until{\,\TEMPORAL{U}\,}
\def\Nat{\mathbb{N}}
\newcommand{\Ag}{\mthset{N}}
\newcommand{\Ac}{\mthset{Ac}}
\newcommand{\AcProf}{\vec{\Ac}}
\newcommand{\St}{\mthset{St}}
\newcommand{\AP}{\mthset{AP}}
\renewcommand{\Game}{\mthname{G}}
\newcommand{\pun}{\mthsym{pun}}
\newcommand{\labFun}{\lambda}
\newcommand{\trnFun}{\mthfun{tr}}
\newcommand{\act}{\mthsym{a}}
\newcommand{\jact}{\vec{\mthsym{a}}}
\newcommand{\StrSet}{\mthset{Str}}
\newcommand{\strElm}{\sigma}
\newcommand{\strpElm}{\mthelm{\vec{\sigma}}}
\newcommand{\NE}{\mthset{NE}}
\newcommand{\winsym}{\mthset{Win}}
\newcommandx{\Win}[3][1=, 2=, 3=]
{\mthset{\winsym#3}[#1][#2]}
\newcommand{\presym}{\mthfun{Pre}}
\newcommandx{\Pre}[3][1=, 2=, 3=]
{\mthset{\presym#3}[#1][#2]}
\newcommand{\eqsym}{\mthfun{Eq}}
\newcommandx{\Eq}[3][1=, 2=, 3=]
{\mthset{\eqsym#3}[#1][#2]}
\newcommand{\WI}{\mthset{WI}}
\newcommand{\SI}{\mthset{SI}}
\newcommand{\K}{\mathcal{K}}
\newcommandx{\AFW}[5][1=, 2=, 3=, 4=, 5=]
{\txtargname{AFW#5{\small\argint{$[$}{#1}{$]$}}}[#2][#3]{#4}\xspace}
\newcommand{\MP}[1][]{%
	\ifthenelse{\equal{#1}{}}{{{\sf mp}}}{{{\sf mp}}(#1)}%
\xspace}
\providecommand{\strFun}[1][]{\mthfun{\sigma}}
\providecommand{\pstrFun}[1][]{\mthfun{\sSym}}
\newcommand{\weak}{{\normalfont\textsc{Weak Implementation}}\xspace}
\newcommand{\weakc}{{\normalfont\textsc{Weak Implementation Complement}}\xspace}
\newcommand{\strong}{{\normalfont \textsc{Strong Implementation}}\xspace}
\newcommand{\strongc}{{\normalfont \textsc{Strong Implementation Complement}}\xspace}
\newcommand{\optwi}{{\normalfont \textsc{Opt-WI}}\xspace}
\newcommand{\optsi}{{\normalfont \textsc{Opt-SI}}\xspace}
\newcommand{\uoptwi}{{\normalfont \textsc{UOpt-WI}}\xspace}
\newcommand{\uoptsi}{{\normalfont \textsc{UOpt-SI}}\xspace}
\newcommand{\exactwi}{{\normalfont \textsc{Exact-WI}}\xspace}
\newcommand{\exactsi}{{\normalfont \textsc{Exact-SI}}\xspace}
\newcommand{\utwi}{{\normalfont \textsc{UT-WI}}\xspace}
\newcommand{\utsi}{{\normalfont \textsc{UT-SI}}\xspace}
\newcommand{\etwi}{{\normalfont \textsc{ET-WI}}\xspace}
\newcommand{\etsi}{{\normalfont \textsc{ET-SI}}\xspace}
\newcommand{\QSAT}{{\normalfont \textsc{QSAT}\xspace}}
\newcommand{\MQSAT}{{\normalfont \textsc{MinQSAT}\xspace}}
\newcommand{\MWQSAT}{{\normalfont \textsc{Weighted MinQSAT}\xspace}}
\newcommand{\wFun}{\mthfun{w}}
\newcommand{\cFun}{\mthfun{c}}
\newcommand{\cost}{\mthfun{cost}}
\newcommand{\LTLlim}{\mthfun{LTL^{Lim}}\xspace}
\def\avg{{\sf avg}}
\def\src{{\sf src}}
\def\trg{{\sf trg}}
\def\IN{{\sf in}}
\def\OUT{{\sf out}}
\newcommand{\pay}{\mthfun{pay}}
\newcommand{\LP}{\mthfun{LP}}
\def\pspace{\mthfun{PSPACE}\xspace}
\def\fpspace{\mthfun{FPSPACE}\xspace}
\def\np{\mthfun{NP}\xspace}
\def\conp{\text{co-}\mthfun{NP}\xspace}
\def\FP{\mthfun{FP}\xspace}
\def\usw{\mthfun{usw}\xspace}
\def\esw{\mthfun{esw}\xspace}
\def\mi{\mthfun{MinNE}\xspace}
\def\DP{\mthfun{D^P}\xspace}
\def\DPTwo{\mthfun{D^P_{\normalfont \textrm{2}}}\xspace}
\newcommand{\sink}{\mthsym{sink}}
\newcommand{\source}{\mthsym{source}}
\newcommand{\setx}{\mathbf{x}}
\newcommand{\sety}{\mathbf{y}}
\newcommand{\vecx}{\vec{\mathbf{x}}}
\newcommand{\vecy}{\vec{\mathbf{y}}}
\newcommand{\term}{\mathbf{t}}
\newcommand{\T}{\mathbf{T}}
\newcommand{\SigmaPTwo}{\Sigma^{\mthfun{P}}_2}
\newcommand{\DeltaPTwo}{\Delta^{\mthfun{P}}_2}
\newcommand{\DeltaPThree}{\Delta^{\mthfun{P}}_3}
\theoremstyle{definition}
\newtheorem{example}[thm]{Example}
\newcommand{\argemp}[2]{\if&#1&\else#2\fi}
\newcommand{\argdef}[2]{\if&#1&#2\else#1\fi}
\newcommand{\argint}[3]{\if&#2&\else#1#2#3\fi}
\newcommand{\argext}[3]{\if&#1&#3\else#1\if&#3&\else#2#3\fi\fi}
\newcommandx{\mthfnt}[3][1=, 2=0]{{
	\IfStrEqCase{#1}
	{%
		{}%
		{#3}%
		{Name}%
		{%
			\IfStrEqCase{#2}
			{%
				{0}{\mathcal{#3}}%
				{1}{\mathscr{#3}}%
				{2}{\mathfrak{#3}}%
				{3}{\mathbb{#3}}%
			}
			[\ensuremath{\clubsuit}]%
		}%
		{Set}%
		{%
			\IfStrEqCase{#2}
			{%
				{0}{\mathrm{#3}}%
				{1}{\mathsf{#3}}%
				{2}{\mathbb{#3}}%
				{3}{\mathbf{#3}}%
			}
			[\ensuremath{\clubsuit}]%
		}%
		{Fun}%
		{%
			\IfStrEqCase{#2}
			{%
				{0}{\mathsf{#3}}%
				{1}{\mathrm{#3}}%
			}
			[\ensuremath{\clubsuit}]%
		}%
		{Rel}%
		{%
			\IfStrEqCase{#2}
			{%
				{0}{\mathit{#3}}%
				{1}{\mathtt{#3}}%
			}
			[\ensuremath{\clubsuit}]%
		}%
		{Sym}%
		{%
			\IfStrEqCase{#2}
			{%
				{0}{\mathtt{#3}}%
				{1}{\mathbf{#3}}%
			}
			[\ensuremath{\clubsuit}]%
		}%
		{Elm}%
		{\mathnormal{#3}}
	}
[\ensuremath{\clubsuit}]%
}}
\newcommand{\mthsub}[1]{\argemp{#1}{\ensuremath{_{\mathnormal{#1}}}}}
\newcommand{\mthsup}[1]{\argemp{#1}{\ensuremath{^{\mathnormal{#1}}}}}
\newcommandx{\mth}[5][1=, 2=0, 4=, 5=]{{\ensuremath{\mthfnt[#1][#2]{#3}\mthsub{#4}\mthsup{#5}}}}
\newcommandx{\mtharg}[6][1=, 2=0, 4=, 5=]{{\mth[#1][#2]{#3}[#4][#5]\ensuremath{\argint{(}{#6}{)}}}}
\newcommand{\mthempty}{\mth[][]}
\newcommand{\mthstyname}{0}
\newcommand{\mthname}[1][]{\mth[Name][\argdef{#1}{\mthstyname}]}
\newcommand{\mthstyset}{0}
\newcommand{\mthset}[1][]{\mth[Set][\argdef{#1}{\mthstyset}]}
\newcommand{\mthstyfun}{0}
\newcommand{\mthfun}[1][]{\mth[Fun][\argdef{#1}{\mthstyfun}]}
\newcommand{\mthstysym}{0}
\newcommand{\mthsym}[1][]{\mth[Sym][\argdef{#1}{\mthstysym}]}
\newcommand{\mthstyelm}{0}
\newcommand{\mthelm}[1][]{\mth[Elm][\argdef{#1}{\mthstyelm}]}
\newcommand{\tuple}[1]
{\ensuremath{\!\argint{\langle}{#1}{\rangle}}}
\theoremstyle{plain}
\newtheorem{theorem}[thm]{Theorem}
\newtheorem{corollary}[thm]{Corollary}
\newtheorem{proposition}[thm]{Proposition}
\theoremstyle{definition}
\newtheorem{definition}[thm]{Definition}
\newtheorem{remark}[thm]{Remark}
\newcommand{\GridExTwo}
{
	\begin{figure}
		\begin{center}
			\vspace{-15pt}
			\begin{tikzpicture}[every node/.style={minimum size=1cm-\pgflinewidth, outer sep=0pt}]
				\draw[step=1cm,color=black] (-1,-1) grid (2,1);
				
				
				
				\node[circle, thick, fill=white, minimum size=0.5cm,draw] at (-.5,.5){};
				\node[rectangle, fill=black, minimum size=0.5cm] at (1.5,.5){};
				\draw (-1,1) -- (-1.5,1.5) node[pos=0.75,xshift=0.25cm] {~$x$~} node[pos=0.75,yshift=-0.25cm] {$y$~};
				\node[] at (-.5,1.5) {0};
				\node[] at (.5,1.5) {1};
				\node[] at (1.5,1.5) {2};
				
				\node[] at (-1.5,0.5) {0};
				\node[] at (-1.5,-0.5) {1};
				
			\end{tikzpicture}		
			\vspace{-15pt}
		\end{center}
		\caption{Graphical representation from Example \ref{ex:grid1}.}
		\label{fig:GridExTwo}
	\end{figure}
	
}
\newcommand{\extsTwo}
{
	\begin{figure}[]
		\begin{center}
			\begin{tikzpicture}
				[->,>=latex,shorten >=1pt,auto,node distance=2cm, auto,main node/.style={rectangle,draw}]
				
				\node[main node, label={$  $}, label=below:{}] (1) [] {$ (0,0)_{0} $};
				
				\node[main node,  label={$ $}, label=below:{}] (4) [right of=1, yshift=1cm,xshift=1cm] {$ (1,1)_0 $};
				\node[main node,  label={}, label=below:{$ $}] (3) [below of=4] {$ (1,0)_1 $};
				\node[main node, label={$ $}, label=below:{}] (2) [above  of=4] {$ (1,0)_0  $};

				\node[main node,  label={$ $}, label=below:{}] (5) [below of=3] {$ (1,1)_1 $};
				\node[main node,  label={$ $}, label=below:{}] (6) [right of=1, xshift=4cm] {$ (2,1)_{1} $};
				
				\path
				++ (-1,0) edge node[]{} (1)
				(1) edge [in=180, bend left, <->] node{$ e_1 $} (2)
				(1) edge [in=180, bend right, <-] node[]{$ e_3 $} (3)
				(1) edge [out=270,in=180, bend left, <->] node[xshift=0.5cm,yshift=-0.5cm]{$ e_2 $} (4)
				(1) edge [out=270,in=180, bend right, <-] node[xshift=-0.5cm, yshift=-0.5cm]{$ e_4 $} (5)
				
				(6) edge [in=0, bend right, <-] node[yshift=0.5cm,xshift=0.3cm]{$ e_5 $} (2)
				(6) edge [in=0, bend left, <->] node[xshift=-0.4cm, yshift=0.4cm]{$ e_7 $} (3)
				(6) edge [in=0, bend right, <-] node[yshift=0cm, xshift=0.2cm]{$ e_6 $} (4)
				(6) edge [in=0, bend left, <->] node[xshift=-0.2cm]{$ e_8 $} (5)
				
				(2) edge [in=90, out=270, <->] node[xshift=0cm]{$ e_9 $} (4)
				
				(3) edge [in=90, out=270, <->] node[xshift=0cm]{$ e_{10} $} (5)
				;
				
			\end{tikzpicture}
		\end{center}
		\caption{Transition system representing the movements and payoff function for robot $ \bigcirc $.}
		\label{fig:ex1tsTwo}
	\end{figure}
}
\begin{document}

\title[Designing Equilibria in Concurrent Games]{Designing Equilibria in Concurrent Games with Social Welfare and Temporal Logic Constraints\rsuper*}
\titlecomment{{\lsuper*}Preliminary version appeared in the \textit{Proceedings of 30th International Conference on Concurrency Theory (CONCUR 2019)}~\cite{GNPW19b}.}

\author[J.~Gutierrez]{Julian Gutierrez\lmcsorcid{0000-0002-1091-8232}}[a]	
\address{University of Sussex}	
\email{J.Gutierrez@sussex.ac.uk} 

\author[M.~Najib]{Muhammad Najib\lmcsorcid{0000-0002-6289-5124}}[b] 
\address{School of Mathematical and Computer Sciences, Heriot-Watt University}	%
\email{m.najib@hw.ac.uk}  

\author[G.~Perelli]{Giuseppe Perelli\lmcsorcid{0000-0002-8687-6323}}[c] 
\address{Department of Computer Science, Sapienza University of Rome}	
\email{perelli@di.uniroma1.it}  

\author[M.~Wooldridge]{Michael Wooldridge\lmcsorcid{0000-0002-9329-8410}}[d]	
\address{Department of Computer Science, University of Oxford}	
\email{mjw@cs.ox.ac.uk}  

\begin{abstract}
In game theory, {\em mechanism design\/} is concerned with the design of
incentives so that a desirable outcome will be achieved under the assumption that players act rationally. In this paper, we explore the concept of \textit{equilibrium design}, where incentives are designed to obtain a desirable equilibrium that satisfies a specific temporal logic property. Our study is based on a framework where system specifications are represented as temporal logic
formulae, games as quantitative concurrent game structures, and players' goals as mean-payoff objectives. We consider system specifications given by LTL and GR(1) formulae, and show that designing incentives to ensure that a given temporal logic property is satisfied on some/every Nash equilibrium of the game can be achieved in
PSPACE for LTL properties and in NP/$\SigmaPTwo$ for GR(1) specifications. We also examine the complexity of related decision and optimisation problems, such as optimality and uniqueness of solutions, as well as considering social welfare, and show that the
complexities of these problems lie within the polynomial hierarchy. Equilibrium design can be used as an alternative solution to
rational synthesis and verification problems for concurrent games with
mean-payoff objectives when no solution exists or as a technique to
repair concurrent games with undesirable Nash equilibria in an optimal way.
\end{abstract}

\maketitle


\section{Introduction}
Over the past decade, there has been increasing interest in the use of
game-theoretic equilibrium concepts such as Nash equilibrium in the
analysis of concurrent and multi-agent systems (see,
{\em e.g.},~\cite{AKP18,AminofMMR16,BouyerBMU15,FismanKL10,GHPW17,GutierrezHW17-aij,KupfermanPV16}). 
This work views a concurrent system as a game, with system components
(agents) corresponding to players in the game, which are assumed to be
acting rationally in pursuit of their individual
preferences. Preferences may be specified by associating with each
player a temporal logic goal formula, which the player desires to see
satisfied, or by assuming that players receive rewards in each state
the system visits, and seek to maximise the average reward they
receive (the \emph{mean-payoff}). A further possibility is to combine
goals and rewards: players primarily seek the satisfaction of their
goal, and only secondarily seek to maximise their mean-payoff. The key
decision problems in such settings relate to what temporal logic
properties hold on computations of the system that may be generated by
players choosing strategies that form a game-theoretic (e.g., Nash)
equilibrium. These problems are typically computationally complex,
since they subsume temporal logic synthesis~\cite{PnueliR89}. If
players have \LTL goals, for example, then checking whether an \LTL
formula holds on some Nash equilibrium path in a concurrent game is
2EXPTIME-complete~\cite{FismanKL10,GutierrezHW15,GutierrezHW17-aij}, 
rather than only PSPACE-complete as it is the case for model checking. This represents a major computational barrier for the practical analysis and automated verification of reactive, concurrent, and multi-agent systems modelled as multi-player games. 

A classic problem in game theory is that individually rational choices can result in outcomes that are highly undesirable, and concurrent games also
fall prey to this problem. To illustrate this, consider the scenario in the following example.

\GridExTwo

\begin{example}\label{ex:grid1}
	Consider a system with two robot agents operating in an environment modelled as a $ 3 \times 2 $ grid world. Initially, the robots are located at two corners, as shown in Figure \ref{ex:grid1}. Each robot can move one square horizontally, vertically, or diagonally (similar to the way a king moves in chess). Each move costs the robot 1 unit of energy and incurs a payment of $-1$. The task of robot $ \bigcirc $ (resp. $ \blacksquare $) is to visit square $ (2,1) $ (resp. $ (0,1) $), for example, to deliver parcels. To model this objective, we give the robots a payment of $ 3 $ when they reach their target squares. We assume that at each time-step, each robot must make a move and cannot remain in the same position for two consecutive time-steps. Furthermore, each robot wants to maximise the sum of payments it receives.

Suppose that having two robots occupying the same grid square (i.e., having the same coordinates) is considered undesirable from a global perspective because it increases the likelihood of collisions. To maximise their total payoffs, the robots will choose routes that minimise the number of steps needed to reach their target squares. Observe that the minimum number of steps required for the robots to reach their respective target squares is 2, and there are two routes that achieve this: via $ (1,0) $ and $ (1,1) $. However, if both robots choose the same shortest route (e.g., via $ (1,1) $), this results in an undesirable outcome. Moreover, since the shortest routes correspond to the best strategies for each robot, these outcomes are considered stable from a game-theoretic perspective.
\end{example}


These concerns have motivated the development of
techniques for modifying games, in order to avoid undesirable equilibria, or
to facilitate desirable equilibria. In game theory, the field of \emph{mechanism design} is concerned with designing a game such that, if players behave rationally, then a
desired outcome will be obtained~\cite{OR94}. Direct incentives, for example in the form of taxation or subsidy, are probably the most important tools used in mechanism design. 

This paper explores the design of incentive schemes for concurrent games so as to achieve a desired outcome, a concept we refer to as \textit{equilibrium design}. Specifically, we use \emph{reward schemes} to incentivise players so that the Nash equilibria of the game satisfy the desired property.
In our model, agents are represented as concurrently executing processes that operate synchronously. Each agent receives an integer payoff for every state visited by the overall system. The total payoff an agent receives over an infinite computation path is defined as the \textit{mean-payoff} over that path. While agents naturally seek to maximise their individual mean-payoffs, the designer of the reward scheme aims to ensure that a specific temporal logic formula is satisfied on some or all Nash equilibria of the resulting game. 

\begin{wraptable}[17]{r}{0.71\textwidth}
	\vspace{-1em}
	\begin{center}
		\def\arraystretch{1}
		\begin{tabular}{@{}l l l@{}}
			\toprule
			& \LTL Spec. & \GRone Spec. \\
			\hline\\[-1em]
			\textsc{Weak Impl.} & \pspace-c (Thm. \ref{thm:weak-ltl}) & \np-c (Thm. \ref{thm:weak-gr1}) \\
			\textsc{Strong Impl.} & \pspace-c (Cor. \ref{cor:strong-ltl}) & $ \SigmaPTwo $-c (Thm. \ref{thm:strong-gr1}) \\
			\optwi & \fpspace-c (Thm. \ref{thm:optwi-ltl}) & $ \FP^{\np} $-c (Thm. \ref{thm:optwi-gr1}) \\
			\optsi & \fpspace-c (Thm. \ref{thm:optsi-ltl}) & $ \FP^{\SigmaPTwo} $-c (Thm. \ref{thm:optsi-gr1}) \\
			\exactwi & \pspace-c (Cor. \ref{cor:exactwi-ltl}) & $ \DP $-c (Cor. \ref{cor:exactwi-gr1}) \\
			\exactsi & \pspace-c (Cor. \ref{cor:exactsi-ltl}) & $ \DPTwo $-c (Cor. \ref{cor:exactsi-gr1}) \\
			\uoptwi & \pspace-c (Cor. \ref{cor:uoptwi-ltl}) & $ \DeltaPTwo $-c (Cor. \ref{cor:uoptwi-gr1}) \\
			\uoptsi & \pspace-c (Cor. \ref{cor:uoptsi-ltl}) & $ \DeltaPThree $-c (Cor. \ref{cor:uoptsi-gr1}) \\
			
			\utwi & \pspace-c (Thm. \ref{thm:threshold-usw-ltl}) & \np-c (Thm. \ref{thm:threshold-usw-gr1}) \\
			\utsi & \pspace-c (Thm. \ref{thm:threshold-usw-ltl}) & $ \SigmaPTwo $-c (Thm. \ref{thm:threshold-usw-gr1}) \\
			
			\etwi & \pspace-c (Thm. \ref{thm:threshold-rsw-ltl}) & \np-c (Thm. \ref{thm:threshold-rsw-gr1}) \\
			\etsi & \pspace-c (Thm. \ref{thm:threshold-rsw-ltl}) & $ \SigmaPTwo $-c (Thm. \ref{thm:threshold-rsw-gr1}) \\
			\bottomrule
		\end{tabular}
	\end{center}
	\caption{Summary of main complexity results.}
	\label{tab:results}
\end{wraptable}

With this model, we assume that the designer -- an external principal -- has a 
finite budget that is available for designing reward schemes, and this budget can be
allocated across agent/state pairs. By allocating this budget
appropriately, the principal can incentivise players away from some
states and towards others. Since the principal has some temporal
logic goal formula, it desires to allocate rewards so that players
are rationally incentivised to choose strategies so that the
principal's temporal logic goal formula is satisfied in the path that would result from
executing the strategies.  For this general problem,
following~\cite{WEKL13}, we identify two variants of the principal's
mechanism design problem, which we refer to as \textsc{Weak
	Implementation} and \textsc{Strong Implementation}. In the
\textsc{Weak} variant, we ask whether the principal can allocate the
budget so that the goal is achieved on \emph{some} computation path that would be
generated by Nash equilibrium strategies in the resulting system; in
the \textsc{Strong} variation, we ask whether the principal can
allocate the budget so that the resulting system has at least one Nash
equilibrium, and moreover the temporal logic goal is satisfied on {\em all} paths that
could be generated by Nash equilibrium strategies. For these two
problems, we consider goals specified by \LTL formulae or
\GRone~formulae \cite{BJPPS12}, give algorithms for each case, and
classify the complexity of the problem. 
While \LTL is a natural language for the specification of properties of concurrent 
and multi-agent systems, \GRone is an \LTL fragment that can be used to 
easily express several prefix-independent properties of computation paths of reactive systems, 
such as $\omega$-regular properties often used in automated formal verification. 
We then go on to study
variations of these two problems, for example considering 
{\em optimality} and {\em uniqueness} of solutions. We also examine a setting in which a (benevolent) principal considers the welfare of the players in the design of a reward scheme. To capture this setting, we introduce two concepts: \textit{utilitarian} and \textit{egalitarian} social welfare measures. We show that, while the problems associated with \LTL specifications are in \pspace (or \fpspace), the ones with \GRone specifications lie within the polynomial hierarchy, thus
making them potentially amenable to more efficient practical implementations. 
Table~\ref{tab:results} summarises the main computational complexity results in the paper.

\paragraph{Structure of the paper} The remainder of this article is structured as follows.
\begin{itemize}
	\item Section~\ref{sec:prelims} presents the relevant background on games, logic, and Nash equilibrium.
	\item In Section~\ref{sec:eqdesign} we formalise the concept of reward schemes.
	\item In Section~\ref{sec:weak} and \ref{sec:strong} we describe the main problems of interest and present the proofs to obtain tight computational complexity bounds.
	\item In Section~\ref{sec:opt-unique} we study variations of the main problems, including optimality and uniqueness of solutions, and show their respective computational complexity classes.
	\item In Section~\ref{sec:sw} we consider two of the most important social welfare measures, and examine the related computational problems.
	
	\item In Section~\ref{sec:conc} we conclude, discuss related work, and propose some directions for further research.
\end{itemize}

\section{Preliminaries}\label{sec:prelims}

\noindent \textbf{Complexity Classes.}
Here we briefly describe the different complexity classes used in this paper.
We assume that the reader is familiar with the classes \np, \pspace and the notation for complexities relative to an oracle within the polynomial hierarchy~\cite{1994-papadimitriou}.
In particular, we assume the following

\begin{itemize}
	\item 
		$\Sigma_0^{\mthfun{P}} = \Pi_0^{\mthfun{P}} = \Delta_0^{\mthfun{P}} = \mthfun{P} = \Delta_1^{\mthfun{P}}$;
		
	\item 
		$\Sigma_{i + 1}^{\mthfun{P}} = \np^{\Sigma_{i}^{\mthfun{P}}}$;
		
	\item 
		$\Pi_{i + 1}^{\mthfun{P}} = \conp^{\Sigma_{i}^{\mthfun{P}}}$;
	
	\item 
		$\Delta_{i + 1}^{\mthfun{P}} = \mthfun{P}^{\Sigma_{i}^{\mthfun{P}}}$.
\end{itemize}
For instance, $\SigmaPTwo = \np^{\np}$ is the class of problems that can be solved in polynomial time by a non-deterministic Turing machine that can invoke an oracle to solve another \np problem.

For a given decision problem in the complexity class $ \calC $, $ \mthfun{F}\calC $ denotes the complexity class of the corresponding \textit{function} problem. Consider, for example, the class $ \np $ and a problem $ \calP \in \np $. The problem of finding a solution to an instance of $ \calP $ is in $ \mthfun{F}\np $.

Finally, $\mathsf{D^P_\mathnormal{i}}$ denotes the class of languages that are the intersection of a language in $\Sigma_{i}^{\mthfun{P}}$ and a language in $\Pi_{i}^{\mthfun{P}}$ (note that this is not the same as $\Sigma_{i}^{\mthfun{P}} \cap \Pi_{i}^{\mthfun{P}}$). Intuitively, this corresponds to the class of problems that require two consecutive and independent calls to a $\Sigma_{i}^{\mthfun{P}}$ procedure and a $\Pi_{i}^{\mthfun{P}}$ procedure.
This is typically used to refer to problems in which the solution is, in a sense, \textit{unique} or \emph{exact}.
For instance, consider the problem:

\begin{center}
	$\mthfun{EXACT}\text{-}\mthfun{CLIQUE} = \set{\tuple{G, k}}{\text{the largest clique in graph $G$ is of size } k}$.
\end{center}

This requires solving CLIQUE for $k$ to decide whether a clique of size $k$ \textit{exists}, and solving CO-CLIQUE for $k+1$ to decide whether a clique of size $k + 1$ does \textit{not exist}. For more in-depth presentation of the class $\mathsf{D^P_\mathnormal{i}}$, see \cite{1994-papadimitriou,PAPADIMITRIOU1984244,Aleksandrowicz2017}.

\vspace{4pt}
\noindent \textbf{Linear Temporal Logic.}
\LTL~\cite{pnueli:77a} extends classical propositional logic with two
operators, $\ltlnext$ (``next'') and $\until$ (``until''), that can be used to express properties of infinite paths.  The syntax of \LTL is defined
with respect to a set $\AP$ of atomic propositions by the following grammar:
$$ \phi ::= 
\mathop\top \mid
p \mid
\neg \phi \mid\phi \vee \phi \mid
\ltlnext \phi \mid
\phi \until \phi
$$
where $p \in \AP$.
As is conventional in the \LTL literature, we introduce some further classical and temporal operators via the following equivalences: \[ \phi_1 \wedge \phi_2 \equiv \neg (\neg \phi_1 \vee \neg \phi_2)  \qquad \phi_1 \to \phi_2 \equiv \neg \phi_1 \vee \phi_2 \qquad \sometime \phi \equiv \top \until \phi \qquad \always \phi \equiv \neg \sometime \neg \phi\]

We interpret formulae of \LTL with respect to pairs $(\alpha,t)$, where $\alpha \in (2^{\AP})^\omega$ is an infinite sequence of atomic proposition evaluations, indicating which propositional variables are true in every time point, and $t \in \Nat$ is a
temporal index into $\alpha$.
Formally, the semantics of \LTL formulae is given by the following rules:
$$
\begin{array}{lcl}
(\alpha,t)\models\mathop\top	\\
(\alpha,t)\models p 				&\text{ iff }&	p\in \alpha_t\\
(\alpha,t)\models\neg \phi			&\text{ iff }&   \text{it is not the case that $(\alpha,t) \models \phi$}\\
(\alpha,t)\models\phi \vee \psi		&\text{ iff }&	\text{$(\alpha,t) \models \phi$  or $(\alpha,t) \models \psi$}\\
(\alpha,t)\models\ltlnext\phi			&\text{ iff }&	\text{$(\alpha,t+1) \models \phi$}\\
(\alpha,t)\models\phi\until\psi	&\text{ iff }&   \text{for some $t' \geq t: \ \big((\alpha,t') \models \psi$  and }\\
&&\quad\text{for all $t \leq t'' < t': \ (\alpha,t'') \models \phi \big)$.}\\
\end{array}
$$

If $(\alpha,0)\models\phi$, we write $\alpha\models\phi$ and say that
\emph{$\alpha$ satisfies~$\phi$}.

\vspace*{4pt} 
\noindent \textbf{General Reactivity of rank 1.}
The language of \emph{General Reactivity of rank 1}, denoted $\GRone$, is the fragment of \LTL given by  formulae written in the following form~\cite{BJPPS12}:
$$
(\always \sometime \psi_1 \wedge \ldots \wedge \always \sometime \psi_m) \to (\always \sometime \phi_1 \wedge \ldots \wedge \always \sometime \phi_n)
\text{,}
$$
where each subformula $\psi_i$ and $\phi_i$ is a Boolean combination of atomic propositions.

\vspace{4pt}
\noindent \textbf{Mean-Payoff.}
For a sequence $r \in \mathbb{R}^\omega$, let $\MP(r)$ be
the \emph{mean-payoff} value of $r$, that is, 
\[ \MP(r) = \lim \inf_{n \to \infty} \avg_n(r) \]
where, for $n \in \mathbb{N}\setminus\{0\}$, we define
$\avg_n(r) = \frac{1}{n}\sum_{j=0}^{n-1} r_j$, with $r_j$ the $(j\!+\!1)$th element of $r$. 

\vspace*{4pt} \noindent \textbf{Arenas.}
An \emph{arena} is a tuple
$ A = \tuple{\Ag,  \Ac, \St, s_0, \trnFun, \labFun} $ 
where $\Ag$, $\Ac$, and $\St$ are finite non-empty sets of \emph{players} (write $N = \card{\Ag}$), \emph{actions}, and \emph{states}, respectively; if needed, we write $ \Ac_i(s) $, to denote the set of actions available to player $ i $ at $ s $; $s_0 \in \St$ is the \emph{initial state}; $\trnFun : \St \times \AcProf \rightarrow \St$ is a \emph{transition function} mapping each pair consisting of a state $s \in \St$ and an \emph{action profile} $\jact \in \AcProf = \Ac^{\Ag}$, one for each player, to a successor state; and $\labFun: \St \to 2^{\AP}$ is a labelling function, mapping every state to a subset of \emph{atomic propositions}.

We sometimes call an action profile $\jact = (\act_{1}, \dots, \act_{n}) \in \AcProf$ a \emph{decision}, and denote $\act_i$ the action taken by player $i$.
We also consider \emph{partial} decisions.
For a set of players $C \subseteq \Ag$ and action profile $\jact$, we let $\jact_{C}$ and $\jact_{-C}$ be two tuples of actions, respectively, one for all players in $C$ and one for all players in $\Ag \setminus C$.
We also write $\jact_{i}$ for $\jact_{\{i\}}$ and $ \jact_{-i} $ for $ \jact_{\Ag \setminus \{i\}} $.
For two decisions $\jact$ and $\jact'$, we write $(\jact_{C}, \jact_{-C}')$ to denote the decision where the actions for players in $ C $ are taken from $\jact$ and the actions for players in $ \Ag \setminus C $ are taken from $\jact'$.

A \emph{path} $\pi = (s_0, \jact^0), (s_1, \jact^1), \ldots$ is an infinite sequence in $(\St \times \AcProf)^{\omega}$ such that $\trnFun(s_k, \jact^k) = s_{k + 1}$ for all $k$.
In particular, $\jact^k_i$ is the action of player $i$ in step $k$.
Sometimes, we call the single iteration $(s_k, \jact^k)$ a \emph{configuration}.

Paths are generated in the arena by each player~$i$ selecting a {\em
	strategy} $\strElm_i$ that will define how to make choices over
time.  We model strategies as finite state machines with output.
Formally, for arena $A$, a strategy
$\strElm_{i} = (Q_{i}, q_{i}^{0}, \delta_i, \tau_i) $ for player $i$
is a finite state machine with output (a transducer), where $Q_{i}$ is
a finite and non-empty set of \emph{internal states}, $ q_{i}^{0} $ is
the \emph{initial state},
$\delta_i: Q_{i} \times \AcProf \rightarrow Q_{i} $ is a deterministic
\emph{internal transition function}, and
$\tau_i: Q_{i} \rightarrow \Ac_i$ an \emph{action function}. Let $\StrSet_i$ be the set of strategies for player $i$. Note that this definition implies that strategies have \textit{perfect information} and \textit{finite memory} (although we impose no bounds on memory size).

A \emph{strategy profile} $\strpElm = (\strElm_1, \dots, \strElm_n)$ is a vector of strategies, one for each player.
As with actions, $\strpElm_{i}$ denotes the strategy assigned to player $i$ in profile $\strpElm$.
Moreover, by $(\strpElm_{B}, \strpElm'_{C})$ we denote the combination
of profiles where players in disjoint $B$ and $C$ are assigned their corresponding strategies in $\strpElm$ and $\strpElm'$, respectively.
Once a state $s$ and profile $\strpElm$ are fixed, the game has an \emph{outcome}, a path in $A$, denoted by $\pi(\strpElm, s)$. 
Because strategies are deterministic, $\pi(\strpElm, s)$ is the unique path induced by $\strpElm$, that is, the sequence $s_0, s_1, s_2, \ldots$ such that 
\begin{itemize}
	\item $s_{k + 1} = \trnFun (s_k, (\tau_1(q^k_1), \ldots, \tau_n(q^k_n)))$, and 
	\item $q^{k + 1}_i = \delta_i(s^k_i, (\tau_1(q^k_1), \ldots, \tau_n(q^k_n)))$, for all $k \geq 0$. 
\end{itemize}
Furthermore, we simply write $ \pi(\strpElm) $ for $ \pi(\strpElm,s_0) $.

Arenas define the dynamic structure of games, but lack a feature that is essential for game theory: \emph{preferences}, which give games their strategic structure.
A \emph{multi-player game} is obtained from an arena $A$ by
associating each player with a \emph{goal}, which represents that player's preferences.
We consider multi-player games with $\MP$ goals.
%
A multi-player \MP game is a tuple
$\Game = \tuple{A, (\wFun_{i})_{i \in \Ag}}$, where $A$ is an
arena and $\wFun_{i}: \St \to \SetZ$ is a function mapping, for every player~$i$, every state
of the arena into an integer number.  
%
In any game with arena $A$, a path $\pi$ in $A$ induces a sequence $\lambda(\pi) = \lambda(s_0) \lambda(s_1) \cdots$ of sets of atomic propositions; if, in addition, $A$ is the arena of an \MP game, then, for each player~$i$, the sequence $\wFun_i(\pi) = \wFun_i(s_0) \wFun_i(s_1) \cdots$ of weights is also induced. 
Unless stated otherwise, for a game $ \Game $ and a path $\pi$ in it, the payoff of player~$i$ is $\pay_i(\pi) = \MP(\wFun_{i}(\pi))$.

\vspace*{4pt} \noindent \textbf{Nash equilibrium.}
Using payoff functions, we can define the game-theoretic concept of Nash equilibrium~\cite{OR94}. 
For a multi-player game $\Game$, a strategy profile
$\strpElm$ is a \emph{Nash equilibrium} of~$\Game$ if, for every player~$i$ and strategy $\strElm'_i$ for player~$i$, we have
$$
\pay_i(\pi(\strpElm))	\geq	\pay_i(\pi((\strpElm_{-i},\strElm'_i))) \ . 
$$
Let $\NE(\Game)$ be the set of Nash equilibria of~$\Game$. Observe that in a given game, there may be more than one Nash equilibria, and as such, different equilibrium outcomes may behave differently---some of which may not be desirable. This was intuitively illustrated in Example~\ref{ex:grid1}. To frame this problem more appropriately within our framework, consider the following example modified from Example~\ref{ex:grid1}.


\begin{example}\label{ex:grid-mp}
	Consider the same setting as in Example~\ref{ex:grid1}, but with the tasks of the robots set as follows. The task for robot $ \bigcirc $ (resp. $ \blacksquare $) is to visit $ (0,0) $ and $ (2,1) $ (resp. $ (2,0) $ and $ (0,1) $) in an alternating fashion infinitely often. Each movement costs 1 unit of energy and gives a payoff of -1. When a robot reaches a target corner after visiting the other one, it gets 3. Each robot wants to maximise the mean-payoff of its infinite run.
\end{example}

\extsTwo

We can model the movements and payoff function of robot $ \bigcirc $ as a transition system in Figure \ref{fig:ex1tsTwo}. 
The vertices are marked with $ (x,y)_f $, where $ f \in \{ 0,1 \} $ is a flag to mark the last corner robot $ \bigcirc $ visited (0 for $ (0,0) $ and 1 for $ (2,1) $.) We set the payoff function for robot $ \bigcirc $ as follows. $ \wFun_{\bigcirc}(e_i) = -1 $ for $ i \in \{1,2,7,8,9,10\} $ and $ \wFun_{\bigcirc}(e_j) = 2 $ for $ j \in \{3,4,5,6\} $\footnote{For clarity in presentation, we have excluded locations $ (0,1) $ and $ (2,0) $ because they are not part of any shortest routes. Additionally, we have placed the payoffs on the edges rather than the vertices. However, it is easy to transform the transition system and push the payoffs to the vertices.}.  We can model the movements and payoff function of robot $ \blacksquare $ in a similar manner. 
The graphical representation of the game can be obtained by taking the cross product of the two transition systems.


From the game we have obtained, we can see that there are many Nash equilibrium runs. We define ``bad'' Nash equilibrium run as one in which the robots occupy the same location simultaneously and infinitely often. From the set of all Nash equilibrium runs, some are bad and some are not. For example, consider a run where robot $ \bigcirc $'s sequence of moves is $ ((0,0)_0(1,0)_0(2,1)_1(1,0)_1)^{\omega} $ and robot $ \blacksquare $'s is $ ((2,0)_0(1,1)_0(0,1)_1(1,1)_1)^{\omega} $. This is a Nash equilibrium run since both robots get the mean-payoff of $ \frac{1}{2} $ and cannot obtain better rewards by changing their actions. Furthermore, it is not a bad Nash equilibrium run, since the robots never simultaneously occupy the same position. Now, consider a different scenario where robot $ \blacksquare $ plays the same strategy as in the previous run, but robot $ \bigcirc $'s sequence of moves is $ ((0,0)_0(1,1)_0(2,1)_1(1,1)_1)^{\omega} $. The mean-payoffs for both robots are still $ \frac{1}{2} $, making this a Nash equilibrium run. However, in this run, the robots will occupy $ (1,1) $ simultaneously and infinitely often, making it a bad Nash equilibrium run.

From a system design perspective, we want to eliminate such bad Nash equilibrium runs. One way to achieve this is by modifying the payoff function for each robot so that the resulting set of equilibria does not include bad runs. We can do this by providing rewards to the robots in order to ``nudge'' them into taking certain paths. 
Consider again the payoff function of robot $ \bigcirc $ shown in Figure~\ref{fig:ex1tsTwo}. Suppose we provide rewards for $ e_3 $ and $ e_5 $, 1 unit of payoff each. Thus we have
$ \wFun_{\bigcirc}(e_3) = 3 $,
$ \wFun_{\bigcirc}(e_5) = 3 $.
%
Now consider the run of robot $ \bigcirc $ as follows: $ h ((0,0)_0(1,0)_0(2,1)_1(1,0)_1)^{\omega} $, where $ h $ is a finite prefix. This run is a Nash equilibrium that results in a mean-payoff of 1 for robot $ \bigcirc $. In fact, in \textit{every} Nash equilibrium, robot $ \bigcirc $'s run corresponds to this type of run, since otherwise, the robot will get a mean-payoff of $ < 1 $. Similarly, we can design a reward scheme for robot $ \blacksquare $ that will result in $ \blacksquare $ always choosing the run $ h ((2,0)_0(1,1)_0(0,1)_1(1,1)_1)^{\omega} $ in every Nash equilibrium. By combining these payoff functions, we obtain a new payoff function that prevents the system from getting stuck in bad equilibria.

In the next section, we formalise the problem of designing payoff functions using {rewards} to achieve desirable Nash equilibrium runs.

\section{From Mechanism Design to Equilibrium Design}
\label{sec:eqdesign}
We now describe a method for modifying the payoff functions of players in a given game to achieve desirable Nash equilibrium runs. As discussed in the introduction, this problem is closely related to the well-known concept of {\em mechanism design} in game theory. 
Consider a system with multiple agents, represented by the set \Ag. Each agent $ i \in \Ag $ aims to maximise its payoff $ \pay_i(\cdot) $.
As in a mechanism design problem, we assume there is an external \textit{principal} who has a goal $ \phi $ that it
wants the system to satisfy. To accomplish this, the principal seeks to incentivise the agents to act collectively and rationally to bring about $ \phi $. In our
model, incentives are given by \textit{reward schemes} and goals by temporal logic formulae. 

\vspace{4pt}
\noindent \textbf{Reward Schemes:} A reward scheme defines \textit{additional}
imposed payoff over those given by the weight function $ \wFun $. 
While the weight function $ \wFun$ is fixed for any
given game, the principal is assumed to be at liberty to define a reward 
scheme as they see fit. Since agents will seek to maximise their overall rewards,
the principal can incentivise agents to visit certain states and avoid others.  If the reward scheme is designed correctly, the agents are incentivised to choose a strategy profile $ \vec{\sigma} $
such that $ \pi(\strpElm) \models \phi $.
Formally, we model a reward scheme as a function $ \kappa: (\Ag \to \St) \to \mathbb{N} $, where the intended 
interpretation is that $ \kappa(i)(s) $ is the reward in the form of a natural number $ k \in \mathbb{N}$ that would be imposed on player $ i $ if such a player visits state $ s \in \St $. For instance, if we have $ \wFun_{i}(s) = 1 $ and $ \kappa(i)(s) = 2 $, then player~$i$ gets $1 + 2 = 3 $ for visiting such a state. For simplicity, hereafter we write $ \kappa_i(s) $ instead of $ \kappa(i)(s)$ for the reward for player~$i$. 

Notice that having an unlimited fund for a reward scheme would make some problems trivial, as the principal can always incentivise players to satisfy $ \phi $ (provided that there is a path in $ A $ satisfying $ \phi $). A natural and more interesting setting is that the principal is given a constraint in the form of \textit{budget} $\beta\in\mathbb{N}$. The principal then can only spend within the budget limit. To make this clearer, we first define the \textit{cost} of a reward scheme $ \kappa $ as follows.

\begin{definition}\label{def:cost}
	Given a game $ \Game $ and reward scheme $ \kappa $, we define \[ \cost(\kappa) = \sum_{i \in \Ag} \sum_{s \in \St} \kappa_i(s) \text{.}\]
\end{definition}

We say that a reward scheme $ \kappa $ is \textit{admissible} if it does not exceed the budget~$\beta$,  that is, if $ \cost(\kappa) \leq \beta $. Let $ \K(\Game,\beta) $ denote the set of admissible reward schemes over $ \Game $ given budget $ \beta \in \mathbb{N} $. Thus we know that for each $ \kappa \in \K(\Game,\beta)$ we have $\cost(\kappa) \leq \beta $. We write $ (\Game,\kappa) $ to denote the resulting game after the application of reward scheme $ \kappa $ on game $ \Game $. Formally, we define the application of some reward scheme on a game as follows.

\begin{definition}\label{def:apply}
	Given a game $ \Game = \tuple{A,(\wFun_{i})_{i \in \Ag}} $ and an admissible reward scheme $ \kappa $, we define $ (\Game,\kappa) = \tuple{A, (\wFun'_{i})_{i \in \Ag}} $, where $ \wFun'_{i}(s) = \wFun_{i}(s) + \kappa_i(s)$, for each $ i \in \Ag $ and $ s \in \St $.
\end{definition}

We now come to the main question(s) that we consider in the remainder of the paper. We ask whether the principal can find a reward scheme that will incentivise players to collectively choose a rational outcome (a Nash equilibrium) that satisfies its temporal logic goal $ \phi $. We call this problem {\em equilibrium design}. Following~\cite{WEKL13}, we define two variants of this problem, a {\em weak} and a {\em strong} implementation of the equilibrium design problem. The formal definition of the problems and the analysis of their respective computational complexity are presented in the next sections.

\begin{remark}
	For the rest of the paper, we assume that weights and reward schemes are using a binary representation for their values.
	This is the standard way of considering them in the context of mean-payoff games~\cite{ZP96,UW11}.
\end{remark}

\section{Equilibrium Design: Weak Implementation}\label{sec:weak}
In this section, we study the weak implementation of the equilibrium design problem, a logic-based computational variant of the principal's mechanism design problem in game theory. We assume that the principal has full knowledge of the game $ \Game $ under consideration, that is, the principal uses all the information available of $ \Game $ to find the appropriate reward scheme, if such a scheme exists. We now formally define the weak variant of the implementation problem, and study its respective computational complexity, first with respect to goals (specifications) given by \LTL formulae and then with respect to \GRone formulae. 

Let $ \WI(\Game,\phi,\beta) $ denote the set of reward schemes over $ \Game $ given budget $ \beta $ that satisfy a formula $ \phi $ in at least one path $ \pi $ generated by $ \strpElm \in \NE(\Game) $. Formally
$$ \WI(\Game,\phi,\beta) = \{ \kappa \in \K(\Game,\beta) : \exists \strpElm \in \NE(\Game,\kappa)~\textrm{s.t.}~\pi(\strpElm) \models \phi \}. $$

\begin{definition}[\weak]
	Given a game $\Game$, formula $\varphi$, and budget $ \beta $:
	\begin{center}
		Is it the case that $ \WI(\Game,\phi,\beta) \neq \varnothing $?
	\end{center}
\end{definition}

In order to solve \weak, we first characterise the Nash equilibria of a multi-player concurrent game in terms of punishment strategies. To do this in our setting, we recall the notion of secure values for mean-payoff games~\cite{UW11}.

For a player $i$ and a state $s \in \St$, by $\pun_i(s)$ we denote the
punishment value of $i$ over $s$, that is, the maximum payoff that $i$
can achieve from $s$, when all other players behave adversarially.
Note that the value $\pun_{i}(s)$ corresponds to the one of a two-player zero-sum mean-payoff game~\cite{ZP96}, where the coalition $-i = \Ag \setminus \{i\}$ is playing adversarially against $i$.
Thus, computing $\pun_{i}(s)$ amounts to computing the winning value of $i$ in such two-player zero-sum mean-payoff game, which can then be done in $\np \cap \conp$.
Also, note that the coalition $\Ag \setminus \{i\}$ can achieve the optimal value of the game using \emph{memoryless} strategies.

%
Then, for a player $i$ and a value $z \in \SetR$, a pair $(s, \jact)$ is $z$-secure for player~$i$ if $\pun_i(\trnFun(s, (\jact_{-i}, \act'_i))) \leq z$ for every $\act'_i \in \Ac$. Write $\pun_{i}(\Game)$ for the set of punishment values for player~$i$ in $\Game$.

\begin{theorem}
	\label{thm:pthfinding}
	For every \MP game $\Game$ and ultimately periodic path $\pi = (s_0, \jact_{0}), (s_1, \jact^{1}), \ldots $, the following are equivalent:
	
	\begin{enumerate}
		\item 
		There is $\strpElm \in \NE(\Game)$ such that $\pi = \pi(\strpElm, s_0)$;
		
		\item 
		There exists $ {z} \in \SetR^{\Ag}$, where $z_{i} \in \pun_{i}(\Game)$ such that, for every $i \in \Ag$
		
		\begin{enumerate}
			\item 
			for all $k \in \SetN$, the pair $(s_k, \jact^k)$ is $z_i$-secure for $i$, and 
			
			\item 
			$z_i \leq \pay_i(\pi)$.
		\end{enumerate}
	\end{enumerate}
	
\end{theorem}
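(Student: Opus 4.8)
The plan is to prove the two implications separately, using the standard deviation-and-punishment characterisation of Nash equilibria in mean-payoff games. Two facts will do the heavy lifting throughout: the mean-payoff $\MP$ is prefix-independent, so a finite prefix preceding a deviation never affects a player's payoff; and, as noted before the statement, the coalition $\Ag \setminus \{i\}$ has a \emph{memoryless} optimal strategy in the two-player zero-sum game defining $\pun_i$. The value $\pun_i(s)$ is what $i$ can guarantee from $s$ against arbitrary opponents, and simultaneously the most $i$ can obtain when $-i$ play adversarially.

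For the direction $(1) \Rightarrow (2)$, I would take a Nash equilibrium $\strpElm$ with $\pi = \pi(\strpElm, s_0)$ and, for each player $i$, set
\[
	z_i = \max_{k \in \SetN} \; \max_{\act'_i \in \Ac} \; \pun_i(\trnFun(s_k, (\jact^k_{-i}, \act'_i))).
\]
Because $\pi$ is ultimately periodic and $\St, \Ac$ are finite, only finitely many deviation-successor states occur, so the maximum is attained at some state $s^\ast$; hence $z_i = \pun_i(s^\ast) \in \pun_i(\Game)$, and $2(a)$ (that every $(s_k, \jact^k)$ is $z_i$-secure) holds by the very choice of $z_i$. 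For $2(b)$, I would argue by contradiction: if some successor $s' = \trnFun(s_k, (\jact^k_{-i}, \act'_i))$ had $\pun_i(s') > \pay_i(\pi)$, then $i$ could follow $\pi$ up to step $k$, play $\act'_i$ there (reaching $s'$, since the opponents have not yet observed a deviation and still play $\jact^k_{-i}$), and thereafter switch to the value-achieving strategy from $s'$, guaranteeing mean-payoff $\geq \pun_i(s') > \pay_i(\pi)$ against $\strpElm_{-i}$ — contradicting that $\strpElm$ is a Nash equilibrium. Hence $z_i \leq \pay_i(\pi)$.

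For the direction $(2) \Rightarrow (1)$, I would construct a profile that plays $\jact^k$ along $\pi$ and, the instant some player $j$ unilaterally departs from the prescribed action, has all players in $\Ag \setminus \{j\}$ switch to the memoryless punishment strategy against $j$ from the reached state. Since strategies observe the full action profile, the unique deviator of a single-player deviation is identifiable, and since punishment is memoryless while $\pi$ is ultimately periodic, the resulting profile has finite memory and is therefore a legitimate strategy in our model. To verify the Nash property, I would fix an arbitrary deviation $\strElm'_i$: if it never departs from $\pi$, the outcome is unchanged and $i$ earns $\pay_i(\pi)$; if it first departs at step $k$, reaching $s'$, then $2(a)$ gives $\pun_i(s') \leq z_i$, so the punishing coalition holds $i$ to mean-payoff $\leq \pun_i(s') \leq z_i \leq \pay_i(\pi)$ by $2(b)$, using prefix-independence to discard the prefix before step $k$. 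Thus no deviation is profitable and $\pi = \pi(\strpElm, s_0)$ for this equilibrium.

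The main obstacle is the forward direction, where the choice of $z_i$ must simultaneously lie in $\pun_i(\Game)$ and satisfy $z_i \leq \pay_i(\pi)$: this is exactly the point where finiteness and ultimate periodicity (to attain the maximum so that $z_i$ is a genuine punishment value of some state) must be combined with the correct reading of the equilibrium condition as forbidding a one-step deviation followed by an optimal continuation. In the backward direction the delicate point is more structural than computational, namely justifying that in the concurrent, simultaneous-move setting a single deviation is detected and correctly attributed so that the memoryless punishment is triggered against the right player; once this is granted, the verification is routine given prefix-independence and the optimality of the punishment strategies.
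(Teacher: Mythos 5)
Your proposal is correct and follows essentially the same route as the paper's proof: for $(1)\Rightarrow(2)$ you take $z_i$ to be the largest punishment value reachable by a one-step deviation from $\pi$ and derive a contradiction with the equilibrium property if $z_i > \pay_i(\pi)$, and for $(2)\Rightarrow(1)$ you build the same grim-trigger profile that follows $\pi$ and switches to the memoryless $z_i$-secure punishment against the first deviator. The extra care you take about attainment of the maximum, finite memory of the constructed profile, and deviator identifiability only makes explicit what the paper leaves implicit.
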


\begin{proof}
	For (1) implies (2): Let $ z_i $ be the largest value player $ i $ can get by deviating from $ \pi $. Let $ k \in \SetN $ be such that $ z_i = \pun_i(\trnFun(s_k,(\jact_{-i},\act'_i))) $. Suppose further that $ \pay_i(\pi) < z_i $. Thus, player $ i $ would deviate at $ s_k $, which is a contradiction to $ \pi $ being a path induced by a Nash equilibrium.
	
	For (2) imples (1): Define strategy profile $ \strpElm $ that follows $ \pi $ as long as no-one has deviated from $ \pi $.
	In such a case where player $ i $ deviates on the $k$-th iteration, the strategy profile $\strpElm_{-i}$ starts playing the $z_i$-secure strategy for player $i$ that guarantees the payoff of player $i$ to be less than $z_i$. Therefore, we have $\pay_i(\pi(\strpElm_{-i},\strElm'_{i})) \leq z_i \leq \pay_i(\pi)$, for every possible strategy $\strElm'_{i}$ of player $i$ (the second inequality is due to condition $2(b)$). Thus, there is no beneficial deviation for player $ i $ and $ \pi $ is a path induced by a Nash equilibrium.
	Indeed, by contradiction, assume for player $i$ that a strategy $\strElm_{i}'$ is a beneficial deviation from the strategy profile $\strpElm$.
	Then, we would have $z_ i \leq \pay_i(\strpElm) < \pay_i(\strpElm_{-i}, \strElm_{i}') < z_i$, the last inequality following from the fact that $\strpElm_{-i}$ is $z_i$-secure for coalition $-i$.
	Clearly, the sequence of inequality makes it a contradiction.
\end{proof}

The characterisation of Nash Equilibria provided in Theorem~\ref{thm:pthfinding} will allow us to turn the \weak problem into a {\em path finding} problem over $(\Game,\kappa)$. 
On the other hand, with respect to the budget $\beta$ that the principal has at its disposal, the definition of reward scheme function $\kappa$ implies that the size of $ \K(\Game,\beta) $ is bounded, and particularly, it is bounded by $\beta$ and the number of agents and states in the game $\Game$, in the following way. 

\begin{proposition}\label{prop:kbound}
	Given a game $ \Game $ with $\card{N}$ players and $ \card{\St} $ states and budget $\beta$, it holds that 
	$$ \card{\K(\Game,\beta)} = \frac{\beta + 1}{m} \binom{\beta + m}{\beta + 1}\text{,}$$
	with $m = \card{N \times \St}$ being the number of pairs of possible agents and states.
\end{proposition}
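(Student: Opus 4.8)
The plan is to recognise the count as a classical \emph{stars-and-bars} problem and then reconcile the resulting binomial coefficient with the stated closed form via an elementary binomial identity. The combinatorial core is standard; essentially all the work is in matching the two expressions.

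First I would observe that, by Definition~\ref{def:cost} together with the definition of a reward scheme as a function assigning $\kappa_i(s) \in \SetN$ to each pair, specifying an admissible $\kappa \in \K(\Game,\beta)$ amounts \emph{precisely} to choosing a nonnegative integer $\kappa_i(s)$ for each of the $m = \card{N \times \St}$ agent--state pairs, subject to the single global constraint $\cost(\kappa) = \sum_{i \in \Ag}\sum_{s \in \St} \kappa_i(s) \leq \beta$. Hence $\card{\K(\Game,\beta)}$ equals the number of $m$-tuples of naturals whose sum is at most $\beta$, and nothing about the game structure beyond the count $m$ plays any role.

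Next I would turn the inequality into an equality by introducing a slack variable: a tuple $(x_1,\dots,x_m)$ of naturals with $\sum_{j=1}^m x_j \leq \beta$ corresponds bijectively to a tuple $(x_1,\dots,x_m,x_{m+1})$ of naturals with $\sum_{j=1}^{m+1} x_j = \beta$, where $x_{m+1} = \beta - \sum_{j=1}^m x_j$ records the unused budget. The standard stars-and-bars count for nonnegative solutions of an equation in $m+1$ variables summing to $\beta$ then yields $\binom{\beta+m}{m}$ admissible schemes.

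Finally I would verify that $\binom{\beta+m}{m}$ coincides with the stated form. Using the symmetry $\binom{\beta+m}{\beta+1}=\binom{\beta+m}{m-1}$ and the absorption identity $\binom{n}{k}=\frac{n-k+1}{k}\binom{n}{k-1}$ instantiated at $n=\beta+m$ and $k=m$, one obtains
\[
\binom{\beta+m}{m} \;=\; \frac{\beta+1}{m}\binom{\beta+m}{m-1} \;=\; \frac{\beta+1}{m}\binom{\beta+m}{\beta+1},
\]
which is exactly the claimed equality. The only point requiring care — hardly a genuine obstacle — is the somewhat unusual presentation of the closed form; the substance reduces entirely to the slack-variable bijection and the two binomial identities above.
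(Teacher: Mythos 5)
Your proof is correct, but it reaches the closed form by a genuinely different route than the paper. The paper first counts the schemes of cost \emph{exactly} $b$ as weak compositions of $b$ into $m$ parts, obtaining $\card{\K(\Game,\beta)} = \sum_{b=0}^{\beta}\binom{b+m-1}{b}$, and then establishes the closed form by an induction on $\beta$ that manipulates factorials directly. You instead absorb the inequality $\cost(\kappa)\leq\beta$ into a single equality by adjoining a slack variable for the unused budget, which counts $\card{\K(\Game,\beta)}$ in one stroke as the weak compositions of $\beta$ into $m+1$ parts, i.e.\ $\binom{\beta+m}{m}$; the remaining work is just the symmetry and absorption identities, both of which you instantiate correctly ($\binom{\beta+m}{m}=\frac{\beta+1}{m}\binom{\beta+m}{m-1}=\frac{\beta+1}{m}\binom{\beta+m}{\beta+1}$). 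In effect your slack-variable bijection is a combinatorial proof of the hockey-stick identity that the paper verifies inductively, so your argument is shorter and arguably more transparent, while the paper's version makes the ``sum over exact budgets'' structure explicit, which is the form most directly read off from the definition of admissibility. Both establish the same polynomial-in-$\beta$, exponential-in-$m$ bound that the subsequent algorithms rely on.
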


\begin{proof}
	
	For a fixed budget $b$, the number of reward schemes of budget exactly $b$ corresponds to the number of \emph{weak compositions} of $b$ in $m$ parts, which is given by $\binom{b+m-1}{b}$~\cite{HT09}.
	Therefore, the number of reward schemes of budget at most $\beta$ is the sum
	$$
	\card{\K(\Game,\beta)} = \sum_{b=0}^{\beta}\binom{b+m-1}{b}\text{.}
	$$
	
	We now prove that 
	$$\sum_{b=0}^{\beta}\binom{b+m-1}{b} = \frac{\beta + 1}{m} \binom{\beta + m}{\beta + 1}\text{.}$$
	
	By induction on $\beta$, as base case, for $\beta = 0$, we have that 
	$$
	\binom{\beta + m - 1}{\beta} = 1 = \frac{\beta + 1}{m} \binom{\beta +m}{\beta +1}\text{.}
	$$
	
	For the inductive case, let us assume that the assertion hold for some $\beta$ and let us prove for $\beta + 1$.
	We have the following:
	$$\sum_{b=0}^{\beta + 1}\binom{b+m-1}{b} = \sum_{b=0}^{\beta}\binom{b+m-1}{b} + \binom{\beta + m - \cancel{1} + \cancel{1}}{\beta + 1} = \frac{\beta + 1}{m} \binom{\beta + m}{\beta + 1} + \binom{\beta + m}{\beta + 1} \text{.}$$
	
	Therefore we have
	\begin{align*}
	\frac{\beta + 1}{m} \binom{\beta + m}{\beta + 1} + \binom{\beta + m}{\beta + 1} & = \binom{\beta + m}{\beta + 1} \left(\frac{\beta + 1}{m} + 1\right) = \\
	\binom{\beta + m}{\beta + 1} \frac{\beta + 1 + m}{m} 
	& = \frac{\beta + 1 + m}{m} \cdot \frac{(\beta + m)!}{(\beta + 1)!(\cancel{\beta} + m - \cancel{\beta} - 1)!} = \\
	\frac{(\beta + m + 1)!}{(\beta + 1)!m!} = \frac{(\beta + m + 1)!}{(\beta + 1)!m!}\cdot \frac{\beta + 2}{\beta + 2} \cdot \frac{m}{m} & = \frac{\beta + 2}{m} \cdot \frac{(\beta + m + 1)!}{(\beta + 2)! (m - 1)!} = \\
	\frac{\beta + 2}{m} \cdot \frac{(\beta + m + 1)!}{(\beta + 2)! (\beta + m + 1 - \beta - 2)!} & = \frac{\beta + 2}{m} \binom{\beta + m + 1}{\beta + 2}
	\end{align*}
	This proves the assertion.
\end{proof}

From Proposition~\ref{prop:kbound} we derive that the number of possible reward schemes is {\em polynomial} in the budget $\beta$ and {\em singly exponential} in both the number of agents and states in the game.

At this point, solving \weak can be done with the Algorithm~\ref{alg:weak}.
Note that, in Line 4, the algorithm builds a game $(\Game,\kappa){[z]}$, which is obtained from $(\Game,\kappa)$ by removing the states $s$ such that $\pun_{i}(s) \leq z_i$ for some player $i$, and transitions $(s, \jact_{-i})$ that are not $z_i$ secure for player $i$.

\begin{algorithm}[t]
	\caption{Algorithm for \weak}
	\begin{algorithmic}[1]
		\Statex \textbf{input:} $ (\Game, \phi, \beta)  $
		\State \begin{varwidth}[t]{\linewidth}
				{guess:}\\
				\phantom{text} \textbullet~ a reward scheme $ \kappa \in \K(\Game,\beta) $\\
				\phantom{text} \textbullet~ 
				a state $ s \in \St $ for every player $ i \in \Ag $\\
				\phantom{text} \textbullet~ punishment memoryless strategies $ (\strpElm_{-1},\dots,\strpElm_{-n}) $ for all players $ i \in \Ag $
			\end{varwidth} \label{proc:guesses}
		
		\State Compute $ (\Game,\kappa) $ \label{proc:compute-gk}
		
		\State Compute the vector $ z $ where $ z_i = \pun_i(s) $ w.r.t. the punishment strategy $ \strpElm_{-i} $ \label{proc:compute-z}
		
		\State Build $(\Game,\kappa){[z]}$ \label{proc:compute-gz}
		
		\If{There exists $ \pi $ in $ (\Game,\kappa){[z]}$ such that $ \pi \models \phi $ and $ z_i \leq \pay_i(\pi) $ for every player $ i \in \Ag $} \label{proc:find-path}
		
			\State \textbf{return} YES
		
		\Else
			\State \textbf{return} NO
		\EndIf
		
	\end{algorithmic}
	\label{alg:weak}
\end{algorithm}

%
%
%
%
%

\begin{theorem}\label{thm:weak-ltl}
	\weak with \LTL specifications is \pspace-complete.
\end{theorem}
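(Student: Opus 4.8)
The plan is to prove the two directions separately: membership in \pspace via the nondeterministic procedure of Algorithm~\ref{alg:weak}, and \pspace-hardness by a reduction from \LTL satisfiability.

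\textbf{Membership.} I would argue that Algorithm~\ref{alg:weak} runs in nondeterministic polynomial space, which suffices since $\pspace = \mthfun{NPSPACE}$ by Savitch's theorem. First, every object guessed in Line~\ref{proc:guesses} has polynomial size: by the binary-representation convention, a reward scheme $\kappa \in \K(\Game,\beta)$ assigns to each of the $\card{\Ag}\cdot\card{\St}$ agent--state pairs a value bounded by $\beta$, hence polynomially many bits; a state per player and a memoryless coalition strategy $\strpElm_{-i}$ per player are likewise polynomial. Computing $(\Game,\kappa)$ in Line~\ref{proc:compute-gk} is immediate. For Lines~\ref{proc:compute-z}--\ref{proc:compute-gz}, note that each punishment value must be evaluated in the \emph{modified} game $(\Game,\kappa)$, since rewards change the weights: against the guessed memoryless strategy $\strpElm_{-i}$, computing player $i$'s optimal response value $z_i = \pun_i(s)$ is a one-player mean-payoff optimisation solvable in polynomial time, with correctness resting on the memoryless determinacy of zero-sum mean-payoff games noted before the theorem. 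The restricted arena $(\Game,\kappa)[z]$ is a polynomial-size subgraph, and building it needs the values $\pun_i(s')$ of successor states, each the value of a two-player zero-sum mean-payoff game and hence in $\np \cap \conp \subseteq \pspace$. By Theorem~\ref{thm:pthfinding}, what remains is exactly to decide existence of the required Nash-equilibrium path, which Line~\ref{proc:find-path} casts as a path-finding query.

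\textbf{The crux (Line~\ref{proc:find-path}).} The step I expect to be the main obstacle is showing this path finding is in \pspace: we must decide whether the polynomial-size weighted arena $(\Game,\kappa)[z]$ admits an ultimately periodic path $\pi$ with $\pi \models \phi$ and $\pay_i(\pi) \geq z_i$ for every $i \in \Ag$. I would take the product of $(\Game,\kappa)[z]$ with a nondeterministic B\"uchi automaton $\Automaton_\phi$ for $\phi$; although $\Automaton_\phi$ is exponential in $\card{\phi}$, the product is explored on the fly in polynomial space, exactly as in the \pspace algorithm for \LTL model checking, where reachability across the exponential product is resolved by guessing it state-by-state with a binary counter. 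Within this product one seeks a reachable, accepting strongly connected component in which the threshold vector $z$ is simultaneously realisable as a mean-payoff: since the mean-payoff of a lasso equals the average weight over its cycle, realisability of $z$ reduces to a linear-programming/circulation feasibility condition over the convex hull of cycle means, and a single lasso weaving the required cycle proportions through an accepting state witnesses both the B\"uchi condition and all the mean-payoff lower bounds at once. Keeping this mean-payoff feasibility test within polynomial space over the exponential product is the technical heart of the upper bound.

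\textbf{Hardness.} For \pspace-hardness I would reduce from \LTL satisfiability, which is \pspace-complete. Given an \LTL formula $\phi$, I construct a one-player \MP game $\Game$ whose arena is a standard ``universal'' gadget --- polynomial in $\card{\phi}$ --- able to generate every label sequence in $(2^{\AP})^\omega$, and I set $\wFun_1(s) = 0$ for all $s$ and $\beta = 0$. Then $\K(\Game,0)$ contains only the all-zero scheme, so $(\Game,\kappa) = \Game$; since there is a single player receiving constant payoff, \emph{every} strategy profile is a Nash equilibrium, whence the outcomes of $\NE(\Game)$ are exactly all paths of the arena. Consequently $\WI(\Game,\phi,0) \neq \varnothing$ if and only if some path satisfies $\phi$, i.e.\ if and only if $\phi$ is satisfiable. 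As the construction is polynomial, \weak with \LTL specifications is \pspace-hard, and together with membership it is \pspace-complete.
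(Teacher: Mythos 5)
Your overall skeleton (guess the reward scheme, states, and memoryless punishment strategies; reduce to a constrained path-finding problem via Theorem~\ref{thm:pthfinding}; obtain hardness from a one-player, zero-weight, zero-budget special case) matches the paper's. But there are two genuine gaps. First, in the membership argument the entire difficulty of Line~\ref{proc:find-path} is deciding, in polynomial space, whether $(\Game,\kappa)[z]$ has a path satisfying $\phi$ \emph{and} simultaneously meeting all the mean-payoff thresholds $z_i$. You correctly identify this as ``the technical heart of the upper bound'' --- and then leave it unproved. Sketching ``a linear-programming/circulation feasibility condition over the convex hull of cycle means'' inside an on-the-fly exploration of an exponential product is precisely the part that needs an argument: the accepting strongly connected components of the product are exponential-size objects, and certifying multi-dimensional mean-payoff feasibility of such a component while storing only polynomially many bits is not routine. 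The paper avoids this by expressing the condition as the single formula $\phi_{\WI} := \phi \wedge \bigwedge_{i \in \Ag} (\MP(i) \geq z_i)$ of the logic $\LTLlim$ and invoking the known \pspace-completeness of $\LTLlim$ model checking~\cite{BCHK14}; without that citation or an equivalent worked-out construction, your upper bound is incomplete exactly where it matters.

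Second, your hardness reduction does not run in polynomial time as described. A ``universal'' arena able to generate \emph{every} label sequence in $(2^{\AP})^\omega$ must contain at least $2^{\card{\AP}}$ states, since each state carries one fixed label; as $\card{\AP}$ can be of the order of $\card{\phi}$, the gadget is exponential in the input, contradicting your claim that it is polynomial. (This is repairable with a micro-step encoding that decides one proposition per step and rewrites $\phi$ accordingly, but that must be said and costs a formula translation.) The paper sidesteps the issue by reducing from existential \LTL \emph{model checking} rather than satisfiability: the arena $A$ is then part of the given instance, a single player with all weights $0$ and $\beta = 0$ makes every (ultimately periodic) path a Nash-equilibrium outcome, and $\WI(\Game,\phi,0)\neq\varnothing$ holds iff some path of $A$ satisfies $\phi$. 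Your one-player/zero-budget observation is exactly right; only the choice of source problem needs changing.
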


\begin{proof}
	
	Firstly, notice that the correctness of Algorithm~\ref{alg:weak} directly follows from the characterisation provided in Theorem~\ref{thm:pthfinding} and the definition of \weak.
	Specifically, the path \( \pi \) in line~\ref{proc:find-path} corresponds to a NE run of $ (\Game,\kappa) $. In other words, \( \pi = \strpElm \in \NE((\Game,\kappa)) \) because it satisfies condition (2) in Theorem~\ref{thm:pthfinding}. Moreover, \( \pi \) also satisfies the property \( \phi \). According to the definition of \weak, this provides a witness for the non-emptiness of \( \WI(\Game,\phi,\beta) \). Therefore, we can conclude that Algorithm~\ref{alg:weak} correctly solves \weak.
	 
	Regarding the complexity, observe the following.
	Since the set $ \K(\Game,\beta) $ is finitely bounded (Proposition \ref{prop:kbound}), and punishment strategies only need to be memoryless, thus also finitely bounded, clearly line~\ref{proc:guesses} can be guessed non-deterministically. Moreover, each of the guessed elements is of polynomial size, thus this step can be done (deterministically) in polynomial space. 
	Line~\ref{proc:compute-gk} clearly can be done in polynomial time.
	Line~\ref{proc:compute-z} can also be done in polynomial time since, given $ (\strpElm_{-1},\dots,\strpElm_{-n}) $, we can compute the vector $ z = (z_1,\dots,z_n) $ by solving $|\Ag|$ number of \textit{one-player mean-payoff games}, one for each player $ i $---this can be done in polynomial time for each $ i $~\cite[Thm.~6]{ZP96}---then set $ z_i = \pun_{i}(s) $ (see Theorem~\ref{thm:pthfinding}). For line~\ref{proc:find-path}, we will use Theorem~\ref{thm:pthfinding} and consider two cases, one for \LTL specifications and one for \GRone specifications. We first consider the case with \LTL specifications, and in the next subsection with \GRone specifications. For \LTL specifications, consider the formula
	$$
	\phi_{\WI} := \phi \wedge \bigwedge_{i \in \Ag} (\MP(i) \geq z_i)
	$$
	written in $\LTLlim$~\cite{BCHK14}, an extension of \LTL where statements about mean-payoff values over a given weighted arena can be made.%
	\footnote{The formal semantics of $\LTLlim$ can be found in ~\cite{BCHK14}. We prefer to give only an informal description here.}
	%
	The semantics of the temporal operators of $\LTLlim$ is just like the one for \LTL over infinite computation paths $\pi = s_0,s_1,s_3.\ldots$. On the other hand, the meaning of $\MP(i) \geq z_i$ is simply that such an atomic formula is true if, and only if, the mean-payoff value of $\pi$ with respect to player~$i$ is greater or equal to $z_i$, a constant real value; that is, $\MP(i) \geq z_i$ is true in $\pi$ if and only if $\pay_i(\pi) = \MP(\wFun_{i}(\pi))$ is greater or equal than constant value $z_i$. 
	Formula $ \phi_{\WI} $ corresponds exactly to $ 2(b) $ in Theorem \ref{thm:pthfinding}. Furthermore, since every path in $ (\Game,\kappa){[z]} $ satisfies condition $ 2(a) $ of Theorem \ref{thm:pthfinding}, every computation path of $ (\Game,\kappa){[z]} $ that satisfies $ \phi_{\WI} $ is a witness to the \weak problem.
	
	Therefore, membership follows from the algorithm and the fact that model checking for $\LTLlim$ is \pspace-complete~\cite{BCHK14}.
	
	Hardness follows from the fact that \LTL model checking is a special case of \weak. To see this, consider an instance of \weak $ (\Game, \phi, \beta) $ where $ \Ag = \{1\}, \beta = 0 $ and for every $ s \in \St $, $ \wFun_1(s) = 0 $. Clearly, $ \WI(\Game,\phi,\beta) \neq \varnothing $ if and only if the LTL formula $ \phi $ is satisfied in the underlying arena $ A $ of $ \Game $, which constitutes model checking $ \phi $ against $ A $.
\end{proof}

\begin{remark}
	Note that the formula $\phi_{\WI}$ in the proof of the theorem is used to check whether there is a path satisfying the formula. We refer to this as ``existential'' $ \LTLlim $ model checking. This notion is not directly addressed in \cite{BCHK14}, where the discussion is centered around ``universal'' model checking.
	However, one can easily be derived from the other by negating the formula and flipping the answer of model checking, all whilst remaining within \pspace.
	We further note that the language $ \LTLlim $ is closed under negation, and (as such) strict inequalities are also expressible.
	Indeed, strict inequalities are also explicitly used in~\cite{BCHK14}.
\end{remark}

\noindent\textbf{Case with $ \GRone $ specifications.} 
One of the main bottlenecks of our algorithm to solve \weak lies in line~\ref{proc:find-path}, where we solve an $ \LTLlim $ model checking problem. To reduce the complexity of our decision procedure, we consider \weak with the specification $ \phi $ expressed in the $ \GRone $ sublanguage of \LTL. With this specification language, we can avoid model checking $ \LTLlim $ in line~\ref{proc:find-path}. Indeed, with \GRone specifications, we can solve line~\ref{proc:find-path} in polynomial time. This is made possible by a linear program (\LP) that we define, drawing inspiration from Kosaraju and Sullivan's technique for detecting zero cycles~\cite{kosaraju1988detecting}. The \LP~yields a solution if and only if line~\ref{proc:find-path} returns true.

\begin{theorem}\label{thm:weak-gr1}
	\weak with $ \GRone $ specifications is \np-complete.
\end{theorem}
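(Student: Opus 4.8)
The plan is to reuse Algorithm~\ref{alg:weak} and to show that, for $\GRone$ specifications, the whole procedure runs in \np. The nondeterministic guessing in line~\ref{proc:guesses} and lines~\ref{proc:compute-gk}--\ref{proc:compute-gz} are handled exactly as in the proof of Theorem~\ref{thm:weak-ltl}: by Proposition~\ref{prop:kbound} a reward scheme has a polynomial-size description, punishment strategies can be taken memoryless, and computing $(\Game,\kappa)$, the vector $z$, and $(\Game,\kappa){[z]}$ takes polynomial time. The only point where the \LTL argument leaves \np is line~\ref{proc:find-path}, where an $\LTLlim$ model-checking instance is solved; so the heart of the argument is to replace that step by a polynomial-time procedure when $\phi$ is a $\GRone$ formula. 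Once this is done, Algorithm~\ref{alg:weak} guesses a polynomial certificate and verifies it in polynomial time, which places \weak in \np.

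First I would exploit that $\GRone$ is prefix-independent: whether an ultimately periodic path $\pi$ satisfies $\phi = (\always\sometime\psi_1 \wedge \cdots \wedge \always\sometime\psi_m) \to (\always\sometime\phi_1 \wedge \cdots \wedge \always\sometime\phi_n)$ depends only on the configurations visited infinitely often. Rewriting $\phi$ as $(\sometime\always\neg\psi_1 \vee \cdots \vee \sometime\always\neg\psi_m) \vee (\always\sometime\phi_1 \wedge \cdots \wedge \always\sometime\phi_n)$, its satisfaction reduces to one of two requirements on the recurrent part of $\pi$: either some $\psi_j$ is visited only finitely often, or every $\phi_l$ is visited infinitely often. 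I would therefore nondeterministically pick which disjunct to satisfy (and, in the first case, the index $j$); this adds only polynomially many guesses and stays within \np.

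The core step is then to decide, in polynomial time, whether $(\Game,\kappa){[z]}$ admits an ultimately periodic path whose recurrent part meets the chosen $\GRone$ requirement and whose payoff satisfies $\pay_i(\pi) \geq z_i$ for every player~$i$. Since several mean-payoff thresholds must hold simultaneously, a single simple cycle need not suffice, so one looks for a combination of cycles inside a reachable, strongly connected region. Following the zero-cycle detection idea of Kosaraju and Sullivan~\cite{kosaraju1988detecting}, I would set up a linear program with one nonnegative frequency variable $x_e$ per transition $e$, subject to flow conservation at every state, a normalisation $\sum_e x_e = 1$, and, for each player~$i$, the constraint $\sum_e x_e \cdot \wFun'_i(e) \geq z_i$. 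The chosen $\GRone$ requirement is encoded on the support of $x$: to force $\always\sometime\phi_l$ I require strictly positive total frequency through some $\phi_l$-configuration, and to force $\sometime\always\neg\psi_j$ I fix $x_e = 0$ on every transition touching a $\psi_j$-configuration. The \LP~is feasible precisely when line~\ref{proc:find-path} should return true, and \LP~feasibility is decidable in polynomial time.

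The main obstacle will be the exact correspondence between \LP~feasibility and the existence of a genuine path: a feasible circulation must be turned into one realisable ultimately periodic run, which needs the support of $x$ to be strongly connected and reachable from the initial state, while conversely the recurrent part of any witnessing path must induce a feasible circulation with the right support. I would handle this by restricting the \LP~to a reachable strongly connected subgraph (guessed, or obtained by decomposing the circulation into cycles and retaining a strongly connected recurrent component), so that the positive-frequency transitions can be stitched into a single lasso whose $\liminf$ mean-payoff for each player equals the corresponding \LP~average. Finally, \np-hardness I would obtain by reduction from a suitable \np-complete problem such as \textsc{SAT}, encoding a truth assignment as the choice of reward scheme and/or equilibrium path and letting the $\GRone$ goal together with the mean-payoff thresholds verify that all clauses are satisfied; this contrasts with the \LTL case, whose hardness came from \LTL model checking, which is no longer available for the tractable $\GRone$ fragment.
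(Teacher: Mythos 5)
Your upper bound follows essentially the same route as the paper: guess the reward scheme, punishment strategies and punishment values nondeterministically, and replace the $\LTLlim$ model-checking step by a Kosaraju--Sullivan style linear program over edge frequencies with flow conservation, per-player mean-payoff constraints, and support constraints encoding the chosen disjunct of the \GRone formula (one LP per $\psi_j$ for the ``finitely often'' case, one LP for the ``all $\phi_l$ infinitely often'' case). Two small remarks. First, the paper avoids your strict-positivity wrinkle by not normalising: it uses $\sum_e x_e \geq 1$ and, for each $\phi_l$, a constraint of the form $\sum_{e}\, x_e \geq 1$ over the relevant edges, rather than $\sum_e x_e = 1$ together with a strict inequality; this works because any positive multiple of a solution is again a solution, and strict inequalities are not literally LP constraints (feasibility of the open system is still polynomial, but the $\geq 1$ formulation is cleaner). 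Second, your concern about stitching the feasible circulation into a single reachable lasso is legitimate and is essentially what the cycle decomposition underlying the Kosaraju--Sullivan argument provides; restricting to a reachable strongly connected component, as you propose, is a fine way to make this explicit.

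The genuine gap is the lower bound. You only announce a reduction from \textsc{SAT}, ``encoding a truth assignment as the choice of reward scheme and/or equilibrium path,'' without constructing the gadget; as written this is a statement of intent, not a proof, and it is not obvious how the mean-payoff thresholds alone would verify clause satisfaction without substantial additional machinery. The paper's hardness argument is a one-liner: take $\phi := \top$ and $\beta = 0$, so that $\WI(\Game,\phi,\beta) \neq \varnothing$ if and only if the mean-payoff game has a Nash equilibrium, and the latter is \np-hard by Ummels and Wojtczak~\cite{UW11}. You should either work out your \textsc{SAT} reduction in full detail or fall back on this known result.
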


\begin{proof}
	For the upper bound, observe that in Algorithm~\ref{alg:weak}, line~\ref{proc:guesses} can be done \textit{non-}\textit{de\-ter\-min\-is\-ti\-cal\-ly} in polynomial time. Furthermore, lines~\ref{proc:compute-gk}--\ref{proc:compute-gz} can be done \textit{deterministically} in polynomial time. Then, in order to solve line~\ref{proc:find-path} we define a linear program of size polynomial in $(\Game,\kappa)$ having a solution if and only if there exists an ultimately periodic path $ \pi $ such that $z_i \leq \pay_i(\pi) $ and satisfies the \GRone specification.
	
	To do this, first recall that $\varphi$ has the following form
	$$
	\varphi \coloneq \bigwedge_{l = 1}^{m} \always \sometime \psi_{l} \to \bigwedge_{r = 1}^{n} \always \sometime \theta_{r}\text{,}
	$$
	and let $V(\psi_{l})$ and $V(\theta_r)$ be the subset of states in $(\Game,\kappa)$ that satisfy the boolean combinations $\psi_{l}$ and $\theta_{r}$, respectively. Observe that property $\varphi$ is satisfied over a path $\pi$ if, and only if, either $\pi$ visits every $V(\theta_r)$ infinitely many times or visits some of the $V(\psi_{l})$ only a finite number of times.
	
	For the game $(\Game,\kappa){[z]}$, let $ W = (V, E, (\wFun_i')_{i \in \Ag})$ be the underlying multi-weighted graph, where $ \wFun_i'(v) = \wFun_i(s) - z_i $ for every $ i \in \Ag $, $ v \in V $, and $ s \in \St $ such that $ v $ corresponds to $ s $. Furthermore, we introduce a variable $x_e$ for every edge $e\in E$, where the value $x_e$ corresponds to the number of times that the edge $e$ is used on a cycle. Let $\src(e) $ and $ \trg(e) $ be the \textit{source} and \textit{target} of the edge $ e $, respectively; $\OUT(v) = \{e \in E : \src(e) = v\}$; and $\IN(v) = \{e \in E : \trg(e) = v\}$.
	
	Consider $\psi_{l}$ for some $1 \leq l \leq m$, and define the linear program $\LP(\psi_{l})$ with the following inequalities and equations:
	\begin{enumerate}[align=left]
		\item[Eq1:]
		$x_e \geq 0$ for each edge $e$ 
		--- an edge cannot be used a negative number of times;
		
		\item[Eq2:]
		$\Sigma_{e \in E} x_e \geq 1$ 
		--- ensures that at least one edge is chosen;
		
		\item[Eq3:]
		for each $i \in \Ag$, $\Sigma_{e \in E} \wFun_i'(\trg(e)) x_e \geq 0$ --- this enforces that the total sum of any solution is non-negative~\footnote{Notice that by using $\wFun_i'(\src(e)) = \wFun_i(\src(e)) - z_i$ we ensure that a non-negative cycle corresponds to paths where agent $i$ ensures a payoff greater or equal than $z_i$, which is the $z$-secure value for them, thus preventing deviations.};
		
		\item[Eq4:]
		$\Sigma_{\trg(e) \cap V(\psi_{l}) \neq \emptyset} x_e = 0$ --- this ensures that no state in $V(\psi_{l})$ is in the cycle associated with the solution;
		
		\item[Eq5:]
		for each $v \in V$, $\Sigma_{e \in \OUT(v)} x_e = \Sigma_{e \in \IN(v)} x_e$  --- this condition says that the number of times one enters a vertex is equal to the number of times one leaves that vertex.	
	\end{enumerate}
	
	Now, by construction, it follows that $\LP(\psi_{l})$ admits a solution if and only if there exists a path $\pi$ in $\Game$ such that $z_i \leq \pay_i(\pi) $ for every player $i$ and visits $V(\psi_{l})$ only {\em finitely many times}.
	Now, consider the linear program $\LP(\theta_{1}, \ldots, \theta_{n})$ defined with the following inequalities and equations: 
	
	\begin{enumerate}[align=left]
		\item[Eq1:]
		$x_e \geq 0$ for each edge $e$ --- 
		an edge cannot be used a negative number of times;
		
		\item[Eq2:]
		$\Sigma_{e \in E} x_e \geq 1$ --- 
		ensures that at least one edge is chosen;
		
		\item[Eq3:]
		for each $i \in \Ag$, $\Sigma_{e \in E} \wFun_i'(\trg(e)) x_e \geq 0$ --- this enforces that the total sum of any solution is non-negative;
		
		\item[Eq4:]
		for all $1 \leq r \leq n$, $\Sigma_{\trg(e) \cap V(\theta_{r}) \neq \emptyset} x_e \geq 1$ --- this ensures that for every $V(\theta_{r})$ at least one state is in the cycle;
		
		\item[Eq5:]
		for each $v \in V$, $\Sigma_{e \in \OUT(v)} x_e = \Sigma_{e \in \IN(v)} x_e$  --- this 
		condition says that the number of times one enters a vertex is equal to the number of times one leaves that vertex.	
	\end{enumerate}

	In this case, $\LP(\theta_{1}, \ldots, \theta_{n})$  admits a solution if and only if there exists a path $\pi$ such that $z_i \leq \pay_i(\pi) $ for every player $i$ and visits every $V(\theta_{r})$ {\em infinitely many times}. As highlighted in \cite{kosaraju1988detecting}, any positive multiple of a solution to the above LP problem also constitutes a valid solution. Consequently, by appropriately scaling any given solution, we can always obtain an integral solution.
	
	Since the constructions above are polynomial in the size of both $(\Game,\kappa)$ and $\phi$, we can conclude it is possible to check in \np the statement that there is a path $\pi$ satisfying $\varphi$ such that $z_i \leq \pay_i(\pi) $ for every player~$i$ in the game if and only if one of the two linear programs defined above has a solution.
	
	For the lower bound, observe that if $\phi \coloneqq \top$  and $ \beta = 0 $, then the problem is equivalent to checking whether the \MP game has a Nash equilibrium, which is \np-hard~\cite{UW11}.
\end{proof}

We now turn our attention to the strong implementation of the equilibrium design problem. As in this section, we first consider \LTL specifications and then \GRone specifications. 

\section{Equilibrium Design: Strong Implementation}\label{sec:strong}
While it may be good news for the principal to find that $ \WI(\Game,\phi,\beta) \neq \varnothing $, it may not be sufficient. Even if there is a desirable Nash equilibrium, it is possible that other equilibria may be undesirable. In such cases, the principal may want to avoid the risk of the system getting stuck in bad equilibria. This motivates us to consider the \textit{strong implementation} variant of equilibrium design. In a strong implementation, we require that \textit{every} Nash equilibrium outcome satisfies the specification~$ \phi $, for a {\em non-empty} set of outcomes. Let $ \SI(\Game,\phi,\beta) $ denote the set of reward schemes $ \kappa $ given budget $ \beta $ over $ \Game $ such that:
\begin{enumerate}
	\item $ (\Game,\kappa) $ has at least one Nash equilibrium outcome,
	\item every Nash equilibrium outcome of $ (\Game,\kappa) $ satisfies $ \phi $.
\end{enumerate}

Formally we define it as follows:
$$
\SI(\Game,\phi,\beta) = \{ \kappa \in \mathcal{K}(\Game,\beta) : \NE(\Game,\kappa) \neq \varnothing \wedge \forall \strpElm \in \NE(\Game,\kappa)~\textrm{s.t.}~\pi(\strpElm) \models \phi \}.
$$
This gives us the following decision problem:
\begin{definition}[\strong]
	Given a game $\Game$, formula $\varphi$, and budget $ \beta $:
	\begin{center}
		Is it the case that $ \SI(\Game,\phi,\beta) \neq \varnothing $?
	\end{center}
\end{definition}

\begin{algorithm}[t]
	\caption{Algorithm for \strong}
	\begin{algorithmic}[1]
		\Statex \textbf{input:} $ (\Game, \phi, \beta)  $
		\State \begin{varwidth}[t]{\linewidth}
			{guess:}\\
			\phantom{text} \textbullet~ a reward scheme $ \kappa \in \K(\Game,\beta) $\\
			\phantom{text} \textbullet~ 
			a state $ s \in \St $ for every player $ i \in \Ag $\\
			\phantom{text} \textbullet~ punishment memoryless strategies $ (\strpElm_{-1},\dots,\strpElm_{-n}) $ for all players $ i \in \Ag $
		\end{varwidth} \label{proc:guesses-strong}
		
		\State Compute $ (\Game,\kappa) $ \label{proc:compute-gk-strong}
		
		\State Compute the vector $ z $ where $ z_i = \pun_i(s) $ w.r.t. the punishment strategy $ \strpElm_{-i} $ \label{proc:compute-z-strong}
		
		\State Build $(\Game,\kappa){[z]}$ \label{proc:compute-gz-strong}
		
		\If{\begin{enumerate}[wide=12pt,nosep,before=\vspace*{-\baselineskip}]
			\item[(a)] there exists $ \pi $ in $ (\Game,\kappa){[z]}$ such that $ z_i \leq \pay_i(\pi) $ for every player $ i \in \Ag $
			
			\textbf{and}
			\item[(b)] there is no $ \pi $ in $ (\Game,\kappa){[z]}$ such that $ z_i \leq \pay_i(\pi) $ for every player $ i \in \Ag $ and $ \pi \models \neg \phi $
		\end{enumerate}
			} \label{proc:find-path-strong}
		
		\State \textbf{return} YES
		
		\Else
		\State \textbf{return} NO
		\EndIf
		
	\end{algorithmic}
	\label{alg:strong}
\end{algorithm}

\strong can be solved with a similar procedure as in Algorithm~\ref{alg:weak} for \weak where lines~\ref{proc:guesses}--\ref{proc:compute-gz} are exactly the same, but with line~\ref{proc:find-path} modified as follows:%
%
%
%
%
%
%
%
%

	Check whether:
	\begin{enumerate}
		\item[(a)] there exists an ultimately periodic path $ \pi $ in $ (\Game,\kappa){[z]}$ such that $ z_i \leq \pay_i(\pi) $ for each $ i \in \Ag $;
		\item[(b)] there is no ultimately periodic path $ \pi $ in $ (\Game,\kappa){[z]}$ such that $ \pi \models \lnot\phi $ and $ z_i \leq \pay_i(\pi) $, for each $ i \in \Ag $.
	\end{enumerate}

Observe that a positive answer to both (a) and (b) above implies that $ \NE(\Game,\kappa) \neq \varnothing $ and for every $ \strpElm \in \NE(\Game,\kappa) $ we have $ \pi(\strpElm) \models \phi $. Thus, $ \kappa \in \SI(\Game,\phi,\beta) $ and as such $ \SI(\Game,\phi,\beta) \neq \varnothing $. The complete algorithm is shown in Algorithm~\ref{alg:strong}. For \LTL specifications, to solve line~\ref{proc:find-path-strong} in Algorithm~\ref{alg:strong}, consider the following \LTLlim formulae that correspond, respectively, to conditions (a) and (b):
\begin{align*}
	\phi_{\exists} &\coloneqq \bigwedge_{i \in \Ag} (\MP(i) \geq z_i);\\
	\phi_{\forall} &\coloneqq \phi_{\exists} \rightarrow \phi.
\end{align*}

Notice that the expression $ \NE(\Game,\kappa) \neq \varnothing $ can be expressed as ``there exists a path $ \pi $ in $ \Game $ that satisfies formula $ \phi_{\exists} $''. On the other hand, the expression $ \forall \strpElm \in \NE(\Game,\kappa)~\textrm{such that}~\pi(\strpElm) \models \phi $ can be expressed as ``for every path $ \pi $ in $ \Game $, if $\pi$ satisfies formula $ \phi_{\exists} $, then $\pi$ also satisfies formula $ \phi $''. Thus, using these two formulae to solve line~\ref{proc:find-path-strong}, we obtain the following result. 

\begin{corollary}\label{cor:strong-ltl}
	\strong with \LTL specifications is \pspace-complete.
\end{corollary}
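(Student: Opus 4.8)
The plan is to reuse the machinery of the weak case (Theorem~\ref{thm:weak-ltl}) almost verbatim, modifying only the final path-finding step of Algorithm~\ref{alg:weak} so that it simultaneously certifies the existence of \emph{at least one} Nash equilibrium and the absence of \emph{any} equilibrium run violating $\phi$; this is exactly Algorithm~\ref{alg:strong}. First I would establish its correctness through the characterisation of Theorem~\ref{thm:pthfinding}. Lines 1--4 are identical to the weak algorithm: after guessing the reward scheme $\kappa$, a state $s$ per player, and memoryless punishment strategies, we fix the threshold vector $z$ and the restricted arena $(\Game,\kappa)[z]$, in which every surviving configuration is $z_i$-secure for all $i$. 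By Theorem~\ref{thm:pthfinding}, an ultimately periodic path $\pi$ of $(\Game,\kappa)[z]$ with $z_i \leq \pay_i(\pi)$ for all $i$ is precisely a Nash equilibrium run. Hence condition (a) certifies $\NE(\Game,\kappa) \neq \varnothing$, while condition (b) certifies that no such run additionally satisfies $\neg\phi$, i.e.\ that every equilibrium outcome satisfies $\phi$; together these give $\kappa \in \SI(\Game,\phi,\beta)$.

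For membership in \pspace, I would translate the two conditions into \LTLlim model-checking queries, exactly as set up before the statement. Condition (a) is the \emph{existential} instance ``$(\Game,\kappa)[z]$ has a path satisfying $\phi_{\exists} := \bigwedge_{i \in \Ag}(\MP(i) \geq z_i)$'', and condition (b) is the \emph{universal} instance ``every path of $(\Game,\kappa)[z]$ satisfies $\phi_{\forall} := \phi_{\exists} \to \phi$''. Both are decidable in \pspace by the \LTLlim result of~\cite{BCHK14} together with the existential/universal duality recorded in the remark after Theorem~\ref{thm:weak-ltl}. Since the guessed objects are polynomially bounded (reward schemes by Proposition~\ref{prop:kbound}, and punishment strategies may be taken memoryless by~\cite{ZP96}) and lines~\ref{proc:compute-gk-strong}--\ref{proc:compute-gz-strong} run in polynomial time, the whole procedure runs in polynomial space: the nondeterministic guessing is absorbed by Savitch's theorem, and combining the existential query (a) with the universal query (b) stays in \pspace by closure under complementation.

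For \pspace-hardness I would reduce from \emph{universal} \LTL model checking (``all paths of a structure satisfy $\phi$''), which is \pspace-complete. Given such an instance, build the single-player game over the same arena with $\beta = 0$ and all weights set to $0$. Then every strategy yields payoff $0$ and is therefore a Nash equilibrium, so the equilibrium runs are exactly the ultimately periodic paths of the arena; consequently $\SI(\Game,\phi,0) \neq \varnothing$ if and only if every path satisfies $\phi$ (a violation of $\phi$, being $\omega$-regular, can always be witnessed by a lasso). This yields the matching lower bound.

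The step I expect to require the most care is the correctness of condition (b). Unlike the weak case, where a single witnessing equilibrium is enough, here we must guarantee that \emph{no} equilibrium escapes $\phi$, so the argument must ensure that the universal query against $\phi_{\forall}$ faithfully captures every equilibrium run---across all admissible punishment profiles---rather than only runs pinned to one threshold. This is where Theorem~\ref{thm:pthfinding} is used most delicately; fortunately, the resulting $\exists\forall$ alternation is absorbed for free inside \pspace, which is exactly why the \LTL case collapses to \pspace whereas the finer \GRone analysis is sensitive to the polynomial hierarchy.
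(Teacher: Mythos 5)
Your proposal is correct and follows essentially the same route as the paper: the same Algorithm~\ref{alg:strong}, correctness via Theorem~\ref{thm:pthfinding}, membership by combining existential \LTLlim model checking for condition (a) with universal \LTLlim model checking for condition (b) inside \pspace (Savitch), and hardness by specialising to a one-player, zero-weight, zero-budget game --- your explicit reduction from universal \LTL model checking is just the spelled-out version of the paper's ``similar to Theorem~\ref{thm:weak-ltl}''. The subtlety you flag about condition (b) needing to range over all admissible threshold vectors $z$ rather than a single guessed one is a real point the paper's corollary proof also leaves implicit (it is handled explicitly only in the \GRone case of Theorem~\ref{thm:strong-gr1}), and your observation that the extra universal quantification over $z$ is absorbed within \pspace is the right way to close it.
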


\begin{proof}
	Using an argument similar to that in Theorem~\ref{thm:weak-ltl}, we can directly infer the correctness of Algorithm~\ref{alg:strong} from Theorem~\ref{thm:pthfinding} and the definition of \strong. In particular, in line~\ref{proc:find-path-strong}, the existence of path $ \pi $ in (a), and the absence of path $ \pi $ in (b), correspond to the non-emptiness of \( \SI(\Game,\phi,\beta) \).
	
	Membership follows from the fact that for line~\ref{proc:find-path-strong}, (a) can be solved by existential \LTLlim model checking, and (b) by universal \LTLlim model checking---both clearly in \pspace by Savitch's theorem. Hardness is similar to the construction in Theorem \ref{thm:weak-ltl}.
\end{proof}

\noindent\textbf{Case with $ \GRone $ specifications.}
Notice that line~\ref{proc:find-path-strong} (a) in Algorithm~\ref{alg:strong} is essentially $ \NE(\Game,\kappa) \neq \varnothing $, that is, checking whether the set of Nash equilibrium in a mean-payoff game is not empty---this can be solved in \np \cite{UW11}. For the (b) part, observe that $$ \forall \strpElm \in \NE(\Game,\kappa)~\textrm{such that}~\pi(\strpElm) \models \phi $$ is equivalent to $$ \neg\exists \strpElm \in \NE(\Game,\kappa)~\text{such that}~\pi(\strpElm) \models \lnot \phi. $$

Thus we have
$$
\lnot\varphi = \bigwedge_{l = 1}^{m} \always \sometime \psi_{l} \wedge \lnot \big( \bigwedge_{r = 1}^{n} \always \sometime \theta_{r} \big)\text{.}
$$
To check the formula above, we modify the \LP~in Theorem \ref{thm:weak-gr1}. Specifically, we modify Eq4 in $\LP(\theta_{1}, \ldots, \theta_{n})$ to encode the $ \theta $-part of $ \lnot \phi $. Thus, we have the following equation in $\LP'(\theta_{1}, \ldots, \theta_{n})$:

\begin{enumerate}[align=left]
	
	\item[Eq4:]
	there exists $ r $, $1 \leq r \leq n$, $\Sigma_{\src(e) \cap V(\theta_{r}) \neq \emptyset} x_e = 0$ --- this condition ensures that at least one set $V(\theta_{r})$ does not have any state in the cycle associated with the solution. 
	
\end{enumerate}
All other equations remain the same.

In this case, $ \LP'(\theta_{1}, \ldots, \theta_{n}) $ has a solution if and only if there is a path $ \pi $ such that $ z_i \leq \pay_i(\pi) $ for every player $ i $ and, for at least one $ V(\theta_{r}) $, its states are visited only \textit{finitely many times}. Thus, we have a procedure that checks if there is a path $ \pi $ that satisfies $ \lnot \phi $ such that $ z_i \leq \pay_i(\pi) $ for every player $ i $, if and only if both linear programs have a solution. Using this new construction, we can now prove the following result. 

\begin{theorem}\label{thm:strong-gr1}
	\strong with \GRone specifications  is $ \SigmaPTwo $-complete.
\end{theorem}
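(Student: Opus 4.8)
The plan is to prove membership in $\SigmaPTwo=\np^{\np}$ by analysing Algorithm~\ref{alg:strong}, and $\SigmaPTwo$-hardness by reducing from the $\SigmaPTwo$-complete quantified Boolean satisfiability problem $\exists\vec{x}\,\forall\vec{y}\,\psi(\vec{x},\vec{y})$. For the upper bound, the guessing in line~\ref{proc:guesses-strong} (the reward scheme $\kappa\in\K(\Game,\beta)$, the deviation states, and the memoryless punishment strategies) is polynomial and performed by the outer nondeterministic machine exactly as in Theorem~\ref{thm:weak-gr1}, and lines~\ref{proc:compute-gk-strong}--\ref{proc:compute-gz-strong} are deterministic polynomial time. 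The crux is to place the two tests of line~\ref{proc:find-path-strong} at the right levels. Condition (a) — the existence of an ultimately periodic path $\pi$ with $z_i\leq\pay_i(\pi)$ for all $i$, equivalently $\NE(\Game,\kappa)\neq\varnothing$ — is decided by the linear program $\LP(\theta_1,\dots,\theta_n)$ of Theorem~\ref{thm:weak-gr1}, hence lies in $\np$ (indeed in polynomial time once $z$ is fixed). Condition (b) asks that \emph{no} $z_i$-secure path satisfies $\lnot\phi$; its complement, the existence of a bad equilibrium path, is witnessed by a cycle obtained from $\LP(\psi_l)$ together with the modified program $\LP'(\theta_1,\dots,\theta_n)$, and so is in $\np$, placing (b) in $\conp$. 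The whole decision is thus of the form $\exists\kappa\,[\,\text{(a)}\wedge\text{(b)}\,]$ with (a) in $\np$ and (b) in $\conp$, decided by an $\np$ machine that guesses $\kappa$ together with the witness for (a) and makes one $\np$-oracle call for (b), i.e.\ in $\np^{\np}=\SigmaPTwo$.

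The one delicate point in the membership argument is that, unlike in the weak case, condition (b) must be evaluated against the \emph{exact} punishment values $\pun_i(\cdot)$ of $(\Game,\kappa)$: a suboptimal punishment strategy would inflate some $z_i$, shrink $(\Game,\kappa)[z]$, and could spuriously certify that no bad equilibrium exists. I would therefore have the oracle certify each relevant value $\pun_i(t)=v$ inside $\np$ by guessing \emph{both} a memoryless coalition strategy for $\Ag\setminus\{i\}$ witnessing $\pun_i(t)\leq v$ and a memoryless strategy for $i$ witnessing $\pun_i(t)\geq v$, each check reducing to a one-player mean-payoff game solvable in polynomial time~\cite{ZP96}. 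This is exactly where the $\np\cap\conp$ determinacy of two-player zero-sum mean-payoff games is used, and it keeps the ``exists a bad equilibrium'' query in $\np$, so that (b) stays in $\conp$ as claimed.

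For the lower bound, I would encode an instance $\exists\vec{x}\,\forall\vec{y}\,\psi(\vec{x},\vec{y})$ of $\SigmaPTwo$-QBF as follows. The existential block $\vec{x}$ is encoded by the reward scheme: the budget $\beta$ and a gadget of agent/state pairs are set so that an admissible $\kappa$ must commit, for each variable $x_j$, to a binary allocation reading off a truth value, and to nothing else. The universal block $\vec{y}$ is encoded by the freedom left in equilibrium selection: the arena is designed, following the mean-payoff equilibrium gadgets of~\cite{UW11}, so that the Nash equilibrium outcomes of $(\Game,\kappa)$ are in bijection with assignments to $\vec{y}$, with at least one always existing (securing clause~(1) of \strong). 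Finally the matrix $\psi$ is written as a \GRone formula $\phi$ over propositions marking the chosen values of $\vec{x}$ and $\vec{y}$, so that an equilibrium satisfies $\phi$ precisely when the corresponding assignment satisfies $\psi$. Then $\kappa$ strongly implements $\phi$ iff the chosen $\vec{x}$ makes $\psi$ hold for \emph{every} $\vec{y}$, i.e.\ iff the QBF is true.

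I expect the main obstacle to be the lower-bound construction rather than membership. One must simultaneously guarantee that admissible reward schemes correspond exactly to assignments of $\vec{x}$ (no leftover budget can buy spurious behaviour), that the equilibrium outcomes correspond exactly to assignments of $\vec{y}$ with no extraneous equilibria and with $\NE(\Game,\kappa)\neq\varnothing$ always holding, and that the weight and punishment-value structure makes these the only $z$-secure runs. Pinning down the mean-payoff gadgets so that the intended secure values are forced by construction — so that neither the existential nor the universal ``player'' can cheat through an unintended deviation — is the technically demanding step, and I would discharge it by reusing and adapting the zero-sum value gadgets of~\cite{UW11,ZP96}.
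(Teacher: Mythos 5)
Your membership argument is correct and is essentially the paper's: the paper likewise casts the problem as $\exists \kappa\,[\exists\text{-NE} \wedge \forall\text{-NE-satisfies-}\phi]$, checks the existential part in \np via Theorem~\ref{thm:pthfinding} and the linear programs, and checks the universal part in \conp by asking that \emph{no} vector $z \in \{\pun_i(s) : s \in \St\}^{\Ag}$ makes both $\LP(\psi_l)$ and $\LP'(\theta_1,\dots,\theta_n)$ solvable. The subtlety you flag about condition~(b) is real and well spotted: the universal test cannot be tied to the punishment strategies guessed in line~\ref{proc:guesses-strong}, since an inflated $z$ would shrink $(\Game,\kappa)[z]$ and hide bad equilibria. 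The paper resolves this by quantifying the \conp{} test over all candidate $z$-vectors drawn from the punishment values rather than certifying exact values, but this amounts to the same use of the $\np \cap \conp$ certificates for two-player zero-sum mean-payoff games that you describe, so either formulation works.

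The hardness direction, however, is only a plan, and the plan stops exactly where the proof begins. You correctly identify the source problem ($\exists\vecx\,\forall\vecy\,\psi$) and the right division of labour -- the existential block encoded by the reward scheme, the universal block by equilibrium selection, the matrix by a \GRone{} formula -- which is precisely the paper's strategy. But the gadget you defer to ``adapting the zero-sum value gadgets of~\cite{UW11,ZP96}'' is the entire content of the reduction, and several of the properties you list as requirements do not come for free from those gadgets. Concretely, the paper builds a \emph{two-player} game on a 3DNF matrix in which player~2 chooses the $\vecy$-assignment at designated states, player~1 responds by selecting a term and cycling through its three literal-states, and a $\sink$ with mean-payoff $\frac{2}{3}$ is reachable from everywhere; the specification is $\phi := \always\sometime\,\lnot\sink$, the budget is $\beta = \card{\setx}$, and an admissible $\kappa$ raises the weight of an $x$-literal state to $1$ exactly when the induced truth value makes that literal true. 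A $\sink$-avoiding run is then sustained as an equilibrium iff player~1 can secure payoff $1$ against every $\vecy$, i.e.\ iff some term is true under $(\vecx,\vecy)$ for all $\vecy$; and the $\sink$-run guarantees $\NE(\Game,\kappa) \neq \varnothing$ unconditionally, which your sketch asserts but does not arrange. Note also that the equilibrium outcomes are \emph{not} in bijection with $\vecy$-assignments (the quantification over $\vecy$ is realised through player~2's adversarial choices and the punishment values, not through a one-to-one correspondence of equilibria), and that forcing admissible reward schemes to encode genuine $0/1$ assignments with no spurious leftover budget requires the weight-update rules to be insensitive to over-allocation. Without these pieces the lower bound is not established.
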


\begin{proof}
	For membership, observe that by rearranging the problem statement, we have the following question:\\
	Check whether the following expression is true
	\begin{align*}
		\exists \kappa \in \K(\Game,\beta)&,\tag{1}\\
		&\exists \strpElm \in \strElm_1 \times \cdots \times \strElm_n,~\text{such that}~\strpElm \in \NE(\Game,\kappa),\tag{2}\\
		~\text{and}&\\
		&\forall \strpElm' \in \strElm_1 \times \cdots \times \strElm_n,~\text{if}~\strpElm' \in \NE(\Game,\kappa)~\text{then}~\pi(\strpElm') \models \phi.\tag{3}
	\end{align*}
	Statement $ (2) $ can be checked in \np (Theorem \ref{thm:pthfinding}). Whereas, verifying statement $ (3) $ is in co\np; to see this, notice that we can rephrase $ (3) $ as follows: $ \not \exists z \in \{ \pun_i(s) : s \in \St \}^{\Ag}  $ such that both $ \LP(\psi_{l}) $ and $ \LP'(\theta_1,\dots,\theta_{n}) $ have a solution in $ (\Game,\kappa){[z]} $. Thus $ \SigmaPTwo $ membership follows.
	
	We prove hardness by a reduction from $ \QSAT_2 $ (satisfiability of quantified Boolean formula with 2 alternations) \cite{1994-papadimitriou}. Let $ \psi(\setx,\sety) $ be an $ n + m $ variable Boolean 3DNF formula, where $ \setx = \{x_1,\dots,x_n \} $ and $ \sety = \{ y_1,\dots, y_n \} $, with $ t_1,\dots,t_k $ terms. Write $ \term_j $ for the set of literals in $ j $-th term and $ \term_j^i $ for the $ i $-th literal in $ \term_j $. Moreover write $ x^j_i $ and $ y^j_i $ for variable $ x_i \in \setx $ and $ y_i \in \sety $ that appears in $ j $-th term, respectively. For instance, if the fifth term is of the form of $ (x_2 \wedge \lnot x_3 \wedge y_4) $, then we have $ \term_5 = \{ x^5_2, x^5_3, y^5_4 \} $ and $ \term_5^1 = x^5_2 $. Let $ \T = \{ \term_i \cap \sety : 1 \leq i \leq k \} $, that is, the set of subset of $ \term_i $ that contains only $ y $-literals.
	
	For a formula $ \psi(\setx,\sety) $ we construct an instance of \strong such that $ \SI(\Game,\phi,\beta) \neq \varnothing $ if and only if there is an $ \vecx \in \{0,1\}^n $ such that $ \psi(\setx,\sety) $ is true for every $ \vecy \in \{0,1\}^m $. Let $ \Game $ be such a game where
	
	\begin{itemize}
		\item $ \Ag = \{1,2\} $,
		\item $ \St = \{ \bigcup_{j \in [1,k]} (\term_j \times \{0,1\}^3) \} \cup \{\T \times \{0\}^3 \} \cup \{ \tuple{\source,\{0\}^3}, \tuple{\sink,\{0\}^3} \} $,
		\item $ s_0 = \source $,
		\item for each state $ s \in \St $
		\begin{itemize}
			\item $ \Ac_1(s) = \{ \{ \T \cup \{\sink \} \} \times \{0\}^3 \} $, $ \Ac_2(s) = \{ \varepsilon \} $, if $ s = \tuple{\source,\{0\}^3} $,
			\item $ \Ac_1(s) = \{ \term^1_i : s[0] \subseteq \term_i \wedge i \in [1,k] \} $, $ \Ac_2(s) = \{ 0,1 \}^{3}$, if $ s \in \{\T \times \{0\}^3 \} $,
			\item $ \Ac_1(s) = \{ \varepsilon \} $, $ \Ac_2(s) = \{ \varepsilon \} $, if $ s \in \bigcup_{j \in [1,k]} (\term_j \times \{0,1\}^3) $,
		\end{itemize}
		\item for an action profile $ \jact = (\act_{1},\act_{2}) $
		\begin{itemize}
			\item $ \trnFun(s,\jact) = \act_{1} $, if $ s = \tuple{\source,\{0\}^3} $,
			\item $ \trnFun(s,\jact) = \tuple{\act_{1},\act_{2}} $, if $ s \in \{\T \times \{0\}^3 \} $,
			\item $ \trnFun(s,\jact) = \tuple{\term_j^{(i \bmod 3)+1 },s[1]} $, if $ s = \tuple{\term^i_j,s[1]} \in \bigcup_{j \in [1,k]} (\term_j \times \{0,1\}^3) $;
			\item $ \trnFun(s,\jact) = s $, otherwise;
		\end{itemize}
		\item for each state $ s \in \St, \labFun(s) = s[0] $,
		\item for each state $ s \in \St $
		\begin{itemize}
			\item $ \wFun_1(s) = \frac{2}{3} $, if $ s[0] = \sink $\footnote{This can be implemented by a macrostate with three substates---2 substates with weight of 1, and 1 with weight of 0---forming a simple cycle.},
			\item $ \wFun_1(s) = 0 $, otherwise;
		\end{itemize}
		\item the payoff of player $ i \in \Ag $ for an ultimately periodic path $ \pi $ in $ \Game $ is
		\begin{itemize}
			\item $ \pay_1(\pi) = \MP(\wFun_1(\pi)) $,
			\item $ \pay_2(\pi) = - \MP(\wFun_1(\pi)) $,
		\end{itemize}
	\end{itemize}
	Furthermore, let $ \beta = |\setx| $ and the \GRone property to be $ \phi := \always \sometime~\lnot \sink $. Define a (partial) reward scheme $ \kappa : \setx \to \{0,1\} $. The weights are updated with respect to $ \kappa $ as follows:\\	
	for each $ s \in \St $ such that $ s[0] \in \term_j \setminus \sety $, that is, an $ x $-literal that appears in term $ t_j $
	\[
	\wFun_1(s) =
	\begin{cases}
	1, & \text{if } \kappa(s) = 1 \wedge s[0]~\text{is not negated in}~t_j \\
	1, & \text{if } \kappa(s) = 0 \wedge s[0]~\text{is negated in}~t_j \\
	0, & \text{if } \kappa(s) = 1 \wedge s[0]~\text{is negated in}~t_j \\
	0, & \text{otherwise;}
	\end{cases}
	\]
	for each $ s \in \St $ such that $ s[0] \in \term_j \setminus \setx $, that is, a $ y $-literal that appears in term $ t_j $, $ s[0] = \term^i_j $
	\[
	\wFun_1(s) =
	\begin{cases}
	1, & \text{if } s[1][i] = 1 \wedge s[0]~\text{is not negated in}~t_j \\
	1, & \text{if } s[1][i] = 0 \wedge s[0]~\text{is negated in}~t_j \\
	0, & \text{if } s[1][i] = 1 \wedge s[0]~\text{is negated in}~t_j \\
	0, & \text{otherwise;}
	\end{cases}
	\]
	the weights of other states remain unchanged.
	
	The construction is now complete, and polynomial to the size of formula $ \psi(\setx,\sety) $. We claim that $ \SI(\Game,\phi,\beta) \neq \varnothing $ if and only if there is an $ \vecx \in \{0,1\}^n $ such that $ \psi(\setx,\sety) $ is true for every $ \vecy \in \{0,1\}^m $. From left to right, consider a reward scheme $ \kappa \in \SI(\Game,\phi,\beta) $ which implies that there exists no Nash equilibrium run in $ (\Game,\kappa) $ that ends up in $ \sink $. This means that for every action $ \jact_{2} \in \Ac_2(s), $ there exists $ \jact_1 \in \Ac_1(s), s \in \{\T \times \{0\}^3 \} $, such that $ \pay_1(\pi) = 1 $, where $ \pi $ is the resulting path of the joint action. Observe that this corresponds to the existence of (at least) a term $ t_i $, which evaluates to true under assignment $ \vecx $, regardless the value of $ \vecy $. From right to left, consider an assigment $ \vecx \in \{0,1\}^n $ such that for all $ \vecy \in \{0,1\}^m $, the formula $ \psi(\setx,\sety) $ is true. This means that for every $ \vecy $, there exists (at least one) term $ t_i $ in $ \psi(\setx,\sety) $ that evaluates to true. By construction, specifically the weight updating rules, for every $ \jact_{2} $ corresponding to assignment $ \vecy $, there exists $ \term_j $ such that $ \forall i \in [1,3], \wFun_1(\term^i_j) = 1 $. This means that player 1 can always get payoff equals to 1, therefore, any run that ends in $ \sink $ is not sustained by Nash equilibrium.
\end{proof}

\section{Optimality and Uniqueness of Solutions}\label{sec:opt-unique}
Having asked the questions studied in the previous sections, the principal may want to dig deeper. Because the power of the principal is limited by its budget, and because from the point of view of the system, it may be associated with a reward (e.g., money, savings, etc.) or with the inverse of the amount of a finite resource (e.g., time, energy, etc.) an obvious question is asking about {\em optimal} solutions. This leads us to {\em optimisation} variations of the problems we have studied. Informally, in this case, we ask what is the least budget that the principal needs to ensure that the implementation problems have positive solutions. The principal may also want to know whether a given reward scheme is {\em unique}, so that there is no point in looking for any other solutions to the problem. In this section, we investigate these kind of problems, and classify our study into two parts, one corresponding to the \weak problem and another one corresponding to the \strong problem.

\subsection{Optimality and Uniqueness in the Weak Domain}
We can now define formally some of the problems that we will study in the rest of this section. To start, the optimisation variant for \weak is defined as follows.

\begin{definition}[\optwi]
	Given a game $\Game$ and a specification formula $\phi$:
	\begin{center}
		What is the optimum budget $ \beta$ such that $ \WI(\Game,\phi,\beta) \neq \varnothing $?
	\end{center}
\end{definition}

Another natural problem, which is related to \optwi, is the ``exact'' variant -- a membership question. In this case, in addition to $ \Game $ and $ \phi $, we are also given an integer $ b $, and ask whether it is indeed the smallest amount of budget that the principal has to spend for some optimal weak implementation. This decision problem is formally defined as follows.

\begin{definition}[\exactwi]
	Given a game $\Game$, a specification formula $\phi$, and an integer $ b $:
	\begin{center}
		Is $b$ equal to the optimum budget for $ \WI(\Game,\phi,\beta) \neq \varnothing $?
	\end{center}
\end{definition}

To study these problems, it is useful to introduce some concepts first. More specifically, let us introduce the concept of {\em implementation efficiency}. We say that a \weak (resp.\ \strong) is \textit{efficient} if $ \beta = \cost(\kappa) $ and there is no $ \kappa' $ such that $ \cost(\kappa') < \cost(\kappa) $ and $ \kappa' \in \WI(\Game,\phi,\beta) $ (resp.\ $ \kappa' \in \SI(\Game,\phi,\beta) $). In addition to the concept of efficiency for an implementation problem, it is also useful to have the following result.

\begin{proposition}\label{lem:opt-bound}
	Let $ z_i $ be the largest payoff that player $ i $ can get after deviating from a path $\pi$. The optimum budget is an integer between 0 and $ \sum_{i \in \Ag} z_i \cdot (|\St|-1) $.
\end{proposition}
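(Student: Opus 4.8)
The plan is to prove the two inequalities separately. The lower bound is immediate: by definition a reward scheme maps agent/state pairs into $\mathbb{N}$, so $\cost(\kappa) = \sum_{i \in \Ag}\sum_{s \in \St}\kappa_i(s)$ is a non-negative integer; hence the least budget $\beta$ for which $\WI(\Game,\phi,\beta) \neq \varnothing$ is itself a non-negative integer, and it is exactly $0$ in the case where $\phi$ already holds on some Nash-equilibrium path of $\Game$ with no reward at all.

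For the upper bound I would show that whenever $\WI(\Game,\phi,\beta) \neq \varnothing$ for some $\beta$, there is already a witness of cost at most $\sum_{i \in \Ag} z_i \cdot (|\St|-1)$, so the optimum cannot exceed this quantity. I start from any witness: a reward scheme $\kappa$ together with a Nash-equilibrium path $\pi$ of $(\Game,\kappa)$ with $\pi \models \phi$. By Theorem~\ref{thm:pthfinding} there is a vector $z$ with $z_i \in \pun_i(\Game)$ such that every configuration of $\pi$ is $z_i$-secure and $z_i \leq \pay_i(\pi)$, where $z_i$ is precisely the best value player $i$ could secure by deviating. Since $\pi$ is ultimately periodic, I would replace its periodic part by a \emph{simple} cycle $C$ carrying at least the same mean-payoff for every player — a cycle-decomposition argument of the same flavour as the one underpinning the linear programs of Theorem~\ref{thm:weak-gr1} — so that $C$ visits at most $|\St|$ distinct states, and I would then re-derive a cheap reward scheme supported only on the states of $C$.

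The core of the estimate is the per-player reward bound. Because $\pay_i$ is the mean-payoff over $C$, lifting player $i$'s payoff to the threshold $z_i$ requires adding a total reward over the states of $C$ that is bounded by $z_i$ times the number of states that must be incentivised; on a simple cycle there are at most $|\St|-1$ such positions (counting the repeated endpoint once), giving a per-player cost of at most $z_i \cdot (|\St|-1)$ and, summing over players, a total cost of at most $\sum_{i \in \Ag} z_i \cdot (|\St|-1)$. Invoking Theorem~\ref{thm:pthfinding} in the converse direction then certifies that the resulting cheaper scheme still induces a Nash equilibrium whose path satisfies $\phi$, establishing membership in $\WI$ within the claimed budget.

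The main obstacle is this per-player bound, for two interacting reasons. First, the same rewards used to raise player $i$'s on-path payoff can also inflate the deviation value $z_i$ itself, since a deviating player collects rewards too; I would control this by placing rewards only on the states of the chosen cycle and appealing to the $z_i$-security conditions of Theorem~\ref{thm:pthfinding}, so that the punishment values are read off consistently in the rewarded game $(\Game,\kappa)[z]$. Second, the exact constant $|\St|-1$ hinges on the simple-cycle reduction together with careful bookkeeping of how many distinct states genuinely need a positive reward (and on the original weights not forcing additional payments); arguing that at most $|\St|-1$ states per player suffice — rather than $|\St|$ — is the delicate point, and it is what yields the stated bound rather than the cruder $\sum_{i \in \Ag} z_i \cdot |\St|$.
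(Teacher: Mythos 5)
Your proposal follows essentially the same route as the paper: the paper's entire upper-bound argument is the one-line observation that the worst case for player $i$ is a simple cycle formed from all states in $\St$ apart from one deviation state, each requiring a payment of $z_i$, which is exactly the per-player estimate you develop via the simple-cycle reduction. The obstacles you flag --- rewards feeding back into the deviation values $z_i$, and the original weights potentially forcing payments beyond $z_i$ per state --- are genuine, but the paper's own two-sentence proof does not address them either, so your more cautious elaboration is, if anything, more explicit than the published argument.
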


\begin{proof}
	The lower-bound is straightforward. The upper-bound follows from the fact that the maximum value the principal has to pay to player $ i $ is when the path $ \pi $ is a simple cycle and formed from all states in $ \St $, apart from 1 deviation state. 
\end{proof}

Using Proposition~\ref{lem:opt-bound}, we can show that both \optwi and \exactwi can be solved in PSPACE for \LTL specifications. Intuitively, the reason is that we can use the upper bound given by Proposition~\ref{lem:opt-bound} to go through all possible solutions in exponential time, but using only nondeterministic polynomial space. Formally, we have the following results. 

\begin{theorem}\label{thm:optwi-ltl}
	\optwi with \LTL specifications is \fpspace-complete.
\end{theorem}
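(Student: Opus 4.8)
The plan is to prove membership and hardness separately, with the membership argument carrying most of the weight and the hardness direction being the delicate part.

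For membership in \fpspace, I would start from Proposition~\ref{lem:opt-bound}, which confines the optimum budget to the interval $[0,B]$ with $B = \sum_{i \in \Ag} z_i \cdot (|\St|-1)$. Since weights, and hence the punishment values $z_i$, are given in binary, $B$ has polynomially many bits, so the optimum is itself representable in polynomial space. The key structural observation is that the predicate ``$\WI(\Game,\phi,\beta) \neq \varnothing$'' is monotone in $\beta$: because $\K(\Game,\beta) \subseteq \K(\Game,\beta')$ whenever $\beta \leq \beta'$, any witnessing reward scheme admissible under budget $\beta$ remains admissible and witnessing under every larger budget. Thus the set of feasible budgets is up-closed and the optimum is its least element. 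I would therefore compute the optimum by binary search over $[0,B]$, deciding the predicate at each probed value with the \pspace procedure for \weak from Theorem~\ref{thm:weak-ltl}. Binary search uses $O(\log B)$ iterations, i.e.\ polynomially many, and each reuses the same polynomial workspace, so the whole computation runs in polynomial space. A single preliminary call at $\beta = B$ detects the infeasible case (no reward scheme makes $\phi$ hold on a Nash equilibrium, which can happen only when no path of the arena satisfies $\phi$, since rewards alter payoffs but not transitions). This places \optwi in \fpspace.

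For hardness, the crucial point is that the threshold version of \optwi --- ``is the optimum budget at most $\beta$?'' --- is, by the monotonicity above, exactly the question ``$\WI(\Game,\phi,\beta) \neq \varnothing$'', namely the \weak problem, which is \pspace-complete by Theorem~\ref{thm:weak-ltl}. To turn this into genuine \fpspace-hardness rather than mere \pspace-hardness of the decision version, I would argue at the level of the bit-graph. Any function $g$ computable in polynomial space with polynomially many output bits has a bit-graph $\{(x,j) : \text{the } j\text{-th bit of } g(x) \text{ is } 1\}$ that lies in \pspace, hence many-one reduces to the \pspace-complete \weak. Now, given an oracle for \optwi, one can decide \weak on any $(\Game,\phi,\beta)$ by computing $\beta^\ast = \optwi(\Game,\phi)$ and returning ``yes'' iff $\beta^\ast \leq \beta$ (with $\beta^\ast = \infty$ in the infeasible case). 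Chaining these, a polynomial-time machine with an \optwi oracle can recover each of the polynomially many bits of $g(x)$, hence compute $g$ entirely. This exhibits a polynomial-time Turing reduction from an arbitrary \fpspace function to \optwi, establishing \fpspace-hardness.

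The main obstacle is precisely this lifting step: showing that computing the exact optimum is as hard as \emph{all} of \fpspace and not just that its decision version is \pspace-complete. The bit-graph reduction above is the clean way around it, since it routes every \fpspace function through the \pspace-complete threshold predicate that \optwi resolves. A secondary technical point to verify carefully is that $B$ is a correct upper bound on the required budget whenever any solution exists --- this is exactly what Proposition~\ref{lem:opt-bound} guarantees --- so that the binary search range is sound and the procedure never overlooks a feasible scheme.
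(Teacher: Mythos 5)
Your proof is correct and follows essentially the same route as the paper: membership is obtained by searching the budget range bounded by Proposition~\ref{lem:opt-bound} while invoking the \pspace procedure for \weak as a subroutine (the paper iterates exhaustively where you binary-search via monotonicity, which makes no difference in polynomial space). The paper dismisses hardness as ``straightforward''; your bit-graph/metric-reduction argument routed through the \pspace-completeness of \weak is a sound way of making that step precise.
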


\begin{proof}
	Since the search space is bounded (Proposition \ref{lem:opt-bound}), by using \weak an an oracle we can iterate through every instance and return the smallest $ \beta $ such that $ \WI(\Game,\phi,\beta) \neq \varnothing $. Moreover, each instance is of polynomial size in the size of the input. Thus membership in \pspace follows. Hardness is straightforward.
\end{proof}

\begin{corollary}\label{cor:exactwi-ltl}
	\exactwi with \LTL specifications is \pspace-complete.
\end{corollary}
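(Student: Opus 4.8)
The plan is to obtain membership by calling the \weak decision procedure of Theorem~\ref{thm:weak-ltl} as a subroutine, and hardness by the same one-player reduction used in its lower-bound argument. The key structural fact to isolate first is that $\WI(\Game,\phi,\cdot)$ is \emph{monotone} in the budget: since $\K(\Game,\beta)\subseteq\K(\Game,\beta')$ whenever $\beta\le\beta'$, and since the defining condition on a reward scheme $\kappa$ (namely $\exists\,\strpElm\in\NE(\Game,\kappa)$ with $\pi(\strpElm)\models\phi$) does not depend on $\beta$ at all, we get $\WI(\Game,\phi,\beta)\subseteq\WI(\Game,\phi,\beta')$. Hence the set of budgets with a non-empty witness set is upward closed, and the optimum budget, when it exists, is the unique threshold $\beta^\star$ satisfying $\WI(\Game,\phi,\beta)\neq\varnothing \iff \beta\ge\beta^\star$.

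For membership, I would then reduce the exact test to two \weak queries. A given $b$ equals the optimum iff (i) $\WI(\Game,\phi,b)\neq\varnothing$, and (ii) $b=0$ or $\WI(\Game,\phi,b-1)=\varnothing$. Condition (i) is a positive call to \weak and condition (ii) a negative call; both $b$ and $b-1$ remain polynomially representable even with $b$ written in binary, so by Theorem~\ref{thm:weak-ltl} each call runs in polynomial space. As \pspace is closed under complement and under running the two tests sequentially (reusing space), the combined procedure is in \pspace. If no optimum exists, condition (i) fails for every $b$, so the procedure correctly rejects.

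For hardness I would reduce from \LTL model checking, which is \pspace-complete, reusing the construction from the hardness part of Theorem~\ref{thm:weak-ltl}. Given an arena $A$ and formula $\phi$, let $\Game$ be the single-player game on $A$ with $\wFun_1(s)=0$ for all $s\in\St$, and set the queried value to $b=0$. Because a reward scheme only modifies weights and leaves the arena and its labelling untouched, the set of paths of $(\Game,\kappa)$ and the subset satisfying $\phi$ are independent of $\kappa$; and with a single player receiving constant payoff, every path is a Nash equilibrium. Therefore $\WI(\Game,\phi,\beta)\neq\varnothing$ for some (equivalently, any) $\beta$ exactly when some path of $A$ satisfies $\phi$. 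Consequently $0$ is the optimum budget iff $\WI(\Game,\phi,0)\neq\varnothing$ iff $A$ admits a $\phi$-path, which is precisely the model-checking answer; and when $A$ has no $\phi$-path there is no optimum at all, so the \exactwi instance correctly answers \textsc{no}.

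The main obstacle I anticipate is definitional rather than computational: I must fix the convention that \exactwi answers \textsc{no} when no optimum budget exists, and then check that both the algorithm (condition (i) failing for all $b$) and the reduction (the ``$A$ has no $\phi$-path'' case) treat that situation consistently. Once monotonicity of $\WI$ in the budget is established, the two-query \pspace bound is immediate, and on the hardness side the only thing requiring care is the observation that reward schemes cannot alter which computation paths exist or satisfy $\phi$, so that the budget-$0$ query faithfully encodes model checking.
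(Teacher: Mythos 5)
Your proposal is correct, and it is slightly cleaner than what the paper actually does. The paper leaves this corollary without an explicit proof, relying on the surrounding discussion: by Proposition~\ref{lem:opt-bound} the optimum budget lies in a bounded (exponentially large, polynomially representable) range, so one can compute it as in Theorem~\ref{thm:optwi-ltl} by exhaustively invoking the \weak procedure and then compare the result with $b$, all within polynomial space. You instead isolate the monotonicity of $\WI(\Game,\phi,\cdot)$ in the budget and reduce \exactwi to exactly two \weak queries, at $b$ and at $b-1$, using closure of \pspace under complement; this is precisely the decomposition the paper uses for the \GRone analogue (Corollary~\ref{cor:exactwi-gr1}, a ``yes'' to \weak plus a ``yes'' to \weakc, giving \DP), transported to the \LTL setting where both classes collapse into \pspace. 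Your route buys independence from Proposition~\ref{lem:opt-bound} and avoids the search entirely; the paper's route is uniform with its treatment of \optwi. Your hardness argument is the same one-player, zero-weight, zero-budget reduction from \LTL model checking used in Theorem~\ref{thm:weak-ltl}, and your handling of the degenerate case where no budget works (so no optimum exists) is consistent on both sides of the reduction.
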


The fact that both \optwi and \exactwi with \LTL specifications can be answered in, respectively, \fpspace and \pspace does not come as a big surprise: checking an instance can be done using polynomial space and there are only exponentially many instances to be checked. However, for \optwi and \exactwi with \GRone specifications, these two problems are more interesting. 

\begin{theorem}\label{thm:optwi-gr1}
	\optwi with \GRone specifications is $ \FP^{\np} $-complete.
\end{theorem}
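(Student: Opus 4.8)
The goal is to show that \optwi with \GRone specifications is $\FP^{\np}$-complete. The plan is to establish membership and hardness separately, leveraging the fact that \weak with \GRone specifications is \np-complete (Theorem~\ref{thm:weak-gr1}) and that, by Proposition~\ref{lem:opt-bound}, the optimum budget is an integer bounded by $\sum_{i \in \Ag} z_i \cdot (|\St| - 1)$, a value whose binary representation is of polynomial size in the input.

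For \textbf{membership}, the key observation is that deciding, for a fixed $\beta$, whether $\WI(\Game,\phi,\beta) \neq \varnothing$ is an \np question by Theorem~\ref{thm:weak-gr1}. I would first argue that the predicate is \emph{monotone} in $\beta$: if an admissible reward scheme of cost at most $\beta$ witnesses weak implementation, then the same scheme remains admissible for any $\beta' \geq \beta$, so $\WI(\Game,\phi,\beta) \neq \varnothing$ implies $\WI(\Game,\phi,\beta') \neq \varnothing$. Given monotonicity and the polynomial-size upper bound from Proposition~\ref{lem:opt-bound}, I would perform a binary search over the range $[0, \sum_{i \in \Ag} z_i \cdot (|\St|-1)]$, making a logarithmic number of calls to the \np oracle that decides \weak, each on an instance of polynomial size. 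This yields a deterministic polynomial-time procedure with an \np oracle, placing \optwi in $\FP^{\np}$. A small technical point I would address is that the $z_i$ values appearing in the bound are themselves punishment values, but since these are guessed and verified within the \weak oracle call, the search range can be taken as the crude but safe bound in terms of the maximal absolute weight occurring in $\Game$.

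For \textbf{hardness}, I would reduce from a known $\FP^{\np}$-complete problem, the natural candidate being \MQSAT or the problem of computing the optimal value of an optimisation problem whose associated decision problem is \np-complete. The cleanest route is to adapt the \np-hardness construction underlying Theorem~\ref{thm:weak-gr1} (the reduction establishing that deciding existence of a Nash equilibrium in a mean-payoff game is \np-hard) so that the \emph{value} being optimised, namely the least budget $\beta$ for which $\WI(\Game,\phi,\beta) \neq \varnothing$, encodes the optimal value of the source problem. Concretely, I would design a game in which each unit of budget corresponds to satisfying one additional clause or setting one additional variable, so that the minimum budget needed to make some Nash equilibrium satisfy $\phi$ equals (a simple affine function of) the optimal value of the $\FP^{\np}$-complete source instance.

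The \textbf{main obstacle} will be the hardness direction: membership via binary search is essentially routine once monotonicity and the polynomial bound are in place, but constructing a reduction that forces the \emph{optimal budget} -- rather than merely the existence of a solution -- to capture an $\FP^{\np}$-complete quantity requires carefully calibrating the reward scheme so that there is a tight correspondence between budget expenditure and the combinatorial quantity being optimised. In particular, I must ensure the reduction is parsimonious enough that no cheaper reward scheme can accidentally implement $\phi$ (which would make the optimum smaller than intended), so the $\GRone$ goal $\phi$, the weight function, and the budget-to-objective encoding must be engineered jointly so that the principal can do no better than spend exactly the amount corresponding to the source instance's optimum.
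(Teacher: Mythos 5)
Your membership argument is the paper's own: binary search over the bounded range from Proposition~\ref{lem:opt-bound}, using \weak with \GRone specifications as an \np oracle, with polynomially many queries. The monotonicity observation is correct and that half is fine.

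The genuine gap is in the hardness direction, and it is not just that the construction is left as a sketch: the source problems you propose cannot yield $\FP^{\np}$-hardness. First, \MQSAT (MinQSAT$_2$) as used in this paper sits at the second level of the hierarchy and is the source for the \emph{strong} variant (Theorem~\ref{thm:optsi-gr1}), not the weak one. Second, and more fundamentally, a reduction in which ``each unit of budget corresponds to satisfying one additional clause or setting one additional variable'' forces the optimal budget to be polynomially bounded in the input size. By Krentel's classification, optimisation problems with polynomially bounded optima (such as the MAX-SAT value) are complete only for $\FP^{\np[\log n]}$, a class that is believed to be strictly weaker than $\FP^{\np}$ and is known to equal it only under a collapse of the polynomial hierarchy; so such a reduction would not establish the claimed completeness. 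To get genuine $\FP^{\np}$-hardness the optimum must range over exponentially many values, which is why the paper reduces from \textsc{TSP Cost} with binary-encoded edge costs: it builds a one-player game whose states are vertex--edge pairs, sets $\wFun_1(\tuple{v,e}) = \max_{e'} c(e') - c(e)$ with a sink state worth $\max_{e'} c(e')$, takes $\phi := \bigwedge_{v \in V} \always\sometime\, v$, and argues that the least budget making the tour path resistant to deviation into the sink is exactly the optimal tour cost. Your instinct that calibrating budget against the source objective is the crux is right, but the choice of source problem is itself part of that calibration, and the ones you chose are at the wrong granularity (and, for \MQSAT, the wrong level of the hierarchy).
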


\begin{proof}
	Membership follows from the fact that the search space, which is bounded as in Proposition \ref{lem:opt-bound}, can be fully explored using binary search and \weak as an oracle. More precisely, we can find the smallest budget $ \beta $ such that $ \WI(\Game,\phi,\beta) \neq \varnothing $ by checking every possible value for $ \beta $, which lies between 0 and $ 2^n $, where $ n $ is the length of the encoding of the instance. Since we need logarithmically many calls to the \np oracle (to \weak), in the end we have searching procedure that runs in polynomial time. 
	
	For hardness we reduce from \textsc{TSP Cost} (optimal travelling salesman problem) that is known to be $ \FP^{\np} $-complete \cite{1994-papadimitriou}. Given a \textsc{TSP Cost} instance $ \tuple{G, c} $, $ G = \tuple{V, E} $ is a graph, $ c : E \to \SetZ $ is a cost function. We assume that $ \WI(\Game,\phi,\beta) $ is efficient. To encode \textsc{TSP Cost} instance, we construct a game $ \Game $ and \GRone formula $ \phi $, such that the optimum budget $ \beta $ corresponds to the value of optimum tour. Let $ \Game $ be such a game where
	\begin{itemize}
		\item $ \Ag = \{ 1 \} $,
		\item $ \St = \{ \tuple{v,e} : v \in V \wedge e \in \IN(v) \} \cup \{ \tuple{\sink,\varepsilon} \} $,
		\item $ s_0 $ can be chosen arbitrarily from $ \St \setminus \{\tuple{\sink,\varepsilon}\} $,
		\item for each state $ \tuple{v,e} \in \St $ and edge $ e' \in E \cup \{\varepsilon\} $
		\begin{itemize}
			\item $ \trnFun(\tuple{v,e},e') = \tuple{\trg(e'),e'}, \text{if}~v \neq \sink~\text{and}~e' \neq \varepsilon, $
			\item $ \trnFun(\tuple{v,e},e') = \tuple{\sink,\varepsilon}, \text{otherwise}; $
		\end{itemize}			
		\item for each state $ \tuple{v,e} \in \St $
		\begin{itemize}
			\item $ \wFun_1(\tuple{v,e}) = \max\{c(e'): e' \in E \} - c(e) $, if $ v \neq \sink $,
			\item $ \wFun_1(\tuple{v,e}) = \max\{c(e'): e' \in E \} $, otherwise;
		\end{itemize}
		\item the payoff of player 1 for a path $ \pi $ in $ \Game $ is $ \pay_1(\pi) = \MP[\wFun_1(\pi)] $,
		\item for each state $ \tuple{v,e} \in \St $, the set of actions available to player 1 is $ \OUT(v) \cup \{\varepsilon\} $,
		\item for each state $ \tuple{v,e} \in \St $, $ \labFun(\tuple{v,e}) = v $.
	\end{itemize}
	Furthermore, let $ \phi := \bigwedge_{v \in V} \always \sometime~v $. The construction is now complete, and is polynomial to the size of $ \tuple{G, c} $.
	
	Now, consider the smallest $ \cost(\kappa), \kappa \in \WI(\Game,\phi,\beta) $. We argue that $ \cost(\kappa) $ is indeed the lowest value such that a tour in $ G $ is attainable. Suppose for contradiction, that there exists $ \kappa' $ such that $ \cost(\kappa') < \cost(\kappa) $. Let $ \pi' $ be a path in $ (\Game,\kappa') $ and $ z_1 = \wFun_1(\tuple{\sink,\varepsilon}) $ the largest value player 1 can get by deviating from $ \pi' $. We have $ \pay_1(\pi') < z_1 $, and since for every $ \tuple{v,e} \in \St $ there exists an edge to $ \tuple{\sink,\varepsilon} $, thus player 1 would deviate to $ \tuple{\sink,\varepsilon} $ and stay there forever. This deviation means that $ \phi $ is not satisfied, which is a contradiction to $ \kappa' \in \WI(\Game,\phi,\beta) $. The construction of $ \phi $ also ensures that the path is a valid tour, i.e., the tour visits every city at least once. Notice that $ \phi $ does not guarantee a Hamiltonian cycle. However, removing the condition of visiting each city \textit{only once} does not remove the hardness, since \textit{Euclidean} \textsc{TSP} is \np-hard \cite{GGJ1976,PAPADIMITRIOU1977237}. Therefore, in the planar case there is an optimal tour that visits each city only once, or otherwise, by the triangle inequality, skipping a repeated visit would not increase the cost. Finally, since $ \WI(\Game,\phi,\beta) $ is efficient, we have $ \beta $ to be exactly the value of the optimum tour in the corresponding \textsc{TSP Cost} instance.
\end{proof}

\begin{corollary}\label{cor:exactwi-gr1}
	\exactwi with \GRone specifications is \DP-complete.
\end{corollary}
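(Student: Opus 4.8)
The plan is to exploit the monotonicity of the weak-implementation threshold together with the classification of \weak from Theorem~\ref{thm:weak-gr1}. Since every admissible reward scheme for budget $\beta$ remains admissible for any $\beta' \geq \beta$, the predicate $\WI(\Game,\phi,\beta) \neq \varnothing$ is monotone in $\beta$; hence there is a well-defined optimum budget $\beta^\ast = \min\{\beta : \WI(\Game,\phi,\beta) \neq \varnothing\}$, and $b = \beta^\ast$ if and only if both
\[
\WI(\Game,\phi,b) \neq \varnothing
\qquad\text{and}\qquad
\big(\, b = 0 \ \text{ or }\ \WI(\Game,\phi,b-1) = \varnothing \,\big).
\]
The first conjunct is a positive instance of \weak, which lies in \np by Theorem~\ref{thm:weak-gr1}; the second conjunct is a negative instance of \weak, i.e.\ a positive instance of its complement, which lies in \conp. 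As \DP is by definition the class of languages obtained as the intersection of an \np language and a \conp language, membership in \DP follows immediately.

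For hardness I would reuse, essentially verbatim, the reduction built in the proof of Theorem~\ref{thm:optwi-gr1}. That construction maps a \textsc{TSP Cost} instance $\tuple{G,c}$ to a single-player game $\Game$ and a \GRone formula $\phi := \bigwedge_{v \in V}\always\sometime~v$ whose optimum budget equals the value of an optimum tour in $G$. The plan is therefore to reduce from \textsc{Exact TSP} --- given $\tuple{G,c,k}$, decide whether the minimum tour cost equals $k$ --- which is $\DP$-complete~\cite{1994-papadimitriou,PAPADIMITRIOU1984244}. Applying the same polynomial-time construction to $\tuple{G,c}$ and outputting the triple $(\Game,\phi,k)$ yields an instance of \exactwi that is positive precisely when the optimum budget of $(\Game,\phi)$ --- equivalently, the optimum tour cost --- equals $k$, giving \DP-hardness at once.

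The step I expect to be the main obstacle is making the correspondence \emph{exact} rather than merely a one-sided threshold. For the membership direction this is routine, but for hardness I must carry over the efficiency assumption used in Theorem~\ref{thm:optwi-gr1} together with its triangle-inequality argument (via \textit{Euclidean} \textsc{TSP}) to guarantee that the optimum budget coincides \emph{on the nose} with the optimum tour cost, and not merely up to the ``visit every city at least once'' relaxation; only then does ``optimum budget $= k$'' faithfully encode ``optimum tour $= k$''. The remaining care is bookkeeping: confirming that \textsc{Exact TSP} is \DP-complete in the form stated, aligning any affine offset between tour value and budget introduced by the weight definition of Theorem~\ref{thm:optwi-gr1} by translating the target $k$ accordingly, and treating the boundary case $b = 0$, where the second conjunct is vacuous and the problem collapses to a single \np query that is still trivially in \DP.
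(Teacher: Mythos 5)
Your proof is correct and follows essentially the same route as the paper: membership by expressing ``$b$ is the optimum'' as the conjunction of a positive \weak instance (in \np) and a negative one (in \conp), and hardness by reusing the \textsc{TSP Cost} construction of Theorem~\ref{thm:optwi-gr1} to reduce from \textsc{Exact TSP}. Your version is slightly more explicit than the paper's (spelling out the monotonicity of $\WI$ in $\beta$, the exact budgets $b$ and $b-1$ queried, and the $b=0$ boundary case), but the argument is the same.
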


\begin{proof}
	For membership, observe that an input is a ``yes'' instance of \exactwi if and only if it is a ``yes'' instance of \weak~{\em and} a ``yes'' instance of \weakc (the problem where one asks whether $ \WI(\Game,\phi,\beta) = \varnothing $). Since the former problem is in \np and the latter problem is in co\np, membership in \DP follows. For the lower bound, we use the same reduction technique as in Theorem~\ref{thm:optwi-gr1}, and reduce from \textsc{Exact TSP}, a problem known to be \DP-hard \cite{1994-papadimitriou,PAPADIMITRIOU1984244}.
\end{proof}

Following \cite{Papadimitriou1984}, we may naturally ask whether the optimal solution given by \optwi is unique. We call this problem \uoptwi. For some fixed budget $ \beta $, it may be the case that for two reward schemes $ \kappa, \kappa' \in \WI(\Game,\phi,\beta) $ -- we assume the implementation is efficient -- we have $ \kappa \neq \kappa'$ and $ \cost(\kappa) = \cost(\kappa') $. With \LTL specifications, it is not difficult to see that we can solve \uoptwi in polynomial space. Therefore, we have the following result.

\begin{corollary}\label{cor:uoptwi-ltl}
	\uoptwi with \LTL specifications is \pspace-complete.
\end{corollary}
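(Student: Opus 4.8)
The plan is to prove the two bounds separately, reusing everything already established for \weak and \optwi. For membership, I would first compute the optimum budget $\beta^\ast$ for which $\WI(\Game,\phi,\beta^\ast) \neq \varnothing$. By Theorem~\ref{thm:optwi-ltl} this is an \fpspace computation, and by Proposition~\ref{lem:opt-bound} the value $\beta^\ast$ has a polynomial-size binary encoding, so it can be produced and stored in polynomial space. The key observation is that optimality pins down the cost of every witness: since $\beta^\ast$ is the \emph{smallest} budget admitting a solution, any $\kappa \in \WI(\Game,\phi,\beta^\ast)$ has $\cost(\kappa) \leq \beta^\ast$ by admissibility and $\cost(\kappa) \geq \beta^\ast$ because $\WI(\Game,\phi,\beta^\ast-1) = \varnothing$ (the case $\beta^\ast = 0$ being trivially unique, as only the all-zero scheme is admissible). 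Hence every optimal witness has cost exactly $\beta^\ast$ and is automatically efficient, so \uoptwi reduces to deciding whether there is \emph{exactly one} $\kappa \in \WI(\Game,\phi,\beta^\ast)$.

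To decide this in \pspace, I would enumerate, one at a time, all reward schemes of cost exactly $\beta^\ast$. Each such scheme is an $m$-tuple ($m = \card{\Ag}\cdot\card{\St}$) of polynomial-bit naturals summing to $\beta^\ast$, hence of polynomial size (Proposition~\ref{prop:kbound} bounds their number only singly-exponentially), so the candidates can be swept lexicographically using a single polynomial-size pointer. For each candidate I would run the test ``$\kappa \in \WI(\Game,\phi,\beta^\ast)$'', which is exactly the body of Algorithm~\ref{alg:weak} with $\kappa$ fixed and is in \pspace by the argument of Theorem~\ref{thm:weak-ltl}, keeping a counter capped at $2$ and reusing the working space across candidates. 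Equivalently, non-uniqueness is the existential statement ``$\exists\, \kappa_1 \neq \kappa_2$, both in $\WI(\Game,\phi,\beta^\ast)$'', with polynomial witnesses and \pspace verification; since $\pspace = \mathrm{NPSPACE}$ by Savitch's theorem and \pspace is closed under complement, \uoptwi lies in \pspace.

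For hardness I would reduce from \weak, which is \pspace-complete by Theorem~\ref{thm:weak-ltl}; as \pspace is closed under complement it suffices to make the \uoptwi answer agree with the \emph{complement} of the \weak answer. Given an instance $(\Game,\phi,\beta)$ I would build a game $\Game'$ offering the (single) player an initial choice between two independent branches: a \emph{reference} branch engineered so that meeting the goal there forces one uniquely determined reward scheme of some fixed cost $B$, and a branch containing a scaled copy of $\Game$ in which satisfying $\phi$ within budget $\beta$ is arranged to cost exactly $B$ as well. The construction would be calibrated so that the optimum budget of $(\Game',\phi')$ equals $B$ in all cases, the reference branch always supplies one optimal scheme, and a \emph{second, distinct} optimal scheme exists precisely when $\WI(\Game,\phi,\beta) \neq \varnothing$. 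Then the optimal scheme is unique if and only if the original instance is a ``no''-instance of \weak.

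The hard part will be this hardness reduction, specifically calibrating the gadgets so that (i) the optimum budget is the same fixed value $B$ regardless of whether the embedded \weak instance is satisfiable, (ii) the reference witness is genuinely the unique optimal scheme on its branch, and (iii) any witness arising from the embedded copy of $\Game$ is distinct from the reference one. Matching the two branch costs exactly while forcing uniqueness of the reference solution is the delicate engineering step; by contrast, the membership direction is essentially bookkeeping layered on top of Theorems~\ref{thm:weak-ltl} and~\ref{thm:optwi-ltl}.
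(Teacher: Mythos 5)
Your membership argument is correct and is essentially the \pspace analogue of the two-step procedure the paper makes explicit for the \GRone case (Corollary~\ref{cor:uoptwi-gr1}): compute the optimum budget $\beta^\ast$, observe that every witness at the optimum must have cost exactly $\beta^\ast$ (else a cheaper budget would already work), and then decide in \pspace whether two distinct witnesses of that cost exist, using closure of \pspace under nondeterminism and complement. The observation that optimality forces $\cost(\kappa)=\beta^\ast$ is exactly the paper's ``we assume the implementation is efficient''. This half is fine, and in fact more detailed than the paper's own one-line justification.

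The hardness half has a genuine gap: you have described the shape of a reduction but not carried it out, and the calibration issues you yourself list are real obstacles, not routine bookkeeping. In particular, requirement (i) -- that the optimum budget of the combined game equal the fixed value $B$ whether or not the embedded \weak instance is a yes-instance -- is problematic, because a yes-instance of $(\Game,\phi,\beta)$ may be witnessed by schemes of any cost between $0$ and $\beta$; nothing in the \weak problem pins the witness cost down, so the embedded branch can drag the global optimum strictly below $B$, at which point the reference branch no longer supplies an \emph{optimal} scheme and the intended equivalence breaks. You would need an additional gadget forcing every useful scheme on the embedded branch to cost exactly $B$, which is precisely the part left undone. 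Note also that a much lighter reduction suffices here: take the \pspace-hard problem of existential \LTL model checking on an arena $A$, build a one-player game whose initial state branches into (a) a copy of $A$ with all weights $1$, (b) a sink $a$ with weight $1$, and (c) two sinks $b,c$ with weight $0$ carrying fresh labels, with goal $\sometime p_b \vee \sometime p_c \vee \phi_A$. If some path of $A$ satisfies $\phi$, the optimum budget is $0$ and the all-zero scheme is trivially the unique witness; otherwise the optimum is $1$ and both ``reward $b$'' and ``reward $c$'' are optimal witnesses, so uniqueness fails. This decides satisfiability of $\phi$ in $A$ and avoids all of the calibration you flagged.
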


For \GRone specifications, we reason about \uoptwi using the following procedure:
\begin{enumerate}
	\item Find the exact budget using binary search and \weak as an oracle;
	\item Use an \np oracle once to guess two distinct reward schemes with precisely this budget; if no such reward schemes exist, return ``yes''; otherwise, return ``no''.
\end{enumerate}

The above decision procedure clearly is in $ \DeltaPTwo $ (for the upper bound). Furthermore, since Theorem \ref{thm:optwi-gr1} implies $ \DeltaPTwo $-hardness \cite{KRENTEL1988490} (for the lower bound), we have the following corollary.

\begin{corollary}\label{cor:uoptwi-gr1}
	\uoptwi with \GRone specifications is $ \DeltaPTwo $-complete.
\end{corollary}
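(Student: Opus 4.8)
The plan is to prove the two bounds separately: membership in $\DeltaPTwo = \mthfun{P}^{\np}$ by formalising the two-phase procedure stated just above, and $\DeltaPTwo$-hardness by lifting the reduction of Theorem~\ref{thm:optwi-gr1} to the uniqueness setting.

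For the upper bound, I would first invoke Proposition~\ref{lem:opt-bound} to confine the optimal budget to the interval $[0, \sum_{i \in \Ag} z_i \cdot (|\St|-1)]$, whose right endpoint has only polynomially many bits in the (binary) encoding of the instance. Since \weak with \GRone specifications is in \np by Theorem~\ref{thm:weak-gr1}, the first phase locates the exact optimal budget $\beta^\ast$ by binary search over this interval, issuing logarithmically many queries in the magnitude of the bound (hence polynomially many in the input) of the form ``$\WI(\Game,\phi,b) \neq \varnothing$?'' to an \np oracle; this runs in deterministic polynomial time relative to \np. The second phase is a single further \np query deciding non-uniqueness: guess two reward schemes $\kappa \neq \kappa'$ with $\cost(\kappa)=\cost(\kappa')=\beta^\ast$ together with the \np certificates of Theorem~\ref{thm:pthfinding} (a Nash-equilibrium path and punishment values) witnessing $\kappa,\kappa' \in \WI(\Game,\phi,\beta^\ast)$; the whole witness is polynomially checkable, so the query is in \np. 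Returning ``no'' exactly when the oracle reports two distinct optimal schemes, the procedure makes polynomially many \np oracle calls and thus places \uoptwi in $\DeltaPTwo$.

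For the lower bound I would reduce, following Krentel~\cite{KRENTEL1988490}, from the uniqueness variant of \textsc{TSP Cost} (decide whether the optimal tour is unique), which is $\DeltaPTwo$-complete. I would reuse verbatim the game-and-formula construction $\tuple{G,c} \mapsto (\Game,\phi)$ from Theorem~\ref{thm:optwi-gr1}, under which the optimal budget of the \optwi instance equals the optimal tour value. The additional content is to argue that the correspondence is \emph{uniqueness-preserving}: each cost-optimal (efficient) reward scheme of $(\Game,\phi)$ induces an optimal tour of $G$, each optimal tour induces an efficient scheme of the same cost, and these maps carry distinct schemes to distinct tours and back. Consequently the constructed \optwi instance has a unique optimal reward scheme if and only if $\tuple{G,c}$ has a unique optimal tour, so the \uoptwi answer transfers directly, giving $\DeltaPTwo$-hardness.

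The step I expect to be the main obstacle is exactly this uniqueness-preservation. Theorem~\ref{thm:optwi-gr1} was only required to match optimal \emph{values}, so one must now rule out ``spurious'' optimal schemes that spend the budget on states irrelevant to forcing the path away from $\tuple{\sink,\varepsilon}$, and must also check that two distinct optimal tours cannot collapse to a single scheme. Leaning on the efficiency hypothesis on $\WI$ (the budget is spent exactly, and only where needed to make every deviation to $\tuple{\sink,\varepsilon}$ non-beneficial) should pin the scheme down uniquely from the induced tour, but establishing this correspondence cleanly is where the argument needs the most care.
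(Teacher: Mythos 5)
Your upper bound is exactly the paper's argument: bound the optimum via Proposition~\ref{lem:opt-bound}, locate $\beta^\ast$ by binary search with the \np oracle of Theorem~\ref{thm:weak-gr1}, then one further \np query guessing two distinct optimal schemes together with their Theorem~\ref{thm:pthfinding} certificates. Nothing to add there.

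The lower bound is where the proposal breaks. You correctly identify uniqueness-preservation of the Theorem~\ref{thm:optwi-gr1} reduction as the crux, but the gap is real and the suggested repair (leaning on the efficiency hypothesis) does not close it. In that construction, whether $\kappa$ witnesses $\WI(\Game,\phi,\beta)\neq\varnothing$ depends essentially only on the \emph{total} reward placed on the states of a tour: the tour's mean payoff is $\max_{e'} c(e') - \frac{1}{k}\sum_j c(e_j) + \frac{1}{k}\sum_j \kappa(\tuple{v_j,e_j})$, so only $\sum_j \kappa(\tuple{v_j,e_j})$ matters for beating the $\sink$ punishment value, not the distribution over the $k$ tour states. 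Hence a \emph{unique} optimal tour of cost $C\geq 1$ through $k\geq 2$ states still yields many distinct cost-$C$ schemes (e.g.\ shifting one reward unit between two tour states, which for generic instances creates no new profitable cycle), so the constructed \uoptwi instance answers ``no'' while the unique-optimal-TSP instance answers ``yes''. Efficiency only pins down $\cost(\kappa)$, not where the budget sits, so the forward direction of your equivalence fails. (The degenerate case $C=0$ breaks the other direction too: the zero scheme is then trivially the unique optimal scheme even if $G$ has several optimal tours.) The paper sidesteps an explicit reduction entirely: it derives hardness directly from the $\FP^{\np}$-completeness of \optwi via Krentel's results on uniqueness of optima~\cite{KRENTEL1988490}, rather than by exhibiting a parsimonious map from tours to schemes. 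If you want an explicit reduction you would need to modify the game so that each optimal tour determines a \emph{canonical} optimal scheme and no others --- for instance by forcing the entire reward onto a designated state per tour --- which is precisely the construction work your sketch leaves open.
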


\subsection{Optimality and Uniqueness in the Strong Domain}
In this subsection, we study the same problems as in the previous subsection but with respect to the \strong variant of the equilibrium design problem. We first formally define the problems of interest and then present the two first results.

\begin{definition}[\optsi]
	Given a game $\Game$ and a specification formula $\phi$:
	\begin{center}
		What is the optimum budget $ \beta$ such that $ \SI(\Game,\phi,\beta) \neq \varnothing $?
	\end{center}
\end{definition}

\begin{definition}[\exactsi]
	Given a game $\Game$, a specification formula $\phi$, and an integer $ b $:
	\begin{center}
		Is $b$ equal to the optimum budget for $ \SI(\Game,\phi,\beta) \neq \varnothing $?
	\end{center}
\end{definition}

For the same reasons discussed in the weak versions of these two problems, we can prove the following two results with respect to games with \LTL specifications. 

\begin{theorem}\label{thm:optsi-ltl}
	\optsi with \LTL specifications is \fpspace-complete.
\end{theorem}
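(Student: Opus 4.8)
The plan is to mirror the proof of Theorem~\ref{thm:optwi-ltl} for the weak case, simply replacing the \weak oracle with a \strong oracle. For membership, I would first invoke Proposition~\ref{lem:opt-bound}, which bounds the optimum budget by the integer $\sum_{i \in \Ag} z_i \cdot (|\St| - 1)$. Although this quantity may be exponential in the size of the input (since weights use a binary encoding), it is representable with polynomially many bits. I would then iterate over candidate budget values $b = 0, 1, 2, \ldots$ up to this bound, and for each $b$ call \strong as an oracle on the instance $(\Game, \phi, b)$, returning the least $b$ for which $\SI(\Game, \phi, b) \neq \varnothing$.

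The key observation making this procedure run in polynomial space is that each call to \strong is a query to a \pspace problem (Corollary~\ref{cor:strong-ltl}), and \pspace is closed under the use of a \pspace oracle. Moreover, the budget counter needs only polynomial space, and each oracle query is of polynomial size in the input. Hence the entire search runs in polynomial space, giving membership in \fpspace. For hardness, I expect the argument to be essentially identical to the weak case: since \strong with \LTL specifications is \pspace-complete (Corollary~\ref{cor:strong-ltl}), computing the optimum budget subsumes deciding any instance of \strong---one simply compares the returned optimum against the given threshold---so the optimisation problem is \fpspace-hard.

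The main obstacle, though a mild one, is ensuring that the exponential upper bound from Proposition~\ref{lem:opt-bound} does not break the space bound. The argument relies crucially on the budget being written in binary, so that iterating over a range of exponential size still consumes only polynomial space (the time cost is irrelevant for a space class). This is exactly the same caveat that underlies the \fpspace membership argument for \optwi, and beyond it the proof carries over verbatim.
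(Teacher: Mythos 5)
Your proposal matches the paper's argument: the paper proves this theorem by declaring it analogous to Theorem~\ref{thm:optwi-ltl}, i.e., a bounded search over budgets (via Proposition~\ref{lem:opt-bound}) using \strong as a \pspace oracle, with hardness inherited from \strong. Your additional remarks on binary encoding and closure of \pspace under \pspace oracles merely make explicit what the paper leaves implicit.
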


\begin{proof}
	The proof is analogous to that of Theorem \ref{thm:optwi-ltl}.
\end{proof}

\begin{corollary}\label{cor:exactsi-ltl}
	\exactsi with \LTL specifications is \pspace-complete.
\end{corollary}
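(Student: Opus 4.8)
The plan is to establish membership and hardness separately, mirroring the treatment of \exactwi in the weak domain (Corollary~\ref{cor:exactwi-ltl}). For membership, I would first observe that an input $(\Game,\phi,b)$ is a ``yes'' instance of \exactsi precisely when two conditions hold simultaneously: (i) budget $b$ already suffices, i.e.\ $\SI(\Game,\phi,b)\neq\varnothing$, and (ii) budget $b-1$ does not, i.e.\ $\SI(\Game,\phi,b-1)=\varnothing$ (when $b=0$ only condition (i) is required). This equivalence rests on the monotonicity of $\SI$ in the budget: enlarging $\beta$ can only add admissible reward schemes, so $\SI(\Game,\phi,\beta)\neq\varnothing$ implies $\SI(\Game,\phi,\beta')\neq\varnothing$ for every $\beta'\geq\beta$. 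Hence $b$ equals the optimum budget exactly when budget $b$ works while budget $b-1$ fails.

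Condition (i) is a direct instance of \strong, which is in \pspace by Corollary~\ref{cor:strong-ltl}. Condition (ii) is the complement of a \strong instance, and since \pspace is closed under complement (Savitch's theorem), it too is decidable in polynomial space. Because we need only run these two polynomial-space procedures one after another and conjoin their outputs, reusing the same workspace, the entire decision remains in \pspace; this yields membership.

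For hardness, I would reduce directly from \strong restricted to budget $\beta=0$, which is already \pspace-hard via the (universal \LTL model checking) construction underlying Corollary~\ref{cor:strong-ltl}. Since the budget ranges over the non-negative integers, $0$ is the optimum budget if and only if $\SI(\Game,\phi,0)\neq\varnothing$. Consequently the map $(\Game,\phi)\mapsto(\Game,\phi,0)$ sends a ``yes'' instance of \strong with $\beta=0$ to a ``yes'' instance of \exactsi and conversely, furnishing a trivial many-one reduction and establishing \pspace-hardness.

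I do not expect a genuine obstacle here: the argument is essentially routine once \strong is known to be \pspace-complete and one invokes closure of \pspace under complement (exactly as in the weak case). The only points requiring care are the monotonicity observation that justifies the two-sided test, the boundary case $b=0$ (where there is no ``$b-1$'' to examine), and the bookkeeping that the two sequential space-bounded computations share polynomial workspace so that the overall space bound stays polynomial.
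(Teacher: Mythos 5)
Your proposal is correct and follows essentially the route the paper intends: the paper leaves this corollary's proof implicit, but its treatment of the analogous \GRone case (Corollary~\ref{cor:exactsi-gr1}) uses exactly your decomposition into a \strong instance and a \strong-complement instance, which for \pspace collapses back into \pspace by closure under complement and intersection, and the hardness transfer from \strong with $\beta=0$ is likewise the intended argument. Your explicit treatment of monotonicity of $\SI$ in the budget and of the boundary case $b=0$ fills in details the paper glosses over, but introduces no new ideas.
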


For \GRone specifications, observe that using the same arguments for the upper-bound of \optwi with \GRone specifications, we obtain the upper-bound for \optsi with \GRone specifications. Then, it follows that \optsi is in $ \FP^{\SigmaPTwo} $. For hardness, we define an $ \FP^{\SigmaPTwo} $-complete problem, namely $ \MWQSAT_2 $. Recall that in $ \QSAT_2 $ we are given a Boolean 3DNF formula $ \psi(\setx,\sety)$ and sets $\setx =\{ x_1,\dots,x_n \}, \sety = \{ y_1,\dots,y_m \} $, with a set of terms $T =\{ t_1,\dots,t_k \} $. Define $ \MWQSAT_2 $ as follows. Given $ \psi(\setx,\sety) $ and a weight function $ \cFun: \setx \to \mathbb{Z}^{\geq}$, $ \MWQSAT_2 $ is the problem of finding an assignment $ \vecx \in \{0,1\}^n $ with the least total weight such that $ \psi(\setx,\sety) $ is true for every $ \vecy \in \{0,1\}^m $. Observe that $ \MWQSAT_2 $ generalises $ \MQSAT_2 $, which is known to be $ \FP^{\SigmaPTwo[\log n]} $-hard \cite{ChocklerH04}, {\em i.e.}, $ \MQSAT_2 $ is an instance of $ \MWQSAT_2 $, where all weights are~1.

\begin{theorem}
	$ \MWQSAT_2 $ is $ \FP^{\SigmaPTwo} $-complete.
\end{theorem}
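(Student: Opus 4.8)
The plan is to prove membership in $\FP^{\SigmaPTwo}$ first, and then $\FP^{\SigmaPTwo}$-hardness, the latter being where the binary (rather than unary) encoding of the weight function $\cFun$ does the real work. For \textbf{membership}, I would exploit that, since weights are given in binary, the optimum total weight $W^\star$ of a feasible assignment lies in $[0, \sum_{i} \cFun(x_i)]$, whose endpoints have polynomially many bits. I would locate $W^\star$ by binary search, using as oracle the decision problem ``is there $\vecx \in \{0,1\}^n$ with $\sum_{i : x_i = 1} \cFun(x_i) \le W$ such that $\psi(\setx,\sety)$ holds for every $\vecy \in \{0,1\}^m$?''. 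The weight bound is checkable in polynomial time, and for fixed $\vecx$ the inner condition $\forall \vecy\, \psi$ is a $\conp$ predicate (it asks whether the $\vecy$-DNF $\psi(\vecx,\cdot)$ is a tautology); hence the whole query has the shape $\exists \vecx\, [\conp]$ and is in $\SigmaPTwo$. Binary search uses polynomially many such queries. Once $W^\star$ is known, I would recover an optimal assignment by prefix self-reduction: fixing $x_1, x_2, \dots$ one at a time, at each step asking the same $\SigmaPTwo$ oracle whether the current partial assignment extends to a full assignment of weight at most $W^\star$ satisfying $\forall \vecy\, \psi$, and committing greedily. This adds $n$ further $\SigmaPTwo$ queries, so the whole procedure makes polynomially many queries of polynomial size, giving $\MWQSAT_2 \in \FP^{\SigmaPTwo}$.

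For \textbf{hardness}, the plan is to reduce from the lexicographically optimal witness version of $\QSAT_2$: given $\exists \setx\, \forall \sety\, \psi(\setx,\sety)$ with $\psi$ in 3DNF, output the lexicographically least $\vecx$ for which $\forall \vecy\, \psi(\vecx,\vecy)$ holds (and a default otherwise). This problem is $\FP^{\SigmaPTwo}$-complete by the level-two analogue of Krentel's theorem that the lexicographically optimal satisfying assignment is $\FP^{\np}$-complete~\cite{KRENTEL1988490}: each of the polynomially many output bits can be made to encode an essentially independent $\SigmaPTwo$ query. The reduction to $\MWQSAT_2$ keeps $\psi$ unchanged -- it is already 3DNF -- and assigns the geometric weights $\cFun(x_i) = 2^{\,n-i}$. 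With these weights, minimising $\sum_{i : x_i = 1} \cFun(x_i)$ over feasible $\vecx$ is exactly minimising $\vecx$ lexicographically (reading $x_1$ as the most significant bit), so the minimum-weight witness returned by $\MWQSAT_2$ is precisely the lexicographically least feasible $\vecx$; a lexicographically \emph{greatest} objective is handled symmetrically by negating the $\setx$-variables.

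It is exactly the exponential-in-$n$ magnitude of these binary-encoded weights that lifts the problem from $\FP^{\SigmaPTwo[\log n]}$ -- the class for which the cardinality version $\MQSAT_2$ is complete~\cite{ChocklerH04} -- up to full $\FP^{\SigmaPTwo}$: with unit weights the optimum ranges only over $\{0,\dots,n\}$ and is pinned down by $O(\log n)$ queries, whereas powers of two let it carry $n$ independent bits of information, matching the polynomially many queries that the membership argument already shows suffice.

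I expect the main obstacle to be the hardness direction, specifically establishing the $\FP^{\SigmaPTwo}$-completeness of the lexicographic witness problem. This is the genuine ``Krentel lift'': one must argue, via a metric (Krentel) reduction from an arbitrary $\FP^{\SigmaPTwo}$ function, that the polynomially many bits of the output decompose into essentially independent second-level queries, so that $\SigmaPTwo[\log n]$-style shortcuts are provably insufficient. A secondary point needing care is to check that the encoding keeps $\psi$ in 3DNF and preserves the $\exists\setx\,\forall\sety$ quantifier structure, so that the produced instance is a legitimate input of $\MWQSAT_2$; by contrast, the membership argument and the weight trick itself are routine once the oracle queries are correctly identified as $\SigmaPTwo$.
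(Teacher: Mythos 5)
Your membership argument coincides with the paper's: both bound the optimum by the (binary-encoded, hence exponentially large) total weight and locate it by binary search with polynomially many $\SigmaPTwo$ oracle calls, the oracle query being the $\exists\vecx\,[\conp]$ predicate you identify; your additional prefix self-reduction to actually output an optimal assignment is a detail the paper leaves implicit but which is needed, since $\MWQSAT_2$ is a function problem. Where you genuinely diverge is on hardness. The paper disposes of it with ``Hardness is immediate~\cite{ChocklerH04}'', which, read literally, only inherits $\FP^{\SigmaPTwo[\log n]}$-hardness from the unit-weight special case $\MQSAT_2$ and therefore does not by itself establish completeness for full $\FP^{\SigmaPTwo}$. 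Your reduction from the lexicographically optimal $\Sigma_2$-witness problem via geometric weights $\cFun(x_i)=2^{n-i}$ is the standard Krentel-style lift that closes exactly this gap, and you correctly flag the one nontrivial ingredient: the level-two analogue of Krentel's theorem, i.e.\ that the lex-optimal $\vecx$ with $\forall\vecy\,\psi(\vecx,\vecy)$ is $\FP^{\SigmaPTwo}$-complete under metric reductions. That analogue is standard (it follows by relativizing Krentel's construction to a $\Sigma_2$ witness predicate), so your argument is sound; it is in fact more complete than the paper's, at the price of resting the hardness on a cited-but-unproved generalization rather than on the paper's (even more elliptical) appeal to~\cite{ChocklerH04}. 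The two technical details you should still pin down are minor: the convention for infeasible instances (both problems must agree on a default output) and the observation that the geometric weights have polynomially many bits, so the produced instance is polynomial-size.
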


\begin{proof}
	Membership follows from the upper-bound of $ \MQSAT_2 $ \cite{ChocklerH04}: since we have an exponentially large input with respect to that of $ \MQSAT_2 $, by using binary search we will need polynomially many calls to the $ \SigmaPTwo $ oracle. Hardness is immediate~\cite{ChocklerH04}.
\end{proof}

Now that we have an $ \FP^{\SigmaPTwo} $-hard problem in hand, we can proceed to determine the complexity class of \optsi with \GRone specifications. 
For the upper bound we one can use arguments analogous to those in Theorem \ref{thm:optwi-gr1}. For the lower bound, one can reduce from $ \MWQSAT_2 $. Formally, we have: 

\begin{theorem}\label{thm:optsi-gr1}
	\optsi with \GRone specifications is $ \FP^{\SigmaPTwo} $-complete.
\end{theorem}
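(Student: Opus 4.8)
The plan is to handle the two bounds separately, reusing the machinery already developed for \strong and for \optwi.

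For the upper bound, I would first observe that the predicate ``$\SI(\Game,\phi,\beta)\neq\varnothing$'' is monotone non-decreasing in $\beta$: since $\K(\Game,\beta)\subseteq\K(\Game,\beta+1)$ and the remaining two conditions defining membership in $\SI$ (existence of a Nash equilibrium in $(\Game,\kappa)$ and satisfaction of $\phi$ on all of them) depend only on $(\Game,\kappa)$ and not on $\beta$, any witnessing reward scheme for budget $\beta$ remains a witness for every larger budget. By Proposition~\ref{lem:opt-bound} the optimum budget lies in the interval $[0,\sum_{i\in\Ag}z_i\cdot(|\St|-1)]$, whose right endpoint is at most singly exponential in the size of the instance, so its logarithm is polynomial. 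Hence I can locate the optimum by binary search over this interval, making only logarithmically many -- that is, polynomially many -- calls to an oracle deciding \strong, which by Theorem~\ref{thm:strong-gr1} is in $\SigmaPTwo$. This places \optsi in $\FP^{\SigmaPTwo}$, exactly mirroring the $\FP^{\np}$ argument for \optwi in Theorem~\ref{thm:optwi-gr1} with the oracle upgraded from \np to $\SigmaPTwo$.

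For the lower bound, I would reduce from $\MWQSAT_2$, promoting the $\QSAT_2$ reduction used to prove $\SigmaPTwo$-hardness of \strong (Theorem~\ref{thm:strong-gr1}) to a weight-sensitive one. Recall that in that construction the principal's reward scheme plays the role of the existential block $\exists\vecx$, the adversarial player~$2$ plays $\forall\vecy$, and $\phi:=\always\sometime\lnot\sink$ forces every Nash equilibrium to follow a term-cycle on which player~$1$ secures payoff $1$; a strong implementation exists precisely when some assignment $\vecx$ makes $\psi(\setx,\sety)$ true for all $\vecy$. The single change I would make is to charge the principal $\cFun(x_i)$ units of budget -- rather than a uniform unit -- for committing $x_i$ to \emph{true}, while leaving the commitment to \emph{false} free, so that the cost of any admissible scheme equals $\sum_{i:\,x_i=1}\cFun(x_i)$, exactly the objective of $\MWQSAT_2$. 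Concretely I would attach to each variable a dedicated \emph{charging gadget} that every term-cycle using $x_i$ as a satisfied positive literal must traverse, and whose states yield the weight making the cycle's mean-payoff reach $1$ only once the principal has allocated reward at least $\cFun(x_i)$ there; the gadget's own weight contribution is arranged to be neutral (e.g.\ visited a fixed number of times per cycle, or offset against a constant) so that the adversarial computation of $\forall\vecy$ is untouched. Taking $\beta$ to range freely and invoking efficiency of the implementation, the optimum budget then coincides with the minimum-weight satisfying assignment.

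The routine direction is showing that a minimum-weight assignment yields a strong implementation of that cost; the \textbf{crux} is the converse exactness, namely proving that no reward scheme of strictly smaller total cost can realise a strong implementation. This is where the charging gadgets must do their work: they have to enforce an \emph{all-or-nothing} payment of $\cFun(x_i)$ per positively-committed variable, ruling out the possibility that the principal partially funds several variables and thereby manufactures a cheaper valid scheme that does not correspond to any single Boolean assignment. I expect the careful design of these gadgets, together with the argument that partial or ``spread-out'' allocations never create a spurious payoff-$1$ cycle for player~$1$ that still defeats every $\vecy$, to be the main obstacle; once this all-or-nothing behaviour is established, the exact correspondence between the optimum budget and the $\MWQSAT_2$ optimum -- and hence $\FP^{\SigmaPTwo}$-hardness -- follows.
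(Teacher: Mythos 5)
Your upper bound is exactly the paper's: Proposition~\ref{lem:opt-bound} bounds the optimum by a singly exponential quantity, and binary search with polynomially (indeed logarithmically) many calls to a \strong oracle, which is in $\SigmaPTwo$ by Theorem~\ref{thm:strong-gr1}, gives membership in $\FP^{\SigmaPTwo}$. Your observation that monotonicity of ``$\SI(\Game,\phi,\beta)\neq\varnothing$'' in $\beta$ (via $\K(\Game,\beta)\subseteq\K(\Game,\beta+1)$) is what licenses the binary search is a point the paper leaves implicit. Your lower bound also follows the paper's route: reduce from $\MWQSAT_2$ (whose $\FP^{\SigmaPTwo}$-hardness the paper establishes as a separate preliminary theorem, which you are implicitly relying on) by making the $\QSAT_2$ construction of Theorem~\ref{thm:strong-gr1} weight-sensitive.

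The gap is that you name the crux --- the all-or-nothing exactness argument --- but defer it to ``careful design of charging gadgets'' without exhibiting them, so the hardness direction is not actually closed. The paper needs no auxiliary gadget states at all: it simply re-initialises the weight of the state carrying a positive occurrence of an $x$-literal in term $t_j$ to $-\cFun(s[0])+1$ (negated occurrences keep weight $1$, i.e.\ cost nothing), and lets the reward scheme add $\kappa(s)$ on top. The all-or-nothing behaviour then falls out of integrality rather than gadget structure: weights and rewards are integers, so an underfunded positive literal contributes at most $0$ to its three-state term cycle, capping that cycle's mean-payoff at $\tfrac{2}{3}$ --- exactly the $\sink$ payoff --- so player~$1$ has no incentive to stay on it, the $\sink$ run remains a Nash equilibrium violating $\always\sometime\lnot\sink$, and the scheme is not in $\SI(\Game,\phi,\beta)$. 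Hence any admissible scheme must pay the full $\cFun(x_i)$ on each positive $x$-literal it relies on, partial or spread-out allocations never manufacture a cheap payoff-$1$ cycle, and (invoking efficiency) the optimum budget equals the $\MWQSAT_2$ optimum. Your plan is salvageable along exactly these lines, but as written the decisive step is asserted as an expected obstacle rather than proved.
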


\begin{proof}
	Membership uses arguments analogous to those in Theorem \ref{thm:optwi-gr1}. For hardness, we reduce $ \MWQSAT_2 $ to \optsi using the same techniques used in Theorem \ref{thm:strong-gr1} with few modifications. Given a \\$ \MWQSAT_2 $ instance $ \tuple{\psi(\setx,\sety),\cFun} $, we construct a game $ \Game $ and \GRone formula $ \phi $, such that the optimum budget $ \beta $ corresponds to the value of optimal solution to $ \tuple{\psi(\setx,\sety),\cFun} $. To this end, we may assume that $ \SI(\Game,\phi,\beta) $ is efficient and construct $ \Game $ with exactly the same rules as in Theorem \ref{thm:strong-gr1} except for the following:
	\begin{itemize}
		\item clearly the value of $ \beta $ is unknown,
		\item the initial weight for each state $ s \in \St $
		\begin{itemize}
			\item $ \wFun_1(s) = \frac{2}{3} $, if $ s[0] = \sink $,
			\item
			\[
			\wFun_1(s) =
			\begin{cases}
			- \cFun(s[0]) + 1, & \text{if } s[0] \in \term_j \setminus \sety \wedge s[0] ~\text{is not negated in}~t_j \\
			1, & \text{if } s[0] \in \term_j \setminus \sety \wedge s[0]~\text{is negated in}~t_j; \\
			\end{cases}
			\]
			\item $ \wFun_1(s) = 0 $, otherwise;
		\end{itemize}
		\item given a reward scheme $ \kappa $, we update the weight for each $ s \in \St $ such that $ s[0] \in \term_j \setminus \sety $, that is, an $ x $-literal that appears in term $ t_j $
		\[
		\wFun_1(s) =
		\begin{cases}
		\wFun_1(s) + \kappa(s), & \text{if } s[0]~\text{is not negated in}~t_j \\
		\wFun_1(s), & \text{otherwise;}
		\end{cases}
		\]
	\end{itemize}
	the construction is complete and polynomial to the size of $ \tuple{\psi(\setx,\sety),\cFun} $.
	
	Let $ o $ be the optimal solution to $ \MWQSAT_2 $ given the input $ \tuple{\psi(\setx,\sety),\cFun} $. We claim that $ \beta $ is exactly $ o $. To see this, consider the smallest $ \cost(\kappa)$, $ \kappa \in \SI(\Game,\phi,\beta) $. We argue that this is indeed the least total weight of an assignment $ \vecx $ such that $ \psi(\setx,\sety) $ is true for every $ \vecy $. Assume towards a contradiction that $ \cost(\kappa) < o $. By the construction of $ \wFun_1(\cdot) $, there exists no $ \pi $ such that $ \pay_1(\pi) > \frac{2}{3} $. Therefore, any run $ \pi' $ that ends up in $ \sink $ is sustained by Nash equilibrium, which is a contradiction to $ \kappa \in \SI(\Game,\phi,\beta) $. Now, since $ \SI(\Game,\phi,\beta) $ is efficient, by definition, there exists no $ \kappa' \in \SI(\Game,\phi,\beta) $ such that $ \cost(\kappa') < \cost(\kappa) $. Thus we have $ \beta $ equals to $ o $ as required.
\end{proof}

\begin{corollary}\label{cor:exactsi-gr1}
	\exactsi with \GRone specifications is $ \DPTwo $-complete.
\end{corollary}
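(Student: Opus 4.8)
The plan is to mirror the proof of Corollary~\ref{cor:exactwi-gr1} one level higher in the polynomial hierarchy, using \strong with \GRone specifications (which is $\SigmaPTwo$-complete by Theorem~\ref{thm:strong-gr1}) in place of \weak. The first thing I would record is a monotonicity observation: the set of budgets $\beta$ for which $\SI(\Game,\phi,\beta) \neq \varnothing$ is upward closed, since any reward scheme admissible under budget $\beta$ remains admissible under every $\beta' \geq \beta$, while the two defining conditions of $\SI$ (existence of a Nash equilibrium and satisfaction of $\phi$ on all equilibria) do not depend on $\beta$. Hence $b$ equals the optimum budget if and only if (i) $\SI(\Game,\phi,b) \neq \varnothing$ and (ii) $\SI(\Game,\phi,b-1) = \varnothing$, taking (ii) to be vacuously true when $b = 0$.

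For membership I would then split the characterisation along these two conditions. Condition (i) is literally a \strong instance, so it defines a language in $\SigmaPTwo$ by Theorem~\ref{thm:strong-gr1}; condition (ii) is the complement of a \strong instance at budget $b-1$, so it defines a language in $\Pi^{\mthfun{P}}_2$. The language of \exactsi is exactly the intersection of these two languages, and by the definition of $\DPTwo$ as the class of intersections of a $\Sigma^{\mthfun{P}}_2$ language with a $\Pi^{\mthfun{P}}_2$ language, membership follows immediately.

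For hardness I would reuse the construction of Theorem~\ref{thm:optsi-gr1} essentially verbatim. That reduction maps a $\MWQSAT_2$ instance $\tuple{\psi(\setx,\sety),\cFun}$ to a game $\Game$ and \GRone formula $\phi$ whose optimum budget $\beta$ equals the optimal weight $o$ of the $\MWQSAT_2$ instance (under the efficiency assumption on $\SI$). Appending the target value $b$, the triple $\tuple{\Game,\phi,b}$ is a yes-instance of \exactsi precisely when $o = b$, so the reduction naturally takes as its source the \emph{exact-value} variant of $\MWQSAT_2$: decide whether the least weight of an $\vecx$ making $\psi(\setx,\sety)$ true for all $\vecy$ equals $b$. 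I would argue that this exact variant is $\DPTwo$-hard as the exact-optimum decision version of an $\FP^{\SigmaPTwo}$-complete function problem, lifting Papadimitriou's Exact-TSP argument to the second level of the hierarchy.

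The hard part will be the hardness direction, and specifically pinning down and justifying a clean $\DPTwo$-hard source problem. Showing exact-$\MWQSAT_2$ is $\DPTwo$-hard requires the same gadget-doubling idea Papadimitriou uses for Exact-TSP: one must fuse a $\SigmaPTwo$-hardness reduction with an independent $\Pi^{\mthfun{P}}_2$-hardness reduction into a single instance whose optimum simultaneously certifies the yes-part and the no-part. I would also have to verify that the budget-to-weight correspondence of Theorem~\ref{thm:optsi-gr1} is tight enough that an off-by-one in the optimum is faithfully transferred, so that the $\Pi^{\mthfun{P}}_2$ half of the $\DPTwo$ instance --- the claim that budget $b-1$ fails --- survives the reduction.
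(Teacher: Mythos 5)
Your membership argument coincides with the paper's: a yes-instance of \exactsi is characterised as the conjunction of a yes-instance of \strong at budget $b$ (a $\SigmaPTwo$ condition, by Theorem~\ref{thm:strong-gr1}) and a yes-instance of \strongc at budget $b-1$ (a $\Pi^{\mthfun{P}}_2$ condition), and $\DPTwo$ is by definition the class of such intersections. The upward-closure observation you use to justify reducing ``$b$ is optimal'' to these two conditions is correct and is implicitly what the paper relies on too. This half is fine.

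The hardness half is where you diverge from the paper and where there is a genuine gap. The paper does not route through $\MWQSAT_2$ at all: it observes that \strong is $\SigmaPTwo$-hard and \strongc is $\Pi^{\mthfun{P}}_2$-hard and then appeals to a general composition lemma \cite[Lemma 3.2]{Aleksandrowicz2017} which turns hardness of the two components directly into $\DPTwo$-hardness of the conjunctive problem. Your plan instead rests on the claim that the exact-value variant of $\MWQSAT_2$ is $\DPTwo$-hard, and --- as you yourself note --- this does not follow from the $\FP^{\SigmaPTwo}$-completeness of $\MWQSAT_2$ established in the paper, any more than $\DP$-hardness of \textsc{Exact TSP} follows from $\FP^{\np}$-completeness of \textsc{TSP Cost}; Papadimitriou and Yannakakis had to build a dedicated instance-fusion construction for that \cite{PAPADIMITRIOU1984244}, and you would have to reproduce it one level up, fusing a $\SigmaPTwo$-hardness reduction with an independent $\Pi^{\mthfun{P}}_2$-hardness reduction so that a single optimum value certifies both halves. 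That construction is the entire content of the hardness proof on your route, and it is left as an announced intention rather than carried out. You would also still need to check that the budget-to-weight correspondence of Theorem~\ref{thm:optsi-gr1} transfers the off-by-one faithfully, which again is asserted but not verified. So as written, the lower bound is an outline, not a proof; either complete the fusion argument for exact-$\MWQSAT_2$, or switch to the paper's shorter path via the generic $\DPTwo$-hardness lemma applied to \strong and \strongc.
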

\begin{proof}
	Membership follows from the fact that an input is a ``yes'' instance of \exactsi (with \GRone specifications) if and only if it is a ``yes'' instance of \strong~{\em and} a ``yes'' instance of \strongc, the decision problem where we ask $ \SI(\Game,\phi,\beta) = \varnothing $ instead. The lower bound follows from the hardness of \strong and \strongc problems, which immediately implies \DPTwo-hardness \cite[Lemma 3.2]{Aleksandrowicz2017}.
\end{proof}

Furthermore, analogous to \uoptwi, we also have the following corollaries. 

\begin{corollary}\label{cor:uoptsi-ltl}
	\uoptsi with \LTL specifications is \pspace-complete.
\end{corollary}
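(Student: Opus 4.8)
The plan is to establish both bounds by mirroring the treatment of \uoptwi in Corollary~\ref{cor:uoptwi-ltl}, replacing the \weak/\optwi oracles with their \strong/\optsi counterparts.

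For membership in \pspace, I would proceed in two phases. First, compute the optimum budget $\beta^\star$ with $\SI(\Game,\phi,\beta^\star)\neq\varnothing$; by Theorem~\ref{thm:optsi-ltl} this is in \fpspace, and by Proposition~\ref{lem:opt-bound} the value $\beta^\star$ has only polynomially many bits, so it can be produced and stored in polynomial space. Second, decide uniqueness of the efficient optimal reward scheme by searching for a witness of \emph{non}-uniqueness, namely a pair $\kappa\neq\kappa'$ with $\cost(\kappa)=\cost(\kappa')=\beta^\star$ and $\kappa,\kappa'\in\SI(\Game,\phi,\beta^\star)$. By Proposition~\ref{prop:kbound} each reward scheme is described by $|\Ag|\times|\St|$ numbers bounded by $\beta^\star$, hence has polynomial size, so all schemes of cost $\beta^\star$ can be enumerated one at a time while reusing space. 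For a \emph{fixed} $\kappa$, testing $\kappa\in\SI(\Game,\phi,\beta^\star)$ is exactly the procedure of Corollary~\ref{cor:strong-ltl} with the scheme supplied rather than guessed: one checks $\NE(\Game,\kappa)\neq\varnothing$ by existential $\LTLlim$ model checking and $\forall\,\strpElm\in\NE(\Game,\kappa)\colon\pi(\strpElm)\models\phi$ by universal $\LTLlim$ model checking, both in \pspace. Since the enumeration adds only polynomial space on top of these calls, the whole procedure runs in polynomial space, returning ``unique'' precisely when no such pair is found.

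For \pspace-hardness I would reduce from a \pspace-hard source, conveniently \strong with \LTL specifications (Corollary~\ref{cor:strong-ltl}) or the plain \LTL model checking problem used in Theorem~\ref{thm:weak-ltl}. The reduction template is to build a single-player instance in which (i)~no scheme of cost $0$ strongly implements $\phi$, so $\beta^\star\geq 1$, and (ii)~there are exactly two candidate unit-cost placements, at distinguished states $A$ and $B$: rewarding $B$ always yields a valid strong implementation (pinning $\beta^\star=1$), whereas rewarding $A$ yields a valid strong implementation \emph{iff} the source instance is positive, with every other unit-cost placement being ineffective. Then the optimal scheme is unique (only $B$) exactly when the source instance is negative, which gives a reduction from the complement of the source problem; as \pspace is closed under complement, \pspace-hardness follows.

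The main obstacle is the gadget design in the hardness step: one must engineer $A$ and $B$ so that, in the single-player mean-payoff game, (a)~without reward every optimal play falsifies $\phi$, (b)~the $B$-reward makes all optimal plays satisfy $\phi$ unconditionally while the $A$-reward does so conditionally on the encoded instance, and (c)~no third unit-cost scheme accidentally strongly implements $\phi$, so that the multiplicity of optimal schemes is exactly controlled. Verifying that reweighting indeed shifts the set of optimal, hence Nash-equilibrium, plays in the intended way---via the punishment-value characterisation of Theorem~\ref{thm:pthfinding}---is the delicate part; once the gadget is correct, the equivalence between uniqueness of the optimal scheme and the complemented answer of the source instance is immediate.
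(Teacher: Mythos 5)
Your membership argument is essentially the paper's: the paper dispatches this corollary with ``analogous to \uoptwi'', and the \uoptwi argument is exactly ``the optimum budget is bounded (Proposition~\ref{lem:opt-bound}) and has polynomially many bits, each candidate scheme has polynomial description size, and each check is a \pspace call to the \strong machinery of Corollary~\ref{cor:strong-ltl}''. Your two-phase enumeration (compute $\beta^\star$, then search for two distinct optimal schemes, reusing space) is a correct and somewhat more explicit rendering of that; the appeal to Proposition~\ref{prop:kbound} for the description size is a slight mis-citation (the bound $\kappa_i(s)\le\beta^\star$ follows directly from admissibility), but harmless.

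On hardness, the paper offers no explicit argument for the \LTL case, so your complement-reduction template is not in conflict with it, and the overall strategy (pin $\beta^\star=1$, make one unit-cost scheme always work and a second work iff the source instance is positive, then use closure of \pspace under complement) is sound in principle. The one concrete danger in the gadget you have not addressed is that mean-payoff is prefix-independent: a unit reward placed on a state that is visited only finitely often contributes nothing to any player's payoff, so it can neither create nor destroy equilibria. Your distinguished states $A$ and $B$ must therefore lie on the \emph{recurrent} part of the intended plays (e.g., on the cycles the player ultimately settles into), and condition~(c) --- that no third unit-cost placement accidentally yields a strong implementation --- must be checked against every recurrent state of the arena, including those inside the encoded model-checking structure. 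Until that gadget is written down and verified via Theorem~\ref{thm:pthfinding}, the lower bound is a plan rather than a proof; a less delicate alternative is to observe that the degenerate instances used for \pspace-hardness in Theorem~\ref{thm:weak-ltl} and Corollary~\ref{cor:strong-ltl} (one player, all weights zero, budget zero) already force the optimum budget to be $0$ with the all-zero scheme as the only admissible candidate, so that deciding uniqueness there still requires deciding the underlying \LTL model-checking question of whether that unique candidate is in fact optimal, i.e., whether $\SI(\Game,\phi,0)\neq\varnothing$ at all.
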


\begin{corollary}\label{cor:uoptsi-gr1}
	\uoptsi with \GRone specifications is $ \DeltaPThree $-complete.
\end{corollary}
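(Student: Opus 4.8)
The plan is to mirror the two-step decision procedure used for \uoptwi, but with the oracle strength raised one level in the polynomial hierarchy: there \weak is \np-complete, whereas here \strong with \GRone specifications is $\SigmaPTwo$-complete by Theorem~\ref{thm:strong-gr1}. First I would compute the optimum budget $\beta^\ast$ for \strong by binary search over the singly-exponential interval guaranteed by Proposition~\ref{lem:opt-bound}, issuing one \strong query per bit. Since the search range is bounded by $2^n$ in the size $n$ of the encoding, this needs only polynomially many calls to the $\SigmaPTwo$ oracle, and therefore lies in $\mthfun{P}^{\SigmaPTwo} = \DeltaPThree$. Having fixed $\beta^\ast$, I would then make a single additional query deciding whether there exist two \emph{distinct} admissible reward schemes $\kappa \neq \kappa'$ with $\cost(\kappa) = \cost(\kappa') = \beta^\ast$ and $\kappa, \kappa' \in \SI(\Game,\phi,\beta^\ast)$; if no such pair exists the optimal solution is unique and the procedure returns ``yes'', otherwise ``no''.

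The key observation for membership is that this last query stays inside $\SigmaPTwo$, so the whole procedure remains in $\DeltaPThree$. Following the membership argument of Theorem~\ref{thm:strong-gr1}, the predicate $\kappa \in \SI(\Game,\phi,\beta^\ast)$ unfolds into an \np-conjunct ($\exists \strpElm \in \NE(\Game,\kappa)$) and a \conp-conjunct ($\forall \strpElm' \in \NE(\Game,\kappa)$ implies $\pi(\strpElm') \models \phi$). Guessing the two reward schemes, which are of polynomial size by Proposition~\ref{prop:kbound}, together with their two Nash-equilibrium witnesses, and then verifying the universal parts with a single \conp sub-check, yields an $\exists\forall$ predicate, placing the existence-of-two-distinct-optima query in $\SigmaPTwo$. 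Combining the binary search with this final call gives membership in $\mthfun{P}^{\SigmaPTwo} = \DeltaPThree$.

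For hardness I would reuse the reduction underlying Theorem~\ref{thm:optsi-gr1}, which shows that producing the optimum budget of \optsi is $\FP^{\SigmaPTwo}$-hard. Exactly as \uoptwi inherits $\DeltaPTwo$-hardness from the $\FP^{\np}$-hardness of \optwi via the Krentel-style transfer \cite{KRENTEL1988490}, the uniqueness variant \uoptsi should inherit $\DeltaPThree$-hardness from the $\FP^{\SigmaPTwo}$-hardness of \optsi: by padding the construction of Theorem~\ref{thm:optsi-gr1}, deciding whether the optimal reward scheme is unique can be made to encode the output of a $\mthfun{P}^{\SigmaPTwo}$ computation. Together with membership this yields $\DeltaPThree$-completeness.

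The step I expect to be the most delicate is the hardness transfer. One must check that the analogue of Krentel's theorem relativises one level up the polynomial hierarchy---that is, that the $\FP^{\SigmaPTwo}$-hardness of the underlying metric optimisation problem genuinely forces the associated uniqueness decision problem to be $\mthfun{P}^{\SigmaPTwo}$-hard, rather than merely $\SigmaPTwo$- or $\DPTwo$-hard---and that the reduction of Theorem~\ref{thm:optsi-gr1} is sufficiently well-behaved (monotone in the budget, with a polynomially bounded optimum) for this transfer to go through. The membership direction, by contrast, is a routine adaptation of the \uoptwi procedure with the oracle upgraded from \np to $\SigmaPTwo$.
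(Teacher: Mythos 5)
Your proposal matches the paper's intended argument exactly: the paper states this corollary as ``analogous to \uoptwi'', i.e., binary search for the optimum budget using \strong (now a $\SigmaPTwo$-complete oracle by Theorem~\ref{thm:strong-gr1}) followed by one further $\SigmaPTwo$ query for two distinct optimal reward schemes, giving membership in $\mthfun{P}^{\SigmaPTwo} = \DeltaPThree$, with hardness transferred from the $\FP^{\SigmaPTwo}$-hardness of \optsi via the Krentel-style argument. Your proposal is correct and, if anything, more explicit than the paper about the one genuinely delicate point (the relativised hardness transfer), which the paper leaves implicit.
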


\section{Equilbrium Design with Social Welfare}\label{sec:sw}

Until this point, we have only considered problems that primarily concern the satisfaction of temporal logic specifications. Indeed, this is one of the key differences between equilibrium design and mechanism design. However, a benevolent principal may not only be concerned with the satisfaction of specifications (and optimality of solutions) but also the well-being of agents and the fairness of outcomes. 
Well-being and fairness can influence the equilibria of a game in many ways. For example, if agents are subject to \textit{inequity aversion}~\cite{fehr1999theory} (i.e., they value fairness and are willing to forego some personal gain), then this can affect the strategies they choose and the resulting equilibria of the game. 
To account for this, we extend implementation problems to include \textit{social welfare} measures.

One well-known measure of social welfare is \textit{utilitarian social welfare}, which provides a measure of overall and average benefit for society. However, this measure may be appropriate for certain circumstances but problematic for others. For instance, it may lead to an unfair reward scheme where the principal allocates all the budget to one player, and none to the others. An alternative measure that takes fairness into consideration is \textit{egalitarian social welfare}. It measures the welfare of a society by the well-being of the worst-off individual (i.e., the maximin criterion). A notable argument in defence of this system is Rawls' \textit{veil of ignorance}~\cite{rawls}. From the players' perspective, this notion of welfare may be more relevant for our setting: prior to the application of a reward scheme, the players do not know which scheme will be chosen and implemented since the principal computes and chooses it ``behind the veil of ignorance''\footnote{In \cite{EM2004} a similar argument for egalitarian system in an \textit{artifical} society is presented using an example of fair access agreement of a common resource (satellite) \cite{LVB1999}.}.

Egalitarian welfare may not only be preferable in a philosophical sense, but also useful in a practical way. To see this, let us revisit our previous Example~\ref{ex:grid-mp} and make some changes to the scenario as follows. For each step robot $ \blacksquare $ takes, it spends 2 units of energy and receives a payment of $-2$. This means that the payoff of robot $ \blacksquare $ in Nash equilibria is $ -\frac{1}{2} $, while robot $ \bigcirc $'s remains $ \frac{1}{2} $. If we consider the robots' payoffs as their energy levels, then the Nash equilibria in such a situation are not attainable: robot $ \blacksquare $ does not have enough energy to realise Nash equilibrium runs. In this setting, the egalitarian social welfare concept is more useful since we want to ensure that every robot has enough resources to carry out its task\footnote{Indeed, the relationship between limited resources and their usage by system components has been extensively studied in the form of games in many research papers, and such games are closely related to games with mean-payoff goals (see e.g., \cite{CdAHS2003,CDHR10,ChatterjeeD12,DDGRT2010,VCDHRR15}).}. However, it is important to note that this analysis comes with some caveats: (a) it generally works only for prefix-independent specifications, and (b) it ignores any finite prefixes, i.e., on some finite prefixes, the payoff may (temporarily) be negative, but is positive as the number of steps approaches infinity. These limitations are typically not a concern for prefix-indepent specifications such as \GRone, which is used in this paper.


Formally, we represent the measure of social welfare as a function $ S: \mathbb{R}^{|\Ag|} \to \mathbb{R} $ mapping a tuple of real numbers into a real number, representing a \textit{payoff aggregation}. More specifically, for a strategy profile $ \vec{\sigma} $, the social welfare measure of $ \vec{\sigma} $ is given by $ S(\pay_1(\vec{\sigma}),\dots,\pay_n(\vec{\sigma})) $. With a slight abuse of notation, we write $ S(\vec{\sigma}) $ for the social welfare measure of $ \vec{\sigma} $. The two aforementioned concepts of social welfare thus can be defined as follows.

\begin{definition}
	For a game $ \Game $ and a strategy profile $ \strpElm $, define
	\begin{enumerate}
		\item the \textit{utilitarian social welfare} of $ \strpElm $ w.r.t. $ \Game $ as $ \usw(\strpElm) = \sum_{i \in \Ag} \pay_i(\strpElm) $;
		
		\item the \textit{egalitarian social welfare} of $ \strpElm $ w.r.t. $ \Game $ as $ \esw(\strpElm) = \min_{i \in \Ag} \{ \pay_i(\strpElm) \} $;
		
		
	\end{enumerate}
\end{definition}

We also denote the minimum utilitarian and egalitarian social welfare for a given set $ \NE(\Game) $ by:
\[ \mi_S(\Game) = \min \{ S(\strpElm) : \strpElm \in \NE(\Game) \} \]
where $ S \in \{ \usw, \esw \} $.

We now define the decision problem of \textit{threshold social welfare}. This problem involves deciding whether there is a reward scheme with a social welfare measure of no less than a given value. Formally, the problem is defined as follows.

\begin{definition}[Threshold social welfare] 
	For a given $ \weak $ (resp. \strong) instance $ (\Game,\phi,\beta) $ , a social welfare function $ S $, and a threshold value $ t $, decide whether there is $ \kappa \in \WI(\Game,\phi,\beta) $ (resp. $ \kappa \in \SI(\Game,\phi,\beta) $) such that $ t \leq \mi_S((\Game,\kappa)) $. We write \utwi and \utsi, for threshold problems with utilitarian social welfare function in \weak and \strong domains, respectively. Similarly, we write \etwi and \etsi for threshold problems with egalitarian social welfare function in \weak and \strong, respectively.
\end{definition}

To solve the threshold problems, we utilise the procedures for \weak and \strong presented in previous sections. We first show how to check the threshold problem where the specification is given in \LTL formulae. For the utilitarian social welfare measure, we have the following.

\begin{theorem}\label{thm:threshold-usw-ltl}
	\utwi and \utsi are \pspace-complete for \LTL specifications.
\end{theorem}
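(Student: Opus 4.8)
The plan is to reuse the algorithms for \weak and \strong (Algorithms~\ref{alg:weak} and~\ref{alg:strong}) and to augment their verification step with a single additional $\LTLlim$ constraint encoding the utilitarian threshold. The observation that makes the utilitarian case manageable is that utilitarian social welfare is a \emph{sum} of mean-payoffs, and on the ultimately periodic paths furnished by Theorem~\ref{thm:pthfinding} this sum is itself a mean-payoff: since on such a path the $\liminf$ defining $\MP$ is a genuine limit, we have $\usw(\pi) = \sum_{i \in \Ag}\pay_i(\pi) = \sum_{i \in \Ag}\MP(\wFun_i(\pi)) = \MP(\wFun_\Sigma(\pi))$, where $\wFun_\Sigma(s) = \sum_{i \in \Ag}\wFun_i(s)$. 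First I would equip the weighted arena with this extra weight dimension $\wFun_\Sigma$, so that the predicate ``$\usw(\pi) \geq t$'' becomes the single $\LTLlim$ atom $\MP(\wFun_\Sigma) \geq t$.

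For the \textbf{upper bound} of \utwi, I would keep the polynomial-size guessing phase of Algorithm~\ref{alg:weak} (a reward scheme $\kappa$, deviation states, and memoryless punishment strategies) and then build $(\Game,\kappa)[z]$. Unfolding the definitions, $\kappa$ witnesses a positive instance exactly when two things hold: (i) \emph{some} Nash equilibrium run satisfies $\phi$, and (ii) \emph{every} Nash equilibrium run has $\usw \geq t$. Condition~(i) is handled as in Theorem~\ref{thm:weak-ltl}, by existential $\LTLlim$ model checking of $\phi \wedge \bigwedge_{i \in \Ag}(\MP(i) \geq z_i)$ on $(\Game,\kappa)[z]$. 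Condition~(ii) is the negation of ``some Nash equilibrium run has $\usw < t$''; by Theorem~\ref{thm:pthfinding} the latter holds iff there is a punishment-value vector $z'$ and a $z'$-secure run $\pi$ of $(\Game,\kappa)[z']$ with $z'_i \leq \pay_i(\pi)$ satisfying $\bigwedge_{i \in \Ag}(\MP(i) \geq z'_i) \wedge (\MP(\wFun_\Sigma) < t)$. Quantifying over the finitely many, polynomially-sized vectors $z'$ and invoking $\LTLlim$ model checking keeps each step within \pspace; since \pspace is closed under complementation and under nondeterministic polynomial-space guessing (Savitch's theorem), the whole nested procedure is in \pspace.

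For the \textbf{upper bound} of \utsi, the two universal requirements collapse into one: $\kappa$ is positive iff $\NE(\Game,\kappa) \neq \varnothing$ and every Nash equilibrium run satisfies the combined $\LTLlim$ target $\phi \wedge (\MP(\wFun_\Sigma) \geq t)$. This is exactly a \strong-style query with an $\LTLlim$ (rather than pure \LTL) goal, and since Algorithm~\ref{alg:strong} and Corollary~\ref{cor:strong-ltl} already rely only on $\LTLlim$ model checking, the same \pspace bound transfers verbatim. For the \textbf{lower bound}, I would reduce from \weak (resp.\ \strong), which are \pspace-hard by Theorem~\ref{thm:weak-ltl} (resp.\ Corollary~\ref{cor:strong-ltl}). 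Because weights are integers in binary, the value $\usw$ of any run is bounded below by a quantity $t_0$ computable in polynomial time from the largest absolute weight and the numbers of agents and states; taking $t = t_0$ makes $t \leq \mi_{\usw}((\Game,\kappa))$ hold vacuously for every $\kappa$, so the threshold instance is positive iff the underlying \weak (resp.\ \strong) instance is. Completeness follows.

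The step I expect to be the main obstacle is the quantifier bookkeeping in \utwi: the $\phi$-requirement is existential over equilibria whereas the welfare requirement is universal, so the two cannot be folded into a single model-checking call and must be discharged separately, with the universal part ranging over all punishment-value vectors. Checking that this separation is faithful to Theorem~\ref{thm:pthfinding} --- in particular, that the equilibrium witnessing $\phi$ and the potential low-welfare equilibria are each correctly captured by their own secure-value vector --- is the delicate point; by contrast, the reduction of $\usw$ to the single summed weight function $\wFun_\Sigma$ is routine once the observation that $\liminf$ is a genuine limit on ultimately periodic paths is in place.
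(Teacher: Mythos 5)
Your core technique coincides with the paper's: the paper also aggregates the players' weights into a single extra mean-payoff dimension (realised there as a dummy player $n+1$ with $\wFun_{n+1}(s)=\sum_{i\in\Ag}\wFun_i(s)$, which is your $\wFun_\Sigma$) and then discharges the threshold as one additional $\LTLlim$ atom $\MP(n+1)\geq t$ inside the existing \weak/\strong machinery, with the same vacuous-threshold reduction for hardness. Where you genuinely diverge is in the quantifier structure of the welfare condition. Reading $t\leq\mi_{\usw}((\Game,\kappa))$ literally, as you do, the threshold is universal over \emph{all} Nash equilibria of $(\Game,\kappa)$, which forces your two-part check for \utwi (an existential $\LTLlim$ query for the $\phi$-witness plus a separate complemented query, ranging over all punishment-value vectors $z'$, to rule out any low-welfare equilibrium); your handling of that universal part is sound and stays in \pspace. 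The paper instead simply conjoins $\MP(n+1)\geq t$ to $\phi_{\WI}$ (resp.\ to $\phi_\exists$), so its check amounts to requiring the threshold only on the \emph{witnessing} equilibrium (and, for \utsi, only that equilibria meeting the threshold satisfy $\phi$) --- a strictly weaker condition than the one you verify, and arguably not what the displayed definition of $\mi_S$ says. So your argument is not wrong; it solves the literal problem at the cost of an extra universal check, while the paper's one-formula version buys simplicity under a more permissive reading of the threshold. Two further small points in your favour: your observation that the sum of mean-payoffs equals the mean-payoff of the summed weights needs ultimately periodic paths (where $\liminf$ is a limit), and you say so explicitly, whereas the paper leaves this implicit; and your lower bound correctly scales $t_0$ by the number of agents, whereas taking $t$ to be merely ``the smallest weight value'' does not in general make the utilitarian threshold vacuous.
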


\begin{proof}
	We begin with the upper bound for \utwi. We apply a slight modification of the \weak problem with LTL specifications. Consider the arena $ A' = \tuple{\Ag \cup \{n+1\},\Ac,\St,s_0,\trnFun',\lambda} $, with $ \trnFun' $ defined as
	\[ \trnFun'(\act_{1},\dots,\act_{n},\act_{n+1}) = \trnFun(\act_{1},\dots,\act_{n}) \]
	for every $ (\act_{1},\dots,\act_{n},\act_{n+1}) \in \Ac^{|\Ag|+1} $, and the game $ \Game' = \tuple{A',(\wFun_{i})_{i \in \Ag},(\wFun_{n+1})} $ with $ \wFun_{n+1}(s) = \sum_{i \in \Ag}(\wFun_{i}(s)) $ for each $ s \in \St $. Player $ n+1 $ is a dummy player who does not affect the play of the game. Observe that for every strategy profile $ \vec{\sigma} $ in $ \Game' $, it holds that
	\[ \pay_{n+1}'(\vec{\sigma}) = \sum_{i \in \Ag} \pay_i'(\vec{\sigma}) = \sum_{i \in \Ag} \pay_i(\vec{\sigma}_{-(n+1)}) = \usw(\vec{\sigma}_{-(n+1)}). \]
	
	We can adapt the same procedure for solving the \weak problem in order to solve \utwi. To this end, we replace the \LTLlim formula $ \phi_{\WI} $ with $$ \phi_{\WI,\usw} \coloneqq \phi_{\WI} \wedge (\MP(n+1) \geq t). $$ 
	Observe that if the formula above is satisfied in $ (\Game',\kappa)[z] $, then exists $ \kappa \in \WI(\Game,\phi,\beta) $ such that $ t \leq \mi_{\usw}((\Game,k)) $
	Thus, the \pspace upper bound follows.
	
	The \pspace upper bound for \utsi case can be obtained in a similar way by adapting the procedure for solving the \strong problem. The construction is similar to the one for \utwi explained above, and it suffices to replace the \LTLlim formula $ \phi_{\exists} $ with $$ \phi_{\exists,\usw} \coloneqq \phi_{\exists} \wedge (\MP(n+1) \geq t), $$ and $ \phi_{\forall} $ with
	$$ \phi_{\forall,\usw} \coloneqq \phi_{\exists,\usw} \rightarrow \phi. $$
	
	The lower bounds for both cases immediately follow from the lower bound of \weak and \strong problems since we can reduce those problems to \utwi and \utsi, respectively, by fixing $ t $ to be the smallest weight value appeared in the arena.
\end{proof}

For egalitarian social welfare measure, we obtain the following result.

\begin{theorem}\label{thm:threshold-rsw-ltl}
	\etwi and \etsi are \pspace-complete for \LTL specifications.
\end{theorem}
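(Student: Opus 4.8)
The plan is to follow the template of Theorem~\ref{thm:threshold-usw-ltl}, adapting it to the egalitarian measure. The key observation is that, because $\esw(\strpElm) = \min_{i \in \Ag}\pay_i(\strpElm)$, the welfare constraint $\esw(\strpElm) \geq t$ is equivalent to the conjunction $\bigwedge_{i \in \Ag}(\pay_i(\strpElm) \geq t)$. Each conjunct is a mean-payoff lower bound of precisely the form already expressible in $\LTLlim$ --- indeed the same kind of atom $\MP(i) \geq z_i$ that already appears in $\phi_{\WI}$ and $\phi_{\exists}$. Consequently, in contrast to the utilitarian case, no auxiliary ``summing'' player is required: the threshold can be encoded directly inside the temporal formulae handed to the $\LTLlim$ model checker, and everything stays within \pspace by the same reasoning, since the restricted game $(\Game,\kappa){[z]}$ is of polynomial size and $\LTLlim$ model checking is \pspace-complete~\cite{BCHK14}.

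For \etwi I would reuse Algorithm~\ref{alg:weak} unchanged except at the path-finding line, where $\phi_{\WI}$ is replaced by
$$\phi_{\WI,\esw} := \phi_{\WI} \wedge \bigwedge_{i \in \Ag}(\MP(i) \geq t).$$
An existential $\LTLlim$ model check of this formula over $(\Game,\kappa){[z]}$ witnesses a Nash equilibrium run of $(\Game,\kappa)$ that both satisfies $\phi$ and gives every player at least $t$. To additionally certify $t \leq \mi_{\esw}((\Game,\kappa))$ in the sense of \emph{all} equilibria, I would run a complementary universal check of $\phi_{\exists} \rightarrow \bigwedge_{i \in \Ag}(\MP(i) \geq t)$ over the same restricted game; both checks are in \pspace, so their conjunction is too.

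For \etsi I would likewise adapt Algorithm~\ref{alg:strong}, leaving the non-emptiness test $\phi_{\exists}$ exactly as in Corollary~\ref{cor:strong-ltl} and folding the welfare bound into the universal formula, that is, replacing $\phi_{\forall}$ by
$$\phi_{\forall,\esw} := \phi_{\exists} \rightarrow \Big(\phi \wedge \bigwedge_{i \in \Ag}(\MP(i) \geq t)\Big).$$
Since strong implementation already requires every Nash equilibrium to satisfy the right-hand formula, this simultaneously enforces $\phi$ and $\esw \geq t$ on all equilibria while keeping a non-empty equilibrium set, matching the definition $t \leq \mi_{\esw}((\Game,\kappa))$. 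Membership in \pspace again follows from existential and universal $\LTLlim$ model checking together with Savitch's theorem.

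For the lower bounds I would reduce from plain \weak and \strong, both \pspace-hard for \LTL by Theorem~\ref{thm:weak-ltl} and Corollary~\ref{cor:strong-ltl}, by choosing $t$ to be the least weight occurring in the arena. Then every mean-payoff value is automatically at least $t$, so the welfare constraint is vacuous and $\mi_{\esw}((\Game,\kappa)) \geq t$ holds for every admissible $\kappa$; the threshold instance is then a ``yes'' instance exactly when the underlying \weak (resp.\ \strong) instance is. I expect the only genuinely delicate point to be the quantifier bookkeeping in the weak case --- the threshold ranges universally over all equilibria whereas $\phi$ need only hold on some equilibrium --- but because egalitarian welfare is a minimum, and hence a conjunction of per-player mean-payoff atoms, both requirements stay directly expressible in $\LTLlim$ and the argument goes through essentially verbatim from Theorem~\ref{thm:threshold-usw-ltl}.
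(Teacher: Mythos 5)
Your proposal is correct and follows essentially the same route as the paper: adapt Algorithms~\ref{alg:weak} and~\ref{alg:strong} by conjoining the per-player atoms $\MP(i) \geq t$ into the $\LTLlim$ formulae (exploiting that $\esw \geq t$ is exactly $\bigwedge_{i \in \Ag}(\MP(i)\geq t)$, so no auxiliary player is needed), and obtain hardness by reducing from \weak and \strong with $t$ set to the least weight in the arena, which makes the welfare constraint vacuous. The one place you diverge is in how the threshold interacts with the quantification over equilibria. The paper simply replaces $\phi_{\WI}$ by $\phi_{\WI}\wedge\bigwedge_i(\MP(i)\geq t)$ for \etwi, and for \etsi strengthens $\phi_{\exists}$ to $\phi_{\exists,\esw} := \phi_{\exists}\wedge\bigwedge_i(\MP(i)\geq t)$ and uses $\phi_{\forall,\esw} := \phi_{\exists,\esw}\to\phi$; that is, the threshold is folded into the \emph{existential}/antecedent side, so it acts as a filter on which equilibria are considered. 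You instead add a complementary universal check $\phi_{\exists}\to\bigwedge_i(\MP(i)\geq t)$ in the weak case and place the threshold in the \emph{consequent} for the strong case, which enforces $\pay_i\geq t$ on \emph{all} Nash equilibria. Your reading is the more literal rendering of the stated condition $t\leq\mi_{\esw}((\Game,\kappa))$ with $\mi_S$ defined as a minimum over $\NE(\Game,\kappa)$; the paper's encoding answers a slightly weaker (existential/filtered) variant. Either way the extra check is a single additional universal $\LTLlim$ model-checking call, so the \pspace bound and the hardness argument are unaffected.
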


\begin{proof}
	The upper bounds for \etwi and \etsi follow from the adaptation of Algorithms~\ref{alg:weak} and~\ref{alg:strong} used to solve \weak and \strong, respectively. For \etwi, we begin with the fact that for a given \weak instance $ (\Game,\phi,\beta) $ and a threshold $ t $, we can solve \etwi as follows: 	check if there is a $ \kappa $ such that
	
	\begin{enumerate}
		\item $ \kappa \in  \WI(\Game,\phi,\beta) $, and
		\item $ \mi_{\esw}((\Game,\kappa)) \geq t $.
	\end{enumerate}
	
	Notice that (1) is exactly \weak, and thus can be solved by Algorithm~\ref{alg:weak}. For (2), observe that the constraint $ \mi_{\esw}((\Game,\kappa)) \geq t $ can be ``embedded'' already in line~\ref{proc:find-path} of the algorithm by requiring that $ \pay_i(\pi) \geq t, \forall i \in \Ag $. This can be done by replacing the formula $ \phi_{\WI} $ corresponding to line~\ref{proc:find-path} with the following formula
	$$ \phi_{\WI,\esw} \coloneqq \phi_{\WI} \wedge \bigwedge_{i \in \Ag}(\MP(i) \geq t). $$ 
	
	We can also solve \etsi in a similar way to the above. We modify line~\ref{proc:find-path-strong} in Algorithm~\ref{alg:strong} by adding an extra requirement that the path $ \pi $ satisfies $ \pay_i(\pi) \geq t, \forall i \in \Ag $. To this end, we replace $ \phi_{\exists} $ with 
	$$ \phi_{\exists,\esw} \coloneqq \phi_{\exists} \wedge \bigwedge_{i \in \Ag} (\MP(i) \geq t), $$
	and $ \phi_{\forall} $ with
	$$ \phi_{\forall,\esw} \coloneqq \phi_{\exists,\esw} \rightarrow \phi. $$
	Lower bounds follow directly from the hardness of \weak and \strong by the same argument as in Theorem~\ref{thm:threshold-usw-ltl}.
\end{proof}


%

We now address the threshold problems where the specifications are given in the \GRone fragment. For implementations with \LTL specifications, the bottleneck comes from the \LTL model checking problem. As a result, adding an extra constraint for social welfare threshold would not affect the overall complexity. Surprisingly, this is also true for \GRone specifications: adding the constraint does not incur significant computational cost. 

\begin{theorem}\label{thm:threshold-usw-gr1}
	\utwi and \utsi are \np-complete and $ \SigmaPTwo $-complete, respectively, for \GRone specifications.
\end{theorem}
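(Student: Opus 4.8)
The plan is to reuse, essentially verbatim, the two \GRone decision procedures behind Theorem~\ref{thm:weak-gr1} and Theorem~\ref{thm:strong-gr1}, and to fold the utilitarian threshold into them by the same device used in the \LTL case (Theorem~\ref{thm:threshold-usw-ltl}): adjoin a payoff-irrelevant player $n+1$ with $\wFun_{n+1}(s)=\sum_{i\in\Ag}\wFun_i(s)$. Since this player's action never changes $\trnFun$, it can never deviate profitably, so its presence does not alter the set of Nash-equilibrium runs, while along any run $\pi$ its payoff equals $\usw(\pi)$. Requiring $\pay_{n+1}(\pi)\geq t$ is therefore exactly the welfare requirement, and on an ultimately periodic run encoded by the cycle variables $x_e$ of the linear programs it becomes the single linear inequality $\sum_{e\in E}(\wFun_{n+1}(\trg(e))-t)\,x_e\geq 0$ (after clearing the denominator of $t$ this has integer coefficients).

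For \utwi I would run the Algorithm~\ref{alg:weak} procedure for \GRone exactly as in Theorem~\ref{thm:weak-gr1} (guess $\kappa$, the punishment states and memoryless punishment strategies, build $(\Game,\kappa)[z]$) and simply add the welfare inequality above to $\LP(\psi_l)$ and to $\LP(\theta_1,\dots,\theta_n)$. The enlarged programs remain polynomial in $(\Game,\kappa)$ and $\phi$, so a feasible solution is an \np witness for a reward scheme together with a Nash-equilibrium run that satisfies $\phi$ and carries $\usw\geq t$. Hardness is inherited from \weak with \GRone: given a \weak instance, fix $t$ to the least attainable utilitarian value, e.g.\ $\sum_{i}\min_{s}\wFun_i(s)$, making the threshold vacuous; the reduction is then immediate, so \utwi is \np-complete.

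For \utsi I would keep the $\SigmaPTwo$ shape of Theorem~\ref{thm:strong-gr1}, namely $\exists\kappa\,[\,\text{(a) }\NE(\Game,\kappa)\neq\varnothing \;\wedge\; \text{(b) every NE run satisfies }\phi\text{ and has }\usw\geq t\,]$. Part (a) is unchanged and is folded into the outer \np guess. For part (b) I would use the identity that ``no NE run satisfies $\neg\phi\vee\usw<t$'' is equivalent to ``no NE run satisfies $\neg\phi$'' \emph{and} ``no NE run has $\usw<t$''. The first conjunct is the existing $\conp$ test of Theorem~\ref{thm:strong-gr1} built from $\LP(\psi_l)$ and $\LP'(\theta_1,\dots,\theta_n)$; the second is the complement of an \np feasibility test, namely the \GRone equilibrium program (the non-negativity constraints w.r.t.\ $z$ together with flow conservation) plus the strict inequality $\sum_{e}(\wFun_{n+1}(\trg(e))-t)\,x_e<0$, which over integral solutions (a positive multiple of a solution is again a solution) is replaced by $\leq-1$. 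Both conjuncts are $\conp$, so (b) is $\conp$ and the whole problem sits in $\SigmaPTwo$; hardness follows by reducing \strong with \GRone (Theorem~\ref{thm:strong-gr1}) and again fixing $t$ to a vacuously satisfiable value, giving $\SigmaPTwo$-completeness.

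The LP bookkeeping and the two hardness reductions are routine replays of the \LTL argument. I expect the only delicate point to be the strong case: I must justify that encoding the welfare bound as a \emph{separate} bad-run test (rather than conjoining $\usw<t$ into the $\neg\phi$ program) is sound, which rests on the distributive identity above, and that the strict welfare inequality is handled correctly within integral LP feasibility so that checking for a violating equilibrium stays in \np and its complement yields a genuine $\conp$ certificate.
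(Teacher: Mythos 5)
Your proposal is correct and takes essentially the same route as the paper: adjoin a dummy player $n+1$ with $\wFun_{n+1}(s)=\sum_{i\in\Ag}\wFun_i(s)$, encode the threshold in the \GRone linear programs via the shifted weight $\wFun_{n+1}'(v)=\wFun_{n+1}(s)-t$ (your welfare inequality is exactly the paper's constraint Eq3 applied to the extra coordinate), and obtain hardness by choosing $t$ vacuously low. The only divergence is expository: for \utsi the paper merely says to ``adapt'' the \strong procedure, whereas you make explicit the decomposition of the universal welfare requirement into a separate co\np test for a welfare-violating equilibrium (with the strict inequality handled over integral solutions), which is a more careful spelling-out of the same argument rather than a genuinely different one.
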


\begin{proof}
	Again, we use a similar construction as in the proof of Theorem~\ref{thm:threshold-usw-ltl} to build $ A' $ and $ \Game' $, and adapt the procedures for solving the \weak and \strong with \GRone specifications. In both domains, we construct the corresponding multi-weighted graph $ W = (V,E,(\wFun_a')_{a \in \Ag}) $ where $ \wFun_{n+1}'(v) = \wFun_{n+1}(s) - t $. The query to this procedure corresponds exactly to the threshold social welfare problem, giving us the upper bounds. Lower bounds can be obtained by setting $ t = \min\{\wFun_{n+1}(s) : s \in \St\} $.
\end{proof}

\begin{theorem}\label{thm:threshold-rsw-gr1}
	\etwi and \etsi are \np-complete and $ \SigmaPTwo $-complete, respectively, for \GRone specifications.
\end{theorem}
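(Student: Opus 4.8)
The plan is to follow the recipe of Theorem~\ref{thm:threshold-usw-gr1}, reusing the \np procedure behind Theorem~\ref{thm:weak-gr1} for \etwi and the $\SigmaPTwo$ procedure behind Theorem~\ref{thm:strong-gr1} for \etsi, and folding the egalitarian threshold directly into the linear programs over the multi-weighted graph of $(\Game,\kappa){[z]}$. The crucial simplification relative to the utilitarian case is that egalitarian welfare at least $t$ is, by definition, nothing more than the conjunction $\bigwedge_{i\in\Ag}\pay_i(\pi)\ge t$ of the individual payoff constraints; hence, unlike in Theorem~\ref{thm:threshold-usw-gr1}, no auxiliary ``sum'' player $n+1$ is required. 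Instead, I would strengthen the per-player weight shift already present in the LPs.

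For the \etwi upper bound, I would run Algorithm~\ref{alg:weak} as in Theorem~\ref{thm:weak-gr1}, but when constructing the multi-weighted graph I would replace the shift $\wFun_i'(v)=\wFun_i(s)-z_i$ by $\wFun_i'(v)=\wFun_i(s)-\max\{z_i,t\}$ for every player $i$. Since every admissible cycle satisfies $\sum_{e\in E}x_e\ge 1>0$ (Eq2), the non-negativity constraint Eq3, namely $\sum_{e\in E}\wFun_i'(\trg(e))\,x_e\ge 0$, then holds if and only if $\pay_i(\pi)\ge\max\{z_i,t\}$, which is equivalent to $\pay_i(\pi)\ge z_i$ \emph{and} $\pay_i(\pi)\ge t$ holding simultaneously. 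Consequently the same $\LP(\psi_l)$ and $\LP(\theta_1,\dots,\theta_n)$ that witness a $z$-secure Nash-equilibrium run satisfying the \GRone goal now additionally witness that every player earns at least $t$ on that run. The modification is linear in the size of $(\Game,\kappa)$, so the overall guess-and-check stays in \np.

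For \etsi, I would apply the identical shift inside the $\SigmaPTwo$ procedure of Theorem~\ref{thm:strong-gr1}: part~(a), the existence of an equilibrium, is tested with $\LP(\theta_1,\dots,\theta_n)$ under the weights $\wFun_i'(v)=\wFun_i(s)-\max\{z_i,t\}$, yielding a non-empty equilibrium set in which each player earns at least $t$; part~(b) is tested with $\LP(\psi_l)$ and the modified program $\LP'(\theta_1,\dots,\theta_n)$ under the same shifted weights, certifying in \conp that no such $z$-secure equilibrium run of welfare at least $t$ satisfies $\lnot\phi$. Guessing $\kappa$ together with an equilibrium witness is in \np by Theorem~\ref{thm:pthfinding}, and the universal check is in \conp, so membership in $\SigmaPTwo$ is preserved.

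For the lower bounds I would reduce from \weak and \strong with \GRone specifications, which are \np-hard (Theorem~\ref{thm:weak-gr1}) and $\SigmaPTwo$-hard (Theorem~\ref{thm:strong-gr1}). Given an instance $(\Game,\phi,\beta)$, I would set $t=\min\{\wFun_i(s):i\in\Ag,\ s\in\St\}$; because mean-payoff is bounded below by the least weight of the arena and rewards only increase weights, $\pay_i(\pi)\ge t$ holds on every path of every $(\Game,\kappa)$, so the egalitarian threshold becomes vacuous and the threshold instance is positive exactly when the original one is. The step requiring the most care is verifying that replacing the shift $z_i$ by $\max\{z_i,t\}$ preserves the exact correspondence of Theorems~\ref{thm:weak-gr1} and~\ref{thm:strong-gr1} between (scalable, hence integral) LP solutions and $z$-secure ultimately periodic Nash-equilibrium runs: one must confirm that conditions $2(a)$ and $2(b)$ of Theorem~\ref{thm:pthfinding} still hold under the stronger shift, so that the added threshold never spuriously rules out a genuine equilibrium or admits a non-equilibrium cycle.
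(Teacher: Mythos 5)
Your proposal is correct and follows essentially the same route as the paper: the same per-player weight shift $\wFun_i'(v)=\wFun_i(s)-\max\{z_i,t\}$ folded into the linear programs of Theorems~\ref{thm:weak-gr1} and~\ref{thm:strong-gr1}, and the same lower-bound reduction fixing $t=\min\{\wFun_i(s):i\in\Ag,\ s\in\St\}$ to make the threshold vacuous. Your added observation that no auxiliary player is needed (unlike the utilitarian case) and your care about preserving the LP-to-equilibrium correspondence are both consistent with, and slightly more explicit than, the paper's argument.
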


\begin{proof}
	To solve \etwi and \etsi, we directly adapt from the procedures for solving the \weak and \strong with \GRone specifications (Theorems~\ref{thm:weak-gr1} and \ref{thm:strong-gr1}). For \etwi, from the game $\Game{[z]}$, we build the underlying graph $\tuple{V, E, (\wFun_{i}')_{i \in \Ag}}$ where $ \wFun_{i}'(v) = \wFun_{i}(s) - (\max\{z_i,t\}) $. Then we define the linear programs $\LP(\psi_l)$ and $\LP(\theta_1,\dots,\theta_n)$ in the same way. Observe that, one of the two linear programs has a solution if and only if there is a path $\pi$ satisfying $\varphi$ such that for every player $i$, $z_i \leq \pay_i(\pi) $ and $t \leq \pay_i(\pi) $. For \etsi, we also employ a similar construction. To obtain the lower bounds, we reduce from \weak and \strong with \GRone specifications. The reduction simply follows from the fact that by fixing $t = \min \{ \wFun_{i}(s) : i \in \Ag, s \in \St \}$, we can encode \weak and \strong with \GRone specifications into their corresponding social threshold problems.
\end{proof}

\section{Conclusions \& Related and Future Work}\label{sec:conc}

\paragraph*{Equilibrium design vs.\ mechanism design -- connections with economic theory}
Although equilibrium design is closely related to mechanism design, as typically studied in game theory~\cite{HR06}, the two are not exactly the same. Two key features in mechanism design are the following. Firstly, in a mechanism design problem, the designer is not given a game structure, but instead is asked to provide one; in that sense, a mechanism design problem is closer to a rational synthesis problem~\cite{FismanKL10,GutierrezHW15}. Secondly, in a mechanism design problem, the designer is only interested in the game's outcome, which is given by the payoffs of the players in the game; however, in equilibrium design, while the designer is interested in the payoffs of the players as these may need to be perturbed by its budget, the designer is also interested -- and in fact primarily interested -- in the satisfaction of a temporal logic goal specification, which the players in the game do not take into consideration when choosing their individual rational choices; in that sense, equilibrium design is closer to rational verification~\cite{GutierrezHW17-aij} than to mechanism design. Thus, equilibrium design is a new computational problem that sits somewhere in the middle between mechanism design and rational verification/synthesis. Technically, in equilibrium design we go beyond rational synthesis and verification through the additional design of reward schemes for incentivising behaviours in a concurrent and multi-agent system, but we do not require such reward schemes to be incentive compatible mechanisms, as in mechanism design theory, since the principal may want to reward only a group of players in the game so that its temporal logic goal is satisfied, while rewarding other players in the game in an unfair way -- thus, leading to a game with a suboptimal social welfare measure. To remedy this issue, we added social welfare constraints in the design of reward schemes, and showed that such additions do not incur extra cost from computational complexity perspective.

\paragraph*{Equilibrium design vs.\ rational verification -- connections with computer science}
Typically, in rational synthesis and verification~\cite{FismanKL10,GutierrezHW15,GutierrezHW17-aij,KupfermanPV16} we want to check whether a property is satisfied on some/every Nash equilibrium computation run of a reactive, concurrent, and multi-agent system. These verification problems are primarily concerned with qualitative properties of a system, while assuming rationality of system components. However, little attention is paid to quantitative properties of the system. This drawback has been recently identified and some work has been done to cope with questions where both qualitative and quantitative concerns are considered~\cite{AKP18,BBFR13,ChatterjeeD12,CDHR10,ChatterjeeHJ05,GMPRW17,GNPW19,VCDHRR15}. Equilibrium design is new and different approach where this is also the case. More specifically, as in a mechanism design problem, through the introduction of an external principal -- the designer in the equilibrium design problem -- we can account for overall qualitative properties of a system (the principal's goal given by an \LTL or a \GRone specification) as well as for quantitative concerns (optimality of solutions constrained by the budget to allocate additional rewards/resources). Our framework also mixes qualitative and quantitative features in a different way: while system components are only interested in maximising a quantitative payoff, the designer is primarily concerned about the satisfaction of a qualitative (logic) property of the system, and only secondarily about doing it in a quantitatively optimal way. 

\paragraph*{Equilibrium design vs.\ repair games and normative systems -- connections with AI}
In recent years, there has been an interest in the analysis of rational outcomes of multi-agent systems modelled as multi-player games. This has been done both with modelling and with verification purposes. In those multi-agent settings, where AI agents can be represented as players in a multi-player game, a focus of interest is on the analysis of (Nash) equilibria in such games~\cite{BouyerBMU15,GutierrezHW17-aij}. However, it is often the case that the existence of Nash equilibria in a multi-player game with temporal logic goals may not be guaranteed~\cite{GutierrezHW15,GutierrezHW17-aij}. For this reason, there has been already some work on the introduction of desirable Nash equilibria in multi-player games~\cite{AlmagorAK15,Perelli19}. This problem has been studied as a repair problem~\cite{AlmagorAK15} in which either the preferences of the players (given by winning conditions) or the actions available in the game are modified; the latter one also being achieved with the use of normative systems~\cite{Perelli19}. In equilibrium design, we do not directly modify the preferences of agents in the system, since we do not alter their goals or choices in the game, but we indirectly influence their rational behaviour by incentivising players to visit, or to avoid, certain states of the overall system. We studied how to do this in an (individually) optimal way with respect to the preferences of the principal in the equilibrium design problem. However, this may not always be possible, for instance, because the principal's temporal logic specification goal is just not achievable, or because of constraints given by its limited budget. 


\paragraph*{Future work}
This paper considers games with deterministic transitions, perfect information, and non-cooperative setting. There are different interesting directions to extend the work done here. For instance, considering games in which players may be able to cooperate with each other and form coalitions \cite{GKW19,SGW21}. It would also be interesting to investigate classes of imperfect information games in which we may still obtain decidability for our problems \cite{GutierrezPW18,DDGRT2010,BerthonMM17,BelardinelliLMR17,AminofMM14}. Another avenue for future research would be to look into games with probabilistic aspects \cite{KNPS2019,GHLNW21}. 

On the other hand, the reward model proposed in this paper has a \textit{memoryless} structure, meaning that the rewards depend only on the current states and not on the history of the play. This restricts the kinds of equilibria that we can design with this model. For instance, we cannot always design equilibria that implement safety specifications, since they are prefix dependent properties. 
One way to address this limitation is to use a model that incorporates the history of play. For example, finite state machines can be used to create a finite-memory reward scheme modelled by a transducer. This model is similar to a player’s strategy, but the output is a vector of integer rewards. However, since the space of possible schemes is unbounded, a different approach is needed for equilibrium design with this model. Additionally, this reward model is related to the concept of \textit{reward machines} in the reinforcement learning framework \cite{icarte2022reward}. Exploring this direction and incorporating probabilistic aspects of games can establish a connection between equilibrium design and multi-agent reinforcement learning. This is an interesting area of research.

Finally, given that the complexity of equilibrium design is much better than that of rational synthesis/verification, we should be able to have efficient implementations, for instance, as an extension of EVE~\cite{GutierrezNPW18}. 

\section*{Acknowledgements}
Perelli was supported by the PNRR MUR project PE0000013-FAIR
and the PRIN 2020 projects PINPOINT. He was also supported by
Sapienza University of Rome under the “Progetti Grandi di Ateneo”
programme, grant RG123188B3F7414A (ASGARD - Autonomous
and Self-Governing Agent-Based Rule Design).
Wooldridge acknowledges the support of UKRI under a Turing AI World Leading Researcher Fellowship (grant EP/W002949/1).

\bibliographystyle{alphaurl}
\bibliography{bibfinal}
\end{document}